\documentclass[letterpaper,12pt]{article}
\usepackage{amsfonts}
\usepackage{amsmath, amsthm, amssymb, astron, bbm, epsfig, lscape}
\usepackage{xcolor}

\numberwithin{equation}{section}

\newtheorem{assumption}{Assumption}

\newtheorem{theorem}{Theorem}[section]

\newtheorem{lemma}[theorem]{Lemma}

\newcommand{\argmin}{\operatorname*{argmin}}
\newcommand{\plim}{\operatorname*{plim}}

\def\ft#1#2{{\textstyle {\frac{#1}{#2}} }}

\definecolor{dgreen}{rgb}{0,0.6,0}

\newcommand{\smallW}{W}
\newcommand{\bigW}{{\cal W}}
\newcommand{\bigWest}{\widehat {\cal W}}

\begin{document}

\title{\bf Estimation of Random Coefficients Logit Demand Models with Interactive Fixed Effects\thanks{This is the last working paper version of the paper published in \textit{Journal of Econometrics} \textbf{206}(2), 613--644, 2018; doi:10.1016/j.jeconom.2018.06.016.   We thank participants in presentations at Georgetown, Johns Hopkins, Ohio State, Penn State, Rice, Texas A\&M, UC Davis, UC Irvine, UCLA, Chicago Booth, Michigan, UPenn, Wisconsin, Southampton, the 2009 California Econometrics Conference and the 2010 Econometric Society World Congress for helpful comments.
    Chris Hansen, Han Hong, Sung Jae Jun, Jinyong Hahn, and Rosa Matzkin provided very helpful discussions.
    Moon acknowledges the NSF for financial support via SES 0920903.
    Weidner  acknowledges support from the Economic and Social Research Council through the ESRC Centre for Microdata Methods and Practice grant RES-589-28-0001, and from the European Research Council grant ERC-2014-CoG-646917-ROMIA.}
    }

\author{\setcounter{footnote}{2}
Hyungsik Roger Moon\footnote{
Department of Economics, University of Southern California,
KAP 300, Los Angeles, CA 90089-0253.
Email: {\tt moonr@usc.edu}}
\footnote{
Department of Economics, Yonsei University,
Seoul, Korea. }
\and Matthew Shum\footnote{
Division of Humanities and Social Sciences,
California Institute of Technology,
MC 228-77, Pasadena, CA 91125.
Email: {\tt mshum@caltech.edu}.
}
\and Martin Weidner\footnote{
 Department of Economics,
 University College London,
 Gower Street,
 London WC1E 6BT, U.K.,
 and CeMMAP.
 Email: {\tt m.weidner@ucl.ac.uk}.
}}

\date{\vspace{-0.2cm} February 2017}

\maketitle

\vspace{-0.9cm}
\abstract{
\begin{center}
\vspace{3mm}
\begin{minipage}{0.75\textwidth}
\footnotesize
We extend the Berry, Levinsohn and Pakes (BLP, 1995)
random coefficients discrete-choice demand model, which
  underlies much recent empirical work in IO. We add
 interactive fixed effects in the form of a factor structure on the
 unobserved product characteristics.   The interactive fixed effects can be
 arbitrarily correlated with the observed product characteristics
 (including price), which accommodates endogeneity and, at the same time,
captures strong persistence in market shares across products and
markets.  We propose a two-step least squares-minimum distance (LS-MD)
procedure to calculate the estimator.
Our estimator is easy to compute, and Monte Carlo
simulations show that it performs well. We consider an empirical illustration
to US automobile demand.
\end{minipage}
\vspace{3mm}
\end{center}
}
\vspace{1mm}

\noindent{\bfseries Keywords:} discrete-choice demand model, interactive fixed
effects, factor analysis, panel data, random utility model.\\
{\bfseries JEL codes:} C23, C25.

\section{Introduction}

The Berry, Levinsohn and Pakes \cite*{BerryLevinsohnPakes1995}
(hereafter BLP)  demand model, based on the random coefficients
logit multinomial choice model, has become the workhorse of demand
modeling in empirical industrial organization and antitrust analysis.
An important virtue of this model is that it parsimoniously and
flexibly captures substitution possibilities between the products in a
market.   At the same time, the nested simulated GMM procedure
proposed by BLP accommodates possible endogeneity of the observed
product-specific regressors, notably price.   This model and
estimation approach has proven very popular (e.g. Nevo \cite*{nevo1}, Petrin \cite*{petrin};
surveyed in
Ackerberg et. al. \cite*{ack_pakes}).

Taking a cue from recent
developments in panel data econometrics (e.g. Bai and Ng
\cite*{BaiNg2006}, Bai \cite*{Bai2009}, and Moon and Weidner
\cite*{MoonWeidner2015,MoonWeidner2015b}), we extend the standard BLP demand model by adding
interactive fixed effects to the unobserved product characteristic, which
is the main ``structural error'' in the BLP model. This interactive fixed effect
specification combines market (or time) specific fixed effects with
product specific fixed effects in a multiplicative form, which is often
referred to as a factor structure.

Our factor-based approach extends the baseline BLP model in
  two ways.   First, we offer an alternative to the usual moment-based GMM
  approach.   The interactive fixed effects ``soak up'' some important
  channels of endogeneity, which may obviate the need for instrumental
  variables of endogenous regressors such as price.
   This is important as such instruments
   may not be easy to identify in practice.
Moreover, our analysis of the BLP model with interactive fixed effects illustrates that the problem of finding instruments for price (which arises in any typical demand model) is distinct from the problem of underidentification of some model parameters (such as the variance parameters for the random components), which arises from the specific nonlinearities in the BLP random coefficients demand model.   In our setting, the fixed effects may obviate the need for instruments to control for price endogeneity but, as we will point out, we still need to impose additional moment conditions in order to identify these nonlinear parameters.
Second, even
  if endogeneity persists in the presence of the interactive fixed effects,
the instruments only need to be exogenous with respect to the residual part
of the unobserved product characteristics, which is not explained by the
interactive fixed effect.
This may expand the set of variables which may be used as instruments.

 To our knowledge, the current paper presents the
 first application of some recent developments in the econometrics of
long panels (with product and market fixed effects) to the workhorse demand model in empirical IO.
 Relative to the existing panel factor
literature (for instance, Bai \cite*{Bai2009}, and Moon and Weidner
\cite*{MoonWeidner2015,MoonWeidner2015b})  that assume a linear regression with exogenous regressors, 
the nonlinear model that we consider here poses both identification and
estimation challenges. Namely, the
usual principal components approach for linear factor models with exogenous regressors is inadequate
due to the nonlinearity of the model and the potentially endogenous
regressors. At the same time, the conventional GMM approach of BLP cannot be used for identification and estimation due to the presence of
the interactive fixed effects.

We propose an alternative  identification and estimation scheme which we call the {\em Least
Squares-Minimum Distance (LS-MD)} method.
It consists of two steps.   The first step is a least squares regression of the mean utility on the included product-market specific regressors, factors, and the instrumental variables. The second step minimizes the norm of the least squares coefficient of the instrumental variables in the first step.
This estimation approach is similar to the two stage estimation method for a class of instrumental
	quantile regressions in Chernozhukov and Hansen \cite*{ChernozhukovHansen2006}.
We show that under regularity conditions that are comparable to the standard GMM problem, the parameter of interest is point identified and its estimator is consistent. We also derive the limit distribution under asymptotic
sequences where both the number of products and the number of markets
converge to infinity. In practice, the estimator is simple and straightforward to compute. Monte Carlo simulations demonstrate its good
small-sample properties.

Our work complements some recent
papers in which alternative estimation approaches and extensions of
the standard random coefficients logit model
have been proposed, including Villas-Boas and Winer
\cite*{Villasboas1999}, Knittel and Metaxoglou
\cite*{KnittelMetaxoglou2014}, Dube, Fox and Su
\cite*{DubeFoxSu2012}, Harding and Hausman \cite*{HardingHausman2007},
Bajari, Fox, Kim and Ryan \cite*{bajari_fox_kim_ryan},
and Gandhi, Kim and Petrin~\cite*{GandhiKimPetrin2010}.

We illustrate our estimator on a dataset of market shares for automobiles,
inspired by the exercise in BLP.   This application illustrates that our
estimator is easy to compute in practice.
 Significantly, we find that, once factors are included
  in the specification, the estimation results under the assumption of
  exogenous and endogenous price are quite similar, suggesting that the factors are
  indeed capturing much of the unobservable product and time effects
  leading to price endogeneity.

The paper is organized as follows. Section~2 introduces the
  model. In Section~3 we discuss how to identify the model when valid instruments are available.
  In Section~4  we introduce the LS-MD estimation method. Consistency and asymptotic normality are discussed in
  Section~5. Section 6 contains Monte Carlo simulation results, and Section~7
  discusses the empirical example. 
   Section~8 concludes. In the appendix we list the assumptions for the
  asymptotic analysis and provide technical derivations and proofs of
  results in the main text.

\subsection*{Notation}

We write $A'$ for the transpose of a matrix or vector $A$.
For column vectors $v$ the Euclidean norm is
defined by $\| v \| = \sqrt{v^{\prime}v}$ . For the $n$-th largest
eigenvalues (counting multiple eigenvalues multiple times) of a symmetric
matrix $B$ we write $\mu_n(B)$. For an $m\times n$ matrix $A$
the Frobenius norm is $\| A \|_{F} = \sqrt{{\rm Tr}(AA^{\prime})}$,
and the spectral norm is $\| A \| = \max_{0 \neq v \in
\mathbb{R}^n} \, \frac{ \| A v \|} {\| v\|}$, or equivalently $\| A \| =
\sqrt{ \mu_1(A^{\prime}A) }$. Furthermore, we use $P_A = A
(A^{\prime}A)^\dagger A'$ and $M_A = \mathbbm{1}_m - A (A^{\prime}A)^\dagger A'$,
where $\mathbbm{1}_m$ is the $m\times m$ identity matrix,
and $(A^{\prime}A)^\dagger$ denotes a generalized inverse, since
$A$ may not have full column rank. The vectorization of an $m\times n$ matrix $A$
is denoted ${\rm vec}(A)$, which is the $mn\times 1$ vector obtained by
stacking the columns of $A$. For
square matrices $B$, $C$, we use $B>C$ (or $B\geq C$) to indicate that $B-C$ is positive (semi) definite.
We use $\nabla$ for the gradient of a function, \textit{i.e.} $\nabla f(x)$ is the vector of
partial derivatives of $f$ with respect to each component of $x$. We use ``wpa1'' for ``with probability approaching one''.

\section{Model}
\label{sec:model}

The random coefficients logit demand model is an aggregate market-level
model, formulated at the individual consumer-level.   Consumer $i$'s
utility of product $j$ in market\footnote{
The $t$ subscript can also denote different time periods.
} $t$ is given by
\begin{align}
   u_{ijt} &= \delta^0_{jt} + \epsilon_{ijt} + X'_{jt} \, v_i \; ,
\end{align}
where $\epsilon_{ijt}$ is an idiosyncratic product-specific preference shock, and
$v_i = (v_{i1}, \ldots, v_{iK})'$ is an idiosyncratic characteristic
preference. The mean utility is defined as
\begin{align}
   \delta^0_{jt} &= X_{jt}' \beta^0  +  \xi^0_{jt}\; ,
   \label{DefDelta}
\end{align}
where  $X_{jt} = \left(X_{1,jt},\ldots,X_{K,jt} \right)'$
is a vector of $K$ observed product characteristics (including price), and
$\beta^0=\left(\beta^0_1,\ldots,\beta^0_K\right)'$ is the corresponding
vector of coefficients.   Following BLP, $\xi^0_{jt}$ denotes unobserved product
characteristics of product $j$, which can vary across markets $t$.   This
is a ``structural error'', in that it is observed by all consumers when
they make their decisions, but is unobserved by the econometrician.  In this
paper, we
focus on the case where these unobserved product characteristics vary
across products and markets according to a factor structure:
\begin{align}
    \label{factorstructure}
\xi^0_{jt}=\lambda^{0 \prime}_j \, f^0_t + e_{jt} \; ,
\end{align}
where $\lambda^0_j = \left( \lambda^0_{1j},\ldots, \lambda^0_{Rj} \right)'$
is a vector of factor loadings corresponding to the $R$ factors\footnote{
Depending on the specific application one
has in mind one may have different interpretations for $\lambda_j$
and $f_t$. For example, in the case of national brands
sold in different markets it seems more natural to interpret $\lambda_j$ as the underlying
factor (a vector product qualities) and $f_t$ as the corresponding loadings (market specific
tastes for these qualities).  For convenience, we refer to
$f_t$ as factors and  $\lambda_j$ as factor loadings
throughout the whole paper, which is the typical naming
convention in applications where $t$ refers to time.}
$f^0_t =
\left( f^0_{1t},\ldots, f^0_{Rt} \right)'$, and $e_{jt}$
is a product and market specific error term.   Here $\lambda^{0 \prime}_j \, f^0_t$ represent
  interactive fixed effects, in that both the factors $f^0_t$ and factor
  loadings $\lambda^0_j$ are unobserved to the econometrician, and can
  be correlated arbitrarily with the observed product characteristics
  $X_{jt}$. We
assume that the number of factors $R$ is known.\footnote{
Known $R$ is also assumed in Bai (2009) and
Moon and Weidner~\cite*{MoonWeidner2015} for the
linear regression model with interactive fixed effects.
Allowing for $R$ to be unknown presents a substantial
technical challenge even for the linear model, and therefore goes beyond
the scope of the present paper.
}
The superscript
zero indicates the true parameters, and objects evaluated at the true
parameters. Let $\lambda^0 = (\lambda^0_{jr})$ and $f^0 = (\lambda^0_{tr})$ be
$J \times R$ and $T \times R$ matrices, respectively.

The factor structure in
equation (\ref{factorstructure}) approximates reasonably
some unobserved product and market characteristics of interest in an interactive form.   For example,
television advertising is well-known to be composed of a
product-specific component as well as an annual cyclical component
(peaking during the winter and summer months).\footnote{
cf. {\em TV Dimensions} \cite*{tv_dimensions}.}
The factors and factor loadings can also explain strong
correlation of the observed market shares over both products and markets,
which is a stylized fact
in many industries that has
motivated some recent dynamic oligopoly models of industry
evolution (e.g. Besanko and Doraszelski \cite*{BesankoDoraszelski2004}).
The standard BLP estimation approach, based on moment conditions,
allows for weak correlation across
markets and products, but does not admit strong correlation
due to shocks
that affect all products and markets simultaneously, which we model
via the factor structure.

To begin with, we assume
that the regressors $X_{jt}$ are exogenous with respect to the errors $e_{jt}$,
 that is, $X_{jt}$ and $e_{jt}$ are uncorrelated for given ($j$, $t$).
This assumption, however, is only made for ease of exposition, and in
both Section~\ref{sec:endogenousX} below and in the empirical
illustration, we consider the more general case where regressors (such
as price) may be endogenous.
Notwithstanding, regressors which are strictly exogenous
with respect to $e_{jt}$ can still be endogenous with respect to
the $\xi^0_{jt}$, due to correlation of the regressors with the factors and factor loadings.
Thus, including
the interactive fixed effects may ``eliminate'' endogeneity problems, so
that instruments for endogeneity may no longer be needed.
This possibility of estimating a demand model without searching for
instruments may be of great practical use in antitrust analysis.

Moreover, when
endogeneity persists even given the interactive fixed effects, then our
approach may allow for a larger set of IV's.   For instance, one criticism
of the so-called ``Hausman'' instruments (cf. Hausman
\cite*{hausmancereal}) -- that is, using the price of product $j$ in market
$t'$ as an instrument for the price of product $j$ in market $t$ -- is that they may not be independent of ``nationwide'' demand
shocks -- that is, product-specific shocks which are correlated across
markets.   Our interactive fixed effect $\lambda_j'f_t$ can be interpreted
as one type of nationwide demand shock, where the $\lambda_j$ factor
loadings capture common (nationwide) components in the shocks across different markets
$t$ and $t'$.   Since the instruments in our model can be
arbitrarily correlated with $\lambda_j$ and $f_t$, the use of Hausman
instruments in our model may be (at least partially) immune to the
aforementioned criticism.

Next, we introduce the key equations for market shares in the random-coefficient logit demand model.  Following Berry, Levinsohn, and Pakes \cite*{BerryLevinsohnPakes1995}, the
probability that agent $i$ chooses product $j$ in market $t$ takes the
multinomial logit form:
\begin{align}\label{choiceprobs}
   \pi_{jt}(\delta_{t},X_t,v_i) &= \frac {\exp\left( \delta_{jt} + X_{jt}' v_i \right) }
                          { 1 + \sum_{l=1}^J \, \exp\left( \delta_{lt} + X_{lt}' v_i \right) } \; .
\end{align}
We do not observe individual specific choices, but
market shares of the $J$ products in the $T$ markets.
The market share of product $j$ in market $t$ is given by
\begin{align}
   s_{jt}(\alpha^0,\delta_t,X_t) &= \int \, \pi_{jt}(\delta_{t},X_t,v) \,
                   d G_{\alpha^0}(v) \; ,
   \label{DefShares}
\end{align}
where $G_{\alpha^0}(v)$ is the known distribution of consumer taste $v_i$ over the product
characteristic, and $\alpha^0$ is an $L\times 1$ vector of parameters of this distribution.\footnote{The dependence of $\pi_{jt}(\delta_{t},X_t,v_i)$ and $s_{jt}(\alpha^0,\delta_t,X_t)$ on $t$ stems from the arguments $\delta_{t}$ and $X_t$.}

Underlying these derivations are assumptions that (i) the distributions of $\epsilon=(\epsilon_{ijt})$ and $v=(v_i)$ are mutually
independent, and are also independent of $X=(X_{jt})$ and $\xi^0=(\xi^0_{jt})$;
(ii) $\epsilon_{ijt}$ follows a marginal  type I extreme
value distribution
iid across $i$ and $j$ (but not necessarily independent across $t$).\footnote{When the index $t$ refers to time (or otherwise possesses some
natural ordering), then sequential exogeneity is allowed throughout the whole
paper, that is, $X_{jt}$ can be correlated with past values of the errors,
$e_{js}, s < t$.  The errors $e_{jt}$ are assumed to be independent across $j$ and $t$,
but heteroscedasticity is allowed.
}
Moreover, the most often used specification for the distribution of random coefficients in the literature is to assume that
they have a multivariate normal distribution,
that is,
$v \sim {\cal N}(0,\Sigma^0)$, where $\Sigma^0$ is a $K\times K$
matrix of parameters, which can be subject to constraints
(e.g. only one or a few regressors
may have random coefficients, in which case the components of $\Sigma^0$ are only non-zero
for these regressors), and $\alpha^0$ consists of the
independent parameters in $\Sigma^0$.\footnote{
We focus in this paper on the case where the functional form of the
distribution function $G_{\alpha}$ is known by the researcher.
Recent papers have addressed estimation when this is not known;
e.g. Bajari, Fox, Kim and Ryan \cite*{bajari_fox_kim_ryan}, \cite*{bajari_fox_kim_ryan2}.
}

The observables in this model are the market shares $s_{jt}$ and the
regressors $X_{jt}$.\footnote{
In the present paper we assume
that the true market shares $s_{jt}=s_{jt}(\delta^0_t)$ are observed.
Berry, Linton and Pakes \cite*{BerryLintonPakes2004} explicitly consider
sampling error in the observed market shares in their asymptotic
theory. Here, we abstract away from this additional complication and focus
on the econometric issues
introduced by the factor structure in $\xi^0$.
}
In addition, we need $M$ instruments
$Z_{jt}=( Z_{1,jt}, \ldots, Z_{M,jt} )'$ to construct extra (unconditional) moment conditions,
in addition to the unconditional moment conditions constructed by $X_{jt}$,
in order to estimate the parameters $\alpha$, with $M \geq L$.
These additional
instruments are also needed in the usual BLP estimation procedure,
even in the absence of the factor structure.  Suppose that
$X_{jt}$ is exogenous with respect to $\xi^0_{j,t}$. From this, we construct unconditional moment conditions $\mathbb{E}(X_{jt} \xi^0_{j,t}) = 0$. Then, extra moment conditions are still required to identify the covariance parameters in the
random coefficients distribution. Notice that
those $Z$'s may be non-linear functions
of the exogeneous $X$'s, so we do not necessarily need to observe additional exogenous variables.\footnote{
   If one is willing to impose the conditional moment condition  $\mathbbm{E}(e_{jt} | X_{jt})=0$,
   then valid $Z_{jt}$ can be constructed as non-linear transformations of $X_{jt}$.
}

Let $s = (s_{jt})$, $X_k = (X_{k,jt})$, $Z_{m} = (Z_{m,jt})$ and $e=(e_{jt})$ be $J \times T$ matrices,
and also define the tensors $X=(X_{k,jt})$ and $Z= (Z_{m,jt})$, which contain all observed product characteristics
and instruments.
   In the presence of the unobserved factor structure, it is difficult to identify
   regression parameters of regressors $X_k$ that have a factor structure themselves,
   which includes product invariant and time invariant regressors.
 Our assumptions below rule out all those $X_k$ and $Z_m$ that have a low rank
   when considered as a $J \times T$ matrix.\footnote{%
   This is exactly analogous to the usual short panel case, in which the presence of fixed effects for each cross-sectional unit precludes identification of the coefficients on time-invariant regressors.
   If the number of factors $R$ is known accurately, then the coefficients of these low-rank
   regressors can be identified, but the necessary regularity conditions are
   relatively cumbersome.  For ease of exposition we will therefore rule out both
   low-rank regressors and low-rank instruments by our assumptions below,
   and we refer to Bai \cite*{Bai2009} and
   Moon and Weidner \cite*{MoonWeidner2015} for a further discussion of this topic.
 }
The unknown parameters are $\alpha^0$, $\beta^0$,
$\lambda^0$, and $f^0$.

The existing literature on demand estimation usually considers asymptotic sequences with either
$J$ growing large and $T$ fixed, or $T$ growing large and $J$ fixed.
Under these standard asymptotic sequences, the
estimation of the nuisance parameters $\lambda^0$ and $f^0$
 creates a Neyman and Scott \cite*{NeymanScott1948}
incidental parameter problem: because the number
of nuisance parameters grows with the sample size,
the estimators for the parameters of interest become inconsistent.
Following some recent panel
data literature, for example,
Hahn and Kuersteiner \cite*{Hahn:2002p717,Hahn:2004p878} and
Hahn and Newey \cite*{Hahn:2004p882}, we handle this problem
by considering asymptotic sequences where both $J$ and $T$ become large.
Under this alternative asymptotic, the incidental parameter problem
is transformed into the issue of asymptotic bias in the limiting distribution
of the estimators of the parameters of interest. This asymptotic bias
can be characterized and corrected for.
Our Monte Carlo simulations suggest that the alternative asymptotic provides
a good approximation of the properties of our estimator at finite sample
sizes, as long as $J$ and $T$ are moderately large.

\section{Identification}

Given the non-linearity of the model, %
  questions regarding the %
identification of the model parameters of interest are naturally raised.
In the following we provide conditions under which the parameters $\alpha$
and $\beta$ as well as the product $\lambda f'$ are identified.
We do not consider
how to identify $\lambda$ and $f$ separately, because they only
enter into the model jointly as $\lambda f'$.\footnote{The transformation
$\lambda \rightarrow \lambda S$ and $f \rightarrow f S^{-1}$ gives observationally
equivalent parameters for any non-degenerate $R \times R$ matrix $S$.
Once the product $\lambda f'$ is identified, one can impose further
normalization restrictions to identify  $\lambda$ and $f$ separately, if desired.}

 Following standard identification arguments (e.g. Matzkin~\cite*{Matzkin2013}), our proof demonstrates identification by showing the
existence of an injective mapping from the model parameters
$(\alpha,\beta,\lambda f')$ and the distribution of the random elements of the model $(e,X,Z)$ to the distribution of the observed data $(s,X,Z)$,
where the random elements of the model are comprised of unobserved error terms, product characteristics, and instruments and the observed data are the market shares, product characteristics, and instruments.\footnote{
Injectivity implies that the mapping is one-to-one -- and
hence invertible -- along the relevant range.   The range of this
mapping excludes some distributions of $(s,X,Z)$; for instance,
distributions in which some of the market shares take zero values
with non-zero probability
cannot be generated by our model, due to the multinomial logit
structure.  See Gandhi, Lu, and Shi~\cite*{GandhiLuShi2013} for
additional discussion of estimating discrete-choice demand models when
some of the products are observed to have zero market shares.}

Our identification result utilizes a population distribution  of a full $J \times T$ panel of observables ($s$, $X$, $Z$), conditional
on parameters $\alpha$, $\beta$ and $\lambda f^{\prime}$.
The fact that we have nuisance parameters $\lambda_j$ and $f_t$ in both panel dimensions
makes the distribution of the full $J \times T$ panel of observables a natural starting point for the identification
discussion (where $J$ and $T$ are are finite constants in this section).
Normally in the large $N,T$ panel data literature (e.g. 
in Hahn and Newey \cite*{Hahn:2004p882},
Bai \cite*{Bai2009}, etc.) there is no explicit identification discussion, but
consistency as $N,T \rightarrow \infty$ (or in our case $J,T \rightarrow \infty$) 
is shown directly. The reason is that there is no fixed population distribution that 
corresponds to the sample as both panel dimensions become large.
Thus, when going from identification to estimation
there will not be a simple analog principle that allows to treat the sample as
multiple draws from the population.
 This is a general
conceptual issue, independent of our paper.
The inference results below therefore do not follow immediately from the identification result presented in this section;
in particular, the incidental parameter problem (Neyman and Scott \cite*{NeymanScott1948}) related to
inference of $\lambda_j$ and $f_t$ needs to be properly addressed.

For our identification result we assume, as in BLP,  that there exists a one-to-one relationship between
market shares and mean utilities, as summarized by the following assumption.
Let ${\cal B}_\alpha \subset \mathbbm{R}^L$ be a given parameter set for $\alpha$.

\newtheorem{INVassumption}{Assumption}
\renewcommand{\theINVassumption}{INV}
\begin{INVassumption}[\bf Invertibility Assumptions] $\phantom{a}$
   \label{ass:INV}
    We assume that
equation \eqref{DefShares} is invertible, that is, for each
market $t$ the mean utilities $\delta_{t}=\left( \delta_{1t},\ldots,\delta_{Jt} \right)$
are unique functions of $\alpha \in {\cal B}_\alpha$, the market shares
$s_{t}=\left(s_{1t},\ldots,s_{Jt}\right)$, and the regressors
$X_{t}=\left(X_{1t},\ldots,X_{Jt}\right)$. We denote these functions by $\delta_{jt}(\alpha,\, s_t,\, X_t)$.\footnote{
Note that the dependence of $\delta_{jt}(\alpha,\, s_t,\, X_t)$ on $t$ stems from the arguments $s_t$ and $X_t$.}

\end{INVassumption}

Berry, Gandhi, and Haile~\cite*{BerryGandhiHaile2013} provide general conditions under which this invertibility assumption is satisfied,
and Berry and Haile \cite*{berry_haile_market} and Chiappori and
Komunjer \cite*{chiappori_komunjer} utilize this
inverse mapping in their nonparametric identification results.

Using Assumption~\ref{ass:INV} and the specifications \eqref{DefDelta} and \eqref{factorstructure} we have
\begin{align}
    \delta^0_{jt} &= \delta_{jt}(\alpha^0,\, s_t,\, X_t)
   =  \sum_{k=1}^K \, \beta^0_k \, X_{k,jt} +  \sum_{r=1}^R \, \lambda^0_{jr} f^0_{tr} + e_{jt} \; .
   \label{model0}
\end{align}
In $JT$-vector notation this equation can be written as
$\delta^{\rm vec}(\alpha^0) = x \beta^0 + \sum_{r=1}^R f^0_{\cdot r} \otimes \lambda^0_{\cdot r} + e^{\rm vec}$,
where $\delta^{\rm vec}(\alpha) = {\rm vec}[  \delta(\alpha,\, s,\, X) ] $
and
  $e^{\rm vec} = {\rm vec}(e)$ are $JT$-vectors, and $x$ is a $JT \times K$ matrix with
  columns $x_{.,k} = {\rm vec}\left( X_{k} \right)$. For simplicity we suppress the dependence of $\delta^{\rm vec}(\alpha)$
  on $s$ and $X$.
It is furthermore convenient to define the $JT \times M$ matrix $z$
with columns $z_{.,m} = {\rm vec}\left( Z_{m} \right)$, the mean utility difference
$d(\alpha) = \delta^{\rm vec}(\alpha) - \delta^{\rm vec}(\alpha^0)$,
and the unobserved utility difference  $\Delta \xi_{\alpha,\beta} = d(\alpha)-x(\beta-\beta^0)$.
Both $d(\alpha)$ and $\Delta \xi_{\alpha,\beta} $ are $JT$ vectors.
Note that $\Delta \xi_{\alpha,\beta}$
is simply the vectorized difference of the
residual unobserved product characteristic
at $(\alpha,\beta)$ and $(\alpha^0,\beta^0)$.
In the following the indices $j$ and $t$ run from $1$ to $J$
and $1$ to $T$, respectively.

\newtheorem{IDassumption}{Assumption}
\renewcommand{\theIDassumption}{ID}
\begin{IDassumption}[\bf Assumptions for Identification] $\phantom{a}$
   \label{ass:ID}
   \begin{itemize}
      \item[(i)] The second moments of $\delta_{jt}(\alpha)$,
                 $X_{jt}$ and $Z_{jt}$ exist for all $\alpha$,
                 and all $j$, $t$.

      \item[(ii)]  $\mathbbm{E}(e_{jt})=0$.

      \item[(iii)]
     $\mathbbm{E}(X_{jt} e_{jt})=0$,
    $\mathbbm{E}(Z_{jt} e_{jt})=0$, for all $j$, $t$.\footnote{
      The exogeneity assumption $ \mathbbm{E}(X_{jt} e_{jt})=0$ can be relaxed.
      Regression coefficients $\beta^{\rm end}$
       on endogenous regressors need to be included in the parameter vector $\alpha$,~i.e.
      $\alpha$ is replaced by $(\alpha, \beta^{\rm end})$, 
      see Section~\ref{sec:endogenousX} below.
    }

      \item[(iv)]
       $\mathbbm{E}[(x,z)' (\mathbbm{1}_T \otimes M_{(\lambda,\lambda^0)}) (x,z)]
     \geq b \, \mathbbm{1}_{K+M}$, for some $b>0$ and all $\lambda \in \mathbbm{R}^{J \times R}$.\footnote{%
     Here,
     $P_{(\lambda,\lambda^0)} = (\lambda,\lambda^0) [(\lambda,\lambda^0)^{\prime}(\lambda,\lambda^0)]^\dagger (\lambda,\lambda^0)'$, where $\dagger$ refers to a generalized inverse,
     and
     $M_{(\lambda,\lambda^0)} = \mathbbm{1}_J - P_{(\lambda,\lambda^0)}$ are the $J \times J$ matrices
     that project onto and orthogonal to the span of $(\lambda,\lambda^0)$.}
      \item[(v)]
   For all $(\alpha,\beta) \neq (\alpha^{0},\beta^0)$,
   and all $\lambda \in \mathbbm{R}^{J \times R}$ we assume that\footnotemark[\value{footnote}] \\
     $\mathbbm{E}\big[ \Delta \xi_{\alpha,\beta}' \, (x,z) \big]
     \mathbbm{E}\big[ (x,z)' (x,z) \big]^{-1}
     \mathbbm{E}\big[ (x,z)' \, \Delta \xi_{\alpha,\beta}  \big]
     > \mathbbm{E}\left[ \Delta \xi_{\alpha,\beta}'
     \left( \mathbbm{1}_T \otimes P_{(\lambda,\lambda^0)}\right)
     \Delta \xi_{\alpha,\beta} \right]$.

   \end{itemize}
\end{IDassumption}

In this assumption, and also for the remainder of the paper, we treat the fixed effects $\lambda^0$ and $f^0$ as non-random parameters, that is, all expectations in Assumption ID are implicitly conditional on $\lambda^0$ and $f^0$.
 The assumptions are discussed in Section~\ref{sec:discussIDcon} below.

To formulate our identification result we need to introduce some additional notation.
We denote the set of joint distributions of $e$, $X$, $Z$ by  ${\cal F}_{e,X,Z}$,
and the set of joint distributions of $s$, $X$, $Z$ (the observables) by ${\cal F}_{s,X,Z}$.
The model described in Section~\ref{sec:model} gives unique market shares $s$ for
any given $e$, $X$, $Z$ and parameters $\alpha$, $\beta$, $\lambda f^{\prime}$. The model therefore also
uniquely describes the distribution of observables for a given distribution
$F_{e,X,Z} \in {\cal F}_{e,X,Z}$ and parameters $\alpha$, $\beta$, $\lambda f^{\prime}$, and
we denote this distribution of observables given by the model
as $\Gamma(\alpha, \beta, \lambda f^{\prime}, F_{e,X,Z}) \in {\cal F}_{s,X,Z}$.
We say that two distributions $F_1,F_2 \in {\cal F}_{s,X,Z}$
are equal if the corresponding joint cdf's are the same, and we write $F_1 = F_2$ in that case.
Analogously, we define equality on ${\cal F}_{e,X,Z}$.

\begin{theorem}[\bf Identification] $\phantom{a}$
   \label{th:id-new}
  Let Assumption~\ref{ass:INV} be satisfied. Let
    $F_{e,X,Z}^0 \in {\cal F}_{e,X,Z}$ be such that it satisfies Assumption~\ref{ass:ID}.
    Let $F_{e,X,Z} \in {\cal F}_{e,X,Z}$ and consider two sets of parameters
    $(\alpha, \beta, \lambda f^{\prime})$
    and $(\alpha^0, \beta^0, \lambda^0 f^{0 \prime})$.
    Then,    $\Gamma(\alpha, \beta, \lambda f^{\prime}, F_{e,X,Z})
         = \Gamma(\alpha^0, \beta^0, \lambda^0 f^{0 \prime}, F^0_{e,X,Z})$ implies that
     $\alpha = \alpha^0$, $\beta = \beta^0$,  $\lambda f^{\prime}= \lambda^0 f^{0 \prime}$
     and $F_{e,X,Z} = F_{e,X,Z}^0$.
\end{theorem}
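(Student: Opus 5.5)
The plan is a direct contradiction argument. It rests on one premise that the statement leaves implicit, and the whole proof hinges on it: the alternative distribution $F_{e,X,Z}$ must satisfy the model's moment restrictions \ref{ass:ID}(ii)--(iii), namely $\mathbbm{E}(e_{jt})=0$, $\mathbbm{E}(X_{jt}e_{jt})=0$ and $\mathbbm{E}(Z_{jt}e_{jt})=0$. Without these, $e$ could absorb any discrepancy and nothing would be identified. I will therefore read ${\cal F}_{e,X,Z}$ as the class of distributions admissible in the model, so that these restrictions hold for $F_{e,X,Z}$ as well as for $F^0_{e,X,Z}$.

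\emph{Step 1 (pathwise decomposition).} Let $e^0$ denote the error under the true parameters and $e$ the error under the alternative ones. Because $\Gamma(\alpha,\beta,\lambda f',F_{e,X,Z}) = \Gamma(\alpha^0,\beta^0,\lambda^0 f^{0\prime},F^0_{e,X,Z})$, we may realize both models on the same observables $(s,X,Z)$. By Assumption~\ref{ass:INV}, the mean utilities $\delta^{\rm vec}(\alpha)$ and $\delta^{\rm vec}(\alpha^0)$ are then the same functions of these observables in both representations. The two model equations
\[
\delta^{\rm vec}(\alpha) = x\beta + w_1 + e^{\rm vec}, \qquad \delta^{\rm vec}(\alpha^0) = x\beta^0 + w_0 + e^{0,{\rm vec}},
\]
with $w_1={\rm vec}(\lambda f')$ and $w_0={\rm vec}(\lambda^0 f^{0\prime})$, can be subtracted. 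Writing $w = w_1 - w_0 = {\rm vec}(\lambda f' - \lambda^0 f^{0\prime})$, which is a deterministic vector, this gives
\[
\Delta\xi_{\alpha,\beta} = w + e^{\rm vec} - e^{0,{\rm vec}} .
\]
The columns of $\lambda f'-\lambda^0 f^{0\prime}$ lie in the span of $(\lambda,\lambda^0)$. Hence $(\mathbbm{1}_T\otimes P_{(\lambda,\lambda^0)})w = w$.

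\emph{Step 2 (identify $\alpha,\beta$ via \ref{ass:ID}(v)).} Suppose $(\alpha,\beta)\neq(\alpha^0,\beta^0)$. Applying (iii) to both $e$ and $e^0$ gives $\mathbbm{E}[(x,z)'\Delta\xi_{\alpha,\beta}] = \mathbbm{E}[(x,z)'w]$. The left side of (v) is therefore
\[
\mathbbm{E}[w'(x,z)]\,\mathbbm{E}[(x,z)'(x,z)]^{-1}\,\mathbbm{E}[(x,z)'w].
\]
This is the squared norm of the population $L^2$-projection of $w$ onto the span of the columns of $(x,z)$, so it is at most $w'w$. For the right side, write $P=\mathbbm{1}_T\otimes P_{(\lambda,\lambda^0)}$ and expand:
\[
\mathbbm{E}[\Delta\xi'P\Delta\xi] = w'w + 2w'\mathbbm{E}(e^{\rm vec}-e^{0,{\rm vec}}) + \mathbbm{E}[(e-e^0)'P(e-e^0)].
\]
The cross term vanishes by (ii), and the last term is nonnegative, so the right side is at least $w'w$. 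Thus the left side of (v) is at most the right side, which contradicts (v) with the same $\lambda$. Hence $\alpha=\alpha^0$ and $\beta=\beta^0$.

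\emph{Step 3 (identify $\lambda f'$ and $F$).} With $\alpha=\alpha^0$ and $\beta=\beta^0$ we have $\Delta\xi_{\alpha,\beta}=0$, so Step 1 gives $w = e^{0,{\rm vec}}-e^{\rm vec}$. Taking expectations and using (ii) yields $w=0$, since $w$ is deterministic. This gives $\lambda f'=\lambda^0 f^{0\prime}$. It then follows that $e=e^0$ pathwise as functions of the common observables, so the joint laws of $(e,X,Z)$ coincide and $F_{e,X,Z}=F^0_{e,X,Z}$.

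The main obstacle is the comparison in Step 2. That is where the particular form of (v) does the work, with the projection onto $(\lambda,\lambda^0)$ designed to absorb the factor difference $w$. It is also where the implicit requirement that the alternative $F_{e,X,Z}$ satisfy (ii)--(iii) becomes essential. In this argument condition (iv) is not needed; I expect the paper invokes it instead for a constructive characterization of $\alpha^0$ through a least-squares first step.
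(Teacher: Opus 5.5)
Your argument is correct, but it takes a genuinely different route from the paper's.

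\textbf{The paper's route.} The paper never couples the two structures. It works only with the population least-squares criterion $Q(\alpha,\beta,\gamma,\lambda,f;F^0_{s,X,Z})$, which is a functional of the observable law.
\begin{itemize}
\item[(a)] It first uses (i)--(iv) to show that at $\alpha^0$ the minimizers are exactly $\beta^0$, $\gamma=0$ and $\lambda f'=\lambda^0 f^{0\prime}$.
\item[(b)] It then supposes that $0$ is an optimal $\gamma$ at some $\alpha\neq\alpha^0$, with optimal $(\tilde\beta,\tilde\lambda,\tilde f)$. It lower-bounds $Q(\alpha,\tilde\beta,0,\tilde\lambda,\tilde f)$ by projecting out $(\tilde\lambda,\lambda^0)$, and upper-bounds $\min Q(\alpha,\cdot)$ by plugging in $(\lambda^0,f^0)$ and regressing on $(x,z)$. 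Together these contradict (v) at $(\alpha,\tilde\beta,\tilde\lambda)$.
\item[(c)] $\beta^0$ and $\lambda^0 f^{0\prime}$ are then recovered from the unique-minimizer property in (a), which is where (iv) enters. $F^0_{e,X,Z}$ is recovered via Assumption INV.
\end{itemize}

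\textbf{Your route.} You realize both structures on the same observables. This is legitimate because, by INV, each structure's error is a measurable function of $(s,X)$. The factor discrepancy $w$ then appears explicitly in $\Delta\xi_{\alpha,\beta}=w+e-e^0$, lies in the range of $\mathbbm{1}_T\otimes P_{(\lambda,\lambda^0)}$, and is sandwiched: Bessel's inequality bounds the left side of (v), and $Pw=w$ plus mean zero bounds the right side. Because you apply (v) at the alternative's own $(\alpha,\beta,\lambda)$, including the case $\alpha=\alpha^0$, $\beta\neq\beta^0$, you obtain $\beta$ at the same time. So (iv) is indeed superfluous for you, beyond the invertibility of $\mathbbm{E}[(x,z)'(x,z)]$ that (v) already presupposes.

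\textbf{What each buys.} Your version is shorter and matches the observational-equivalence statement directly. It also makes explicit the premise that the alternative $F_{e,X,Z}$ satisfies (ii)--(iii). The paper's closing sentence (``these findings immediately preclude observational equivalence'') needs the same premise but leaves it implicit. It is required to know that the alternative $\alpha$ makes $0$ an optimal $\gamma$ in $\gamma(\cdot\,;F^0_{s,X,Z})$, and that the alternative $(\beta,\lambda f')$ minimizes $Q(\alpha^0,\cdot)$. The paper's version buys constructiveness. It characterizes $\alpha^0$ as the unique $\alpha$ at which the population least-squares coefficient on $Z$ vanishes, which is exactly the population analogue of the LS-MD estimator. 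Its lower/upper-bound structure is then reused almost verbatim in the consistency proof.
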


The theorem states that if the distribution of observables
  $F^0_{s,X,Z} =\Gamma(\alpha^0, \beta^0, \lambda^0 f^{0 \prime}, F^0_{e,X,Z})$
is generated from the parameters $(\alpha^0, \beta^0, \lambda^0 f^{0 \prime})$
and $F_{e,X,Z}^0$, satisfying Assumption~\ref{ass:ID},
then any other $(\alpha, \beta, \lambda f^{\prime})$ and $F_{e,X,Z}$ that generate the same distribution
of observables $F^0_{s,X,Z} = \Gamma(\alpha, \beta, \lambda f^{\prime},F_{e,X,Z})$ must be equal to the original
 $(\alpha^0, \beta^0, \lambda^0 f^{0 \prime})$
and $F_{e,X,Z}^0$. In other words, we can uniquely recover the model parameters
from the distribution of observables. Two observationally equivalent model structures
$(\alpha^0, \beta^0, \lambda^0 f^{0 \prime}, F^0_{e,X,Z})$
and $(\alpha, \beta, \lambda f^{\prime},F_{e,X,Z})$ need to be identical.

The key tool for the proof of Theorem~\ref{th:id-new} is the
the expected least squares objective function
$$Q\left( \alpha ,\beta ,\gamma ,\lambda ,f ; F^0_{s,X,Z} \right) =
  \mathbbm{E}_0  \left\{ \sum_{j=1}^J
               \sum_{t=1}^T
               \left[ \delta_{jt}(\alpha)
              - X_{jt}' \beta
              -  Z_{jt}' \gamma  -  \lambda_j' f_t \right]^2 \right\},$$
where $\gamma \in \mathbbm{R}^L$ is an auxiliary parameter,
and  $\mathbbm{E}_0$ refers to the expectation under the distribution of observables
$F^0_{s,X,Z}$,\footnote{%
Normally, we refer to $\mathbbm{E}_0$ simply as $\mathbbm{E}$. We only use different notation here to
stress at which point the argument $F^0_{s,X,Z}$ enters into $Q\left( \alpha ,\beta ,\gamma ,\lambda ,f ; F^0_{s,X,Z} \right)$.
}
 which is assumed to be generated from the model, i.e.
$F^0_{s,X,Z}= \Gamma(\alpha^0, \beta^0, \lambda^0 f^{0 \prime}, F^0_{e,X,Z})$,
with $F^0_{e,X,Z}$ satisfying Assumption~\ref{ass:ID}.

The true value of the auxiliary parameter $\gamma$ is zero, because of the exclusion restriction on $Z_{jt}$.
 In the proof of Theorem~\ref{th:id-new} we show
 that under our assumptions the minimizer of
 $\displaystyle Q\left( \alpha ,\beta ,\gamma ,\lambda ,f ; F^0_{s,X,Z} \right)  $
  over $(\beta, \lambda, f, \gamma)$, for fixed $\alpha$, only satisfies $\gamma=0$ if and only if
  $\alpha = \alpha^0$.
 Thus, by
  using  the  expected least squares objective function as a tool we can uniquely
   identify $\alpha^0$ from the distribution of obervables $F^0_{s,X,Z}$.
  Having identified $\alpha^0$ we can identify $\beta^0$ and $\lambda^0 f^{0 \prime}$
  simply as the unique minimizers of
   $\displaystyle Q\left(\alpha^0 ,\beta ,\gamma ,\lambda ,f ; F^0_{s,X,Z}
  \right)$.  These findings immediately preclude
observational equivalence,
  {\em viz} two sets of distinct parameters $(\alpha^0, \beta^0, \lambda^0 f^{0 \prime})
  \neq (\alpha^1, \beta^1, \lambda^1 f^{1 \prime})$ which are both consistent
  with the observed distribution $F^0_{s,X,Z}$.
For complete details we refer to the proof in the appendix.
   Furthermore, our identification argument is constructive, as it leads naturally to
the LS-MD estimator which we introduce in
subsequent sections.

\subsection{Discussion of the Identification Conditions}
\label{sec:discussIDcon}

In this section we discuss the conditions of the identification theorem.
First, we note that when no factors are present ($R=0$), then our
identification Assumptions~\ref{ass:ID} below essentially require that the unconditional moment conditions
 $\mathbbm{E}(X_{jt} e_{jt})=0$ and $\mathbbm{E}(Z_{jt} e_{jt})=0$
 uniquely identify the model parameters $\alpha$ and $\beta$,
 thus following the original identification strategy in BLP (1995).\footnote{
As such, our identification results do not add to the literature on non-parametric identification
of the BLP model (as in Berry and Haile~\cite*{berry_haile_market}, Chiappori and
Komunjer~\cite*{chiappori_komunjer}, Bajari, Fox, Kim and Ryan~\cite*{bajari_fox_kim_ryan});
our concern is, rather, to show that the logit demand model with parametrically-distributed random coefficients can still be identified after the
introduction of the interactive fixed effects.}

Assumption $(i)$ demands existence of second moments, assumption $(ii)$ requires the error process to have zero mean, and assumption $(iii)$ imposes exogeneity of the product characteristics $X_{jt}$ and the instruments $Z_{jt}$ with respect to the error $e_{jt}$ (endogenous regressors
are discussed in Section~\ref{sec:endogenousX}).
Apart from the term $M_{(\lambda,\lambda^0)}$,
    Assumption~\ref{ass:ID}$(iv)$ is a standard non-collinearity condition on the product characteristics and the instruments -- which jointly appear as regressors in the first step of \eqref{estimator}. The generalized condition
      $\mathbbm{E}[(x,z)' (\mathbbm{1}_T \otimes M_{(\lambda,\lambda^0)}) (x,z)] \geq b > 0$ requires non-collinearity of the regressors even after projecting out all directions proportional to the true factor loading $\lambda^0$ and to any other possible factor loadings $\lambda$.
  A sufficient condition for this assumption is the rank condition ${\rm rank}[\mathbbm{E}(\Xi \, \Xi')] > 2R$ for any non-zero linear combination $\Xi = \beta \cdot X + \gamma \cdot Z$. This rank condition, for example,   rules out product-invariant regressors and
  instruments, as already mentioned above.

Those parts of the conditions $(i)$ to $(iv)$ that do not contain $Z_{jt}$ are used
to identify $\beta^0$ and $\lambda^0 f^{0 \prime}$ when $\alpha^0$ is already identified.
These conditions are typical regularity conditions for identification of a linear regression model with a modification only required in condition~$(iv)$ to accommodate the interactive fixed effects. (See also Moon and Weidner~\cite*{MoonWeidner2015b}.)

  The key additional assumption that we need for identification of
  $\alpha^0$ is Assumption~\ref{ass:ID}$(v)$. Note that $\Delta \xi_{\alpha^0,\beta^0}=0$, that is, both the left and right side of the inequality in~assumption $(v)$ are zero for $(\alpha,\beta)=(\alpha^0,\beta^0)$, which is why this case is explicitly ruled out in the assumption. The left hand side of the inequality in assumption~$(v)$ is the sum of squares of that part of $\Delta \xi_{\alpha,\beta}$ that is explained by the regressors $x$ and the instruments $z$. The right hand side is the sum of squares of that part of $\Delta \xi_{\alpha,\beta}$ that is explained by the true factor loading $\lambda^0$ and an arbitrary other factor loading $\lambda$. Thus, the condition is a relevance condition on the instruments, which requires that the explanatory power of the regressors and the instruments needs to be larger than the explanatory power of $\lambda$ and $\lambda^0$ for $\Delta \xi_{\alpha,\beta}$.

  A more concrete intuition for Assumption~\ref{ass:ID}$(v)$
  can be obtained in the case without factors. Without factors, the identification condition simplifies to $\forall (\alpha,\beta) \neq (\alpha^0,\beta^0):$
\begin{align}
     \mathbbm{E}\left[ \Delta \xi_{\alpha,\beta}' \, (x,z) \right]
     \mathbbm{E}\left[ (x,z)' (x,z) \right]^{-1}
     \mathbbm{E}\left[ (x,z)' \, \Delta \xi_{\alpha,\beta}  \right]
     &> 0 .
\end{align}
This can be shown to be equivalent to the statement $\forall \alpha \neq \alpha^0:$
\begin{align}
\mathbbm{E}\left[ d(\alpha)' (x,z) \right]
   \mathbbm{E}\left[ (x,z)' (x,z) \right]^{-1}
   \mathbbm{E}\left[ (x,z)' d(\alpha)  \right]
    >\mathbbm{E}\left[ d\left( \alpha \right) ^{\prime }x\right]
    \mathbbm{E}\left(
x^{\prime }x\right) ^{-1}
\mathbbm{E}\left[ x^{\prime }d\left( \alpha \right) \right] .
\end{align}
We see that this condition is nothing more than the usual instrument
relevance  condition (for $z$ in this case) underlying
the typical GMM approach in estimating BLP models.
It can also be shown to be equivalent to the condition that
for all $\alpha \neq \alpha^0$ the matrix
  $\mathbbm{E}[ (d(\alpha),x)' (x,z)]$
  has full rank (equal to $K+1$).

The matrix valued function $\delta(\alpha)=\delta(\alpha,s,X)$ was introduced as the inverse of equation \eqref{DefShares} for the market shares $s_{jt}(\delta_t)$.
Thus, once a functional form for
$s_{jt}(\delta_t)$ is chosen and some distributional assumptions on the data generating
process are made, it is in principle possible to analyze
Assumption~\ref{ass:ID}$(v)$ further and to discuss validity and optimality of the
instruments. Unfortunately, too little is known about the
properties of $\delta(\alpha)$ to enable a general analysis.\footnote{
This is a problem not only with our approach, but also with
the estimators in BLP, and for
Berry, Linton and Pakes \cite*{BerryLintonPakes2004}.}
For this reason, in our Monte Carlo simulations in section~\ref{sec:MC} below,
we provide both analytical and
and numerical verifications for Assumption~\ref{ass:ID}$(v)$ for the
specific setup there.

The final remark is that Assumption~\ref{ass:ID}$(v)$ also restricts the family of the distribution of the random coefficient.  As a very simple example,
suppose that we would specify the distribution $G_{\alpha}$ for the random vector $v$
as $v \sim {\cal N}(\alpha_1,\alpha_2)$, where $\alpha =\left( \alpha _{1},\alpha _{2}\right)$,
and we would also include a constant in the vector of regressors $X_{jt}$. Then, the regression
coefficient on the constant and $\alpha_1$ cannot be jointly identified (because they both shift mean utility by a constant,
but have no other effect), and Assumption~\ref{ass:ID}$(v)$
will indeed be violated in this case.

\section{LS-MD Estimator}

If $\delta^0_{jt}$ is known,
then the above model reduces to the linear panel regression model with
interactive fixed effects.
Estimation of this model was discussed under fixed $T$ asymptotics in,
 for example, Holtz-Eakin, Newey and Rosen \cite*{HoltzEakin-Newey-Rosen1988}, and
Ahn, Lee, Schmidt \cite*{AhnLeeSchmidt2001},
and for $J,T \rightarrow \infty$ asymptotics in
Bai \cite*{Bai2009}, and Moon and Weidner \cite*{MoonWeidner2015,MoonWeidner2015b}.

The computational challenge in estimating the model \eqref{model0} lies in
accommodating both the model parameters $(\alpha,\beta)$, which in the
existing literature has mainly been done in a GMM framework, as well
as the nuisance elements $\lambda_j, f_t$, which in the existing
literature have been treated using a principal components
decomposition in a least-squares context (e.g., Bai \cite*{Bai2009}, and
Moon and Weidner \cite*{MoonWeidner2015,MoonWeidner2015b}).
Our estimation procedure -- which mimics the identification proof
discussed previously -- combines both
the GMM approach to demand estimation and the least squares
  approach to the interactive fixed effect model.

{\bfseries Definition:} the {\em least squares-minimum distance (LS-MD)} estimators for $\alpha$ and $\beta$ are defined by
\begin{align}
   &\text{Step 1 (least squares): for given $\alpha$ let}
   \nonumber \\
   & \qquad  \delta(\alpha) \, = \, \delta(\alpha,\, s,\, X) \; ,
   \nonumber \\
   & \qquad \left( \tilde \beta_{\alpha} \, , \;
          \tilde \gamma_{\alpha} \, , \; \tilde \lambda_{\alpha} \, , \; \tilde f_{\alpha} \right)
     \, = \, \argmin_{ \{ \beta, \, \gamma, \, \lambda, \, f \} }  \,
               \sum_{j=1}^J
               \sum_{t=1}^T
               \left[ \delta_{jt}(\alpha)
              - X_{jt}' \beta
              -  Z_{jt}' \gamma  -  \lambda_j' f_t \right]^2 ,
   \nonumber \\
   &\text{Step 2 (minimum distance):}
   \nonumber \\
    &\qquad \widehat \alpha \, = \,  \argmin_{\alpha \in {\cal B}_\alpha} \,
                    \tilde \gamma'_{\alpha} \, \smallW_{JT} \, \tilde \gamma_{\alpha} \; ,
   \nonumber \\
   &\text{Step 3 (least squares):}
   \nonumber \\
   & \qquad  \delta(\widehat \alpha) \, = \, \delta(\widehat \alpha,\, s,\, X) \; ,
   \nonumber \\
   & \qquad \left( \widehat \beta \, , \;
          \widehat \lambda \, , \; \widehat f \right)
     \, = \, \argmin_{ \{ \beta ,\, \lambda, \, f \} }   \,
                 \sum_{j=1}^J
               \sum_{t=1}^T
               \left[ \delta_{jt}(\widehat{\alpha})
              - X_{jt}' \beta -  \lambda_j' f_t \right]^2  .
   \label{estimator}
\end{align}
Here, $\beta \in \mathbbm{R}^K$,
$\delta(\alpha,\, s,\, X)$, $X_k$ and $Z_m$ are $J \times T$ matrices,
$\lambda$ is $J\times R$, $f$ is $T\times R$,
$\smallW_{JT}$ is a positive definite $M\times M$ weight matrix,
${\cal B}_\alpha \subset \mathbbm{R}^L$ is an appropriate parameter set
for $\alpha$.

Steps 1 and 2 are nested, because $\tilde \gamma_{\alpha}$ defined by step 1
needs to be calculated
multiple times while performing the numerical optimization in step 2, 
but step 3 only needs to be performed once after the calculation of $\widehat \alpha$ in step 2 is finished.  Step 1 resembles the linear least-squares estimators with interactive fixed effects considered in Bai \cite*{Bai2009} and Moon and Weidner \cite*{MoonWeidner2015}, but because our model also includes the nonlinear parameter~$\alpha$, this step is nested
within step 2, which involves iteration over different candidate values for $\alpha$.

In step 1, we include the IV's $Z_m$ as auxiliary regressors, with
coefficients $\gamma \in \mathbbm{R}^M$.   Step 2 is based on imposing the exclusion
restriction on the IV's, which requires that $\gamma=0$, at the true value
of $\alpha$.
Thus, we first estimate
$\beta$, $\lambda$, $f$, and the instrument coefficients $\gamma$
by least squares for fixed $\alpha$, and subsequently we estimate $\alpha$ by minimizing the norm of $\tilde \gamma_\alpha$ with respect to $\alpha$.

  Step 3 in \eqref{estimator}, which defines $\widehat \beta$,
  is just a repetition of step 1, but with $\alpha=\widehat \alpha$ and $\gamma=0$.
  One could also use the step 1 estimator $\tilde \beta_{\widehat \alpha}$
  to estimate $\beta$. Under the assumptions for consistency of $(\widehat \alpha,\widehat \beta)$
  presented below, this alternative estimator is also consistent for $\beta^0$.
  However, in general $\tilde \beta_{\widehat \alpha}$ has a larger variance than $\widehat \beta$, since irrelevant regressors are included in the estimation of
  $\tilde \beta_{\widehat \alpha}$.

For given $\alpha$, $\beta$ and $\gamma$ the optimal factors and factor
loadings in the least squares problems in step 1 (and step 3) of
\eqref{estimator} turn out to be the principal components estimators for
$\lambda$ and $f$. These incidental parameters can therefore be
      concentrated out easily,
and the remaining objective function for $\beta$ and $\gamma$ turns out to be given by an
eigenvalue problem (see e.g. Moon and Weidner \cite*{MoonWeidner2015,MoonWeidner2015b} for details), namely
\begin{align}
    \left( \tilde \beta_{\alpha} \, , \;
          \tilde \gamma_{\alpha} \right)
     \, &= \, \argmin_{ \{ \beta, \, \gamma \} }  \,
            \sum_{r=R+1}^{T} \mu_r\left[ \left( \delta(\alpha)
              - \beta \cdot X
              - \gamma \cdot Z  \right)'
      \left( \delta(\alpha)
              - \beta \cdot X
              - \gamma \cdot Z  \right)
                \right] \; ,
\end{align}
where $\beta \cdot X = \sum_{k=1}^K \, \beta_k \, X_{k}$,
$\gamma \cdot Z = \sum_{m=1}^M \, \gamma_m \, Z_m$,
and $\mu_r(.)$ refers to the $r$'th largest eigenvalue of the argument
matrix.
This formulation greatly simplifies the numerical calculation of the estimator, since
eigenvalues are easy and fast to compute, and we only need to perform numerical optimization over $\beta$ and $\gamma$, not over $\lambda$ and $f$.

The step 1 optimization problem in \eqref{estimator} has the same structure
as the interactive fixed effect regression model. Thus, for $\alpha =
\alpha^0$ it is known
from Bai \cite*{Bai2009}, and Moon and Weidner \cite*{MoonWeidner2015,MoonWeidner2015b}
that (under their assumptions) $\tilde \beta_{\alpha^0}$ is $\sqrt{JT}$-consistent
for $\beta^0$ and asymptotically normal as $J,T \rightarrow \infty$
with $J/T \rightarrow \kappa^2$, $0<\kappa<\infty$.

Step 1 also involves solving for the vector of $\delta$'s which solves the market share equations (2.5), at a given value for $\alpha$.  This computational problem is well-studied in the BLP literature.\footnote{We solve it using nonlinear equation solvers, which is a relatively standard procedure from the existing BLP literature. Its validity is ensured by results (in Berry, Levinsohn, Pakes (1995)) showing that, for fixed $\alpha$, these equations constitute a contraction mapping, and the nonlinear equation solver recovers the (unique) fixed point.} 

The LS-MD estimator we propose above is distinctive,
because of the inclusion of the instruments $Z$ as regressors in the
first-step.   This can be understood as a generalization of an estimation
approach for a linear regression model with endogenous regressors.
Consider a simple structural equation $y_{1}=Y_{2}\alpha +e,$ where the
endogenous regressors $Y_{2}$ have the reduced form specification
$Y_{2}=Z\delta +V$,
and $e$ and $V$ are correlated. The two stage least squares estimator of $%
\alpha $ is $\widehat{\alpha}_{\rm 2SLS}=\left( Y_{2}^{\prime }P_{Z}Y_{2}\right)
^{-1}Y_{2}^{\prime }P_{Z}y_{1},$ where $P_{Z}=Z\left( Z^{\prime }Z\right)
^{-1}Z^{\prime }$. In this set up, it is possible to show that $\widehat{\alpha}%
_{\rm 2SLS}$ is also an LS-MD estimator with a suitable choice of the weight
matrix. Namely, in the first step the OLS regression of
$\left(y_{1}-Y_{2}\alpha \right) $ on $Z$ yields the OLS
estimator $\tilde{\gamma}_{\alpha }=\left( Z^{\prime }Z\right)
^{-1}Z^{\prime }\left( y_{1}-Y_{2}\alpha \right)$. Then, in the second step
minimizing the distance $\tilde{\gamma}_{\alpha }^{\prime }W\tilde{\gamma}%
_{\alpha }$ with respect to $\alpha $ gives $\widehat{\alpha}( W) =%
[ Y_{2}^{\prime }Z( Z^{\prime }Z) ^{-1}W( Z^{\prime
}Z) ^{-1}Z^{\prime }Y_{2} ]^{-1} \linebreak[1] [ Y_{2}^{\prime }Z(
Z^{\prime }Z) ^{-1}W( Z^{\prime }Z) ^{-1}Z^{\prime }y_{1}]$.
Choosing $W=Z^{\prime }Z$ thus results in
$\widehat{\alpha}=\widehat{\alpha}\left( Z^{\prime }Z\right) =\widehat{\alpha}_{\rm 2SLS}$.
Obviously, for our nonlinear
model, strict 2SLS is not applicable; however, our estimation approach
can be considered a generalization of this alternative iterative
estimator, in which the exogenous instruments $Z$ are included as
``extra'' regressors in the initial least-squares step.\footnote{%
Moreover, the presence of the factors makes it inappropriate to use the moment
condition-based GMM approach proposed by BLP, see Appendix~\ref{app:GMM}.
Moment based approaches to factor model estimation like
 Holtz-Eakin, Newey and Rosen \cite*{HoltzEakin-Newey-Rosen1988}
and  Ahn, Lee, Schmidt \cite*{AhnLeeSchmidt2001,AhnLeeSchmidt2013}
would also have to be non-trivially extended to 
handle the random coefficient parameter $\alpha$ in the presence of two dimensional incidental parameters in a nonlinear framework, but we have not explored this possibility.}

The two-step procedure in the LS-MD estimation is similar to the two stage estimation method in  Chernozhukov and Hansen \cite*{ChernozhukovHansen2006} that investigated endogenous quantile regressions.

\subsection{Extension: regressor endogeneity with respect to $e_{jt}$}
\label{sec:endogenousX}

So far, we have assumed that the regressors $X$
could be endogenous only through the factors $\lambda_{j}' f_t$, and
they are exogenous wrt $e$. However, this could be restrictive in
some applications, for example, when price $p_{jt}$ is determined by
$\xi_{jt}$ contemporaneously. Hence, we consider here the possibility that the
regressors $X$ could also be correlated with
$e$.   This is readily accommodated within our framework.   Let
$X^{\rm end}\subset X$ denote the endogenous regressors, with
$\text{dim}(X^{\rm end})=K_2$.   (Hence, the number of exogenous regressors equals
$K-K_2$.)  Similarly, let $\beta^{\rm end}$ denote the coefficients on these
regressors, while $\beta$ continues to denote the coefficients on the
exogenous regressors.
Correspondingly, we assume that $M$, the number of instruments, exceeds $L+K_2$.

{\bfseries Definition:} the {\em least-squares minimum  distance (LS-MD)}
estimators for $\alpha$ and $\beta$ with endogenous regressors $X^{end}$
is defined by:
\begin{align}
   &\text{step 1: for given $\alpha^{\rm end}=(\alpha, \beta^{\rm end})$ let}
   \nonumber \\
   & \qquad  \delta(\alpha) \, = \, \delta(\alpha,\, s,\, X) \; ,
   \nonumber \\
   & \qquad \left( \tilde \beta_{\alpha^{\rm end}} \, , \;
          \tilde \gamma_{\alpha^{\rm end}} \, ,
            \; \tilde \lambda_{\alpha^{\rm end}} \, , \; \tilde f_{\alpha^{\rm end}} \right)
     \, = \, \argmin_{ \{ \beta , \, \gamma, \, \lambda, \, f \} }  \,
               \sum_{j=1}^J
               \sum_{t=1}^T
               \left[ \delta_{jt}(\alpha)
              - X^{\rm end \prime}_{jt} \beta^{\rm end}
              - X_{jt}' \beta
              -  Z_{jt}' \gamma  -  \lambda_j' f_t \right]^2  ,
   \nonumber \\
   &\text{step 2:}
   \nonumber \\
    &\qquad \widehat \alpha^{\rm end}
     = (\widehat \alpha, \widehat \beta^{\rm end})
       \, = \,  \argmin_{\alpha^{\rm end} \in {\cal B}_\alpha \times {\cal B}^{\rm end}_\beta} \,
                    \tilde \gamma'_{\alpha^{\rm end}} \, \smallW_{JT} \, \tilde \gamma_{\alpha^{\rm end}} \; ,
   \nonumber \\
   &\text{step 3:}
   \nonumber \\
   & \qquad  \delta(\widehat \alpha) \, = \, \delta(\widehat \alpha,\, s,\, X) \; ,
   \nonumber \\
   & \qquad \left( \widehat \beta \, , \;
      \widehat \lambda \, , \; \widehat f \right)
     \, = \, \argmin_{ \{ \beta \in \mathbbm{R}^K, \, \lambda, \, f \} }  \,
                 \sum_{j=1}^J
               \sum_{t=1}^T
               \left[ \delta_{jt}(\widehat{\alpha})
              - X^{\rm end \prime}_{jt} \beta^{\rm end}
              - X_{jt}' \beta -  \lambda_j' f_t \right]^2 ,
   \label{estimatorEND}
\end{align}
where ${\cal B}_{\alpha}$ and  ${\cal B}^{\rm end}_{\beta}$
are parameter sets for $\alpha$ and $\beta^{\rm end}$.

The difference between this estimator, and the previous one for
which all the regressors were assumed exogenous, is that the
estimation of $\beta^{\rm end}$, the coefficients on the endogenous
regressors $\tilde X$, has been moved to the second step.
 The  estimation procedure in \eqref{estimatorEND} can me mapped
into our original LS-MD procedure in  \eqref{estimator}, if we make the following formal replacements:
\begin{align}
    \alpha^{\rm end}=(\alpha, \beta^{\rm end})  &\mapsto \alpha  , &
     \delta(\alpha) - \beta^{\rm end} \cdot X^{\rm end}   &\mapsto \delta(\alpha) .
     \label{ReplacementEND}
\end{align}
Thus, by changing the meaning of $\alpha$ and $\delta(\alpha)$ accordingly,
the identification result above is still valid,
and all results below on the consistency, asymptotic distribution and bias correction
of the LS-MD estimator \eqref{estimator} with only (sequentially) exogenous regressors
directly  generalize to the estimator \eqref{estimatorEND} with
more general endogenous regressors.
Given this discussion, we see that the original BLP (1995) model can
be considered a special case of our model in which factors are absent (i.e. $R=0$).

\section{Consistency and Asymptotic Distribution}
\label{sec:asymptotics}

In this section we present our results on the properties of
the LS-MD estimator $\widehat \alpha$ and $\widehat \beta$ defined in \eqref{estimator}
under the asymptotics $J,T \rightarrow \infty$.

\begin{assumption}[\bf Assumptions for Consistency] $\phantom{a}$
    \label{ass:con}
   \begin{itemize}
     \item[(i)] $\displaystyle \sup_{\alpha \in {\cal B}_{\alpha} \setminus \alpha^0}
                   \frac{ \| \delta(\alpha)-\delta(\alpha^0) \|_F }
                        { \| \alpha - \alpha^0 \| } = {\cal O}_p(\sqrt{JT})$, \\[5pt]
     $\|X_k\|_F = {\cal O}_p(\sqrt{JT})$, $\|Z_m\|_F = {\cal O}_p(\sqrt{JT})$, for  $k=1,\ldots,K$ and $m=1,\ldots,M$.
      \item[(ii)]   $\|e\| = {\cal O}_p(\sqrt{\max(J,T)})$.
      \item[(iii)]  $\frac 1 {JT} \, {\rm Tr}\left( X_k e' \right) = o_p(1)$, \;
                      $\frac 1 {JT} \, {\rm Tr}\left( Z_m e'\right) = o_p(1)$, \;
                      for $k=1,\ldots,K$ and $m=1,\ldots,M$.\footnote{
       We can relax the exogeneity assumption $\frac 1 {JT} \, {\rm Tr}\left( X_k e' \right) = o_p(1)$.
       For all endogenous regressor the corresponding regression coefficients $\beta^{\rm end}$
       need to be included in the parameter vector $\alpha$, see the replacement \eqref{ReplacementEND}
       above.
    }
      \item[(iv)] $\displaystyle \min_{\lambda \in \mathbbm{R}^{J \times R}}
          \left\{ \mu_{K+M} \left[  \ft 1 {JT} (x,z)' (\mathbbm{1}_T \otimes M_{(\lambda,\lambda^0)}) (x,z) \right]
            \right\} \geq b$, wpa1, for some $b>0$.
      \item[(v)] There exists $b>0$ such that wpa1
   for all $\alpha \in {\cal B}_\alpha$ and $\beta \in \mathbbm{R}^K$
   \begin{align*}
       & \left[ \ft 1 {JT} \Delta \xi_{\alpha,\beta}' \, (x,z) \right]
        \left[ \ft 1 {JT} (x,z)' (x,z) \right]^{-1}
        \left[ \ft 1 {JT} (x,z)' \, \Delta \xi_{\alpha,\beta}  \right]
        \nonumber \\ & \qquad \qquad \qquad
     -  \max_{\lambda \in \mathbbm{R}^{J \times R}} \left[
         \ft 1 {JT} \Delta \xi_{\alpha,\beta}'
     \left( \mathbbm{1}_T \otimes P_{(\lambda,\lambda^0)}\right)
       \Delta \xi_{\alpha,\beta} \right]
           \geq b \| \alpha-\alpha^0\|^2 + b \|\beta-\beta^0\|^2.
   \end{align*}

    \item[(vi)] $\smallW_{JT} \; \operatorname*{\rightarrow}_p \; \smallW >0$.

   \end{itemize}

\end{assumption}

\begin{theorem}[\bf Consistency]
   \label{th:consistency}
   Let Assumption~\ref{ass:con} hold,
   and let $\alpha^0 \in {\cal B}_{\alpha}$.
   In the limit $J,T \rightarrow \infty$
   we then have $\widehat \alpha = \alpha^0 + o_p(1)$, and
           $\widehat \beta = \beta^0 + o_p(1)$.
\end{theorem}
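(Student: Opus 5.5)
The plan is to mimic the identification argument of Theorem~\ref{th:id-new} at the sample level. I will first show that the Step~1 coefficient $\tilde\gamma_\alpha$ is bounded below in norm, uniformly over $\alpha$, by a multiple of $\|\alpha-\alpha^0\|$ up to $o_p(1)$. I will also show that $\tilde\gamma_{\alpha^0}=o_p(1)$. Step~2 then forces $\widehat\alpha\to_p\alpha^0$, and consistency of $\widehat\beta$ follows from a linear interactive fixed effects argument at $\widehat\alpha$.

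Write the Step~1 objective, after profiling out $(\lambda,f)$, as
\[
L_{JT}(\alpha,\beta,\gamma)=\min_{\lambda,f}\tfrac1{JT}\big\|\delta(\alpha)-\beta\cdot X-\gamma\cdot Z-\lambda f'\big\|_F^2 .
\]
Using $\delta(\alpha)=\delta(\alpha^0)+d(\alpha)$ and $\delta(\alpha^0)=\beta^0\cdot X+\lambda^0f^{0\prime}+e$, the residual equals $\Delta\xi_{\alpha,\beta}-\gamma\cdot Z+\lambda^0f^{0\prime}-\lambda f'+e$ in matrix form.

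\textbf{Step (a).} Expand the square and bound the cross terms with $e$. Terms such as ${\rm Tr}(X_ke')$ and ${\rm Tr}(Z_me')$ are $o_p(JT)$ by Assumption~\ref{ass:con}(iii). Cross terms with a rank-$\le 2R$ matrix $A$ are controlled by $|{\rm Tr}(Ae')|\le 2R\|A\|\,\|e\|\le 2R\|A\|_F\,\|e\|$, which is $o_p(JT)\|A\|_F/\sqrt{JT}$ by (ii). The term $\Delta\xi$ involves $d(\alpha)$, which is not exogenous. Here I would use ${\rm Tr}(d(\alpha)e')=[\delta(\alpha)-\delta(\alpha^0)]\cdot e$ and compare with the value at $\beta$ of the same expansion. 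Since $\|e\|_F^2$ appears in every candidate, it cancels. Differences in the objective are then governed by
\[
\tfrac1{JT}\|\Delta\xi_{\alpha,\beta}-\gamma\cdot Z+\lambda^0f^{0\prime}-\lambda f'\|_F^2
\]
up to cross terms with $e$. Handling ${\rm Tr}(d(\alpha)e')$ uniformly is delicate: I expect to need the structure of the first-order conditions, or to replace the comparison by one that differences it out. Concretely, I would compare $L_{JT}(\alpha,\tilde\beta_\alpha,\tilde\gamma_\alpha)\le L_{JT}(\alpha,\beta^*,\gamma^*)$ for cleverly chosen $(\beta^*,\gamma^*)$, namely the projection coefficients of $\Delta\xi_{\alpha,\beta^0}$ onto $(x,z)$. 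In this comparison the $d(\alpha)\cdot e$ terms appear on both sides and cancel exactly.

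\textbf{Step (b).} This is the main inequality. Minimizing over $\lambda f'$ with rank $R$, the part of $\Delta\xi$ that can be absorbed by factors is at most the $\max_\lambda$ projection term in (v). This holds because $M_{(\lambda,\lambda^0)}$ annihilates both $\lambda^0f^{0\prime}$ and $\lambda f'$, while (iv) ensures that a nonzero $\gamma$ costs at least $b\|\gamma\|^2$ even after projecting out factors. Combining these, the optimality of $\tilde\gamma_\alpha$ and assumption (v) give, uniformly in $\alpha$,
\[
c_1\|\tilde\gamma_\alpha\|^2\ \ge\ b\|\alpha-\alpha^0\|^2+b\|\tilde\beta_\alpha-\beta^0\|^2-o_p(1)\big(1+\|\alpha-\alpha^0\|^2\big).
\]
The $o_p$ control of the cross terms uses the Lipschitz bound on $d(\alpha)$ in (i), together with boundedness in $\beta,\gamma$ coming from (iv). I expect this step to be the main obstacle: turning the sample least-squares comparison into exactly the quadratic form in (v), with the $\max_\lambda$ taken correctly so that it holds uniformly.

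\textbf{Step (c).} At $\alpha^0$ we have $\Delta\xi_{\alpha^0,\beta}=-x(\beta-\beta^0)$, and the standard linear interactive fixed effects consistency argument (Bai 2009; Moon and Weidner) gives $\tilde\gamma_{\alpha^0}=o_p(1)$ from (ii)--(iv). By (vi) and the definition of $\widehat\alpha$,
\[
\tilde\gamma_{\widehat\alpha}'W_{JT}\tilde\gamma_{\widehat\alpha}\le\tilde\gamma_{\alpha^0}'W_{JT}\tilde\gamma_{\alpha^0}=o_p(1),
\]
so $\|\tilde\gamma_{\widehat\alpha}\|=o_p(1)$. Step~(b) then yields $\widehat\alpha-\alpha^0=o_p(1)$.

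\textbf{Step (d).} For $\widehat\beta$, Step~3 is the linear interactive fixed effects problem with outcome $\delta(\widehat\alpha)=\beta^0\cdot X+\lambda^0f^{0\prime}+e+d(\widehat\alpha)$. By (i), $\|d(\widehat\alpha)\|_F=o_p(\sqrt{JT})$, so the extra term is a negligible perturbation. Repeating the linear argument, using (iv) restricted to $x$ and (v) with $\alpha=\alpha^0$ and without $z$, gives $\widehat\beta-\beta^0=o_p(1)$.
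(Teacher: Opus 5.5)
Your proposal is correct and is essentially the paper's proof. It uses the same lower bound through $M_{(\tilde\lambda,\lambda^0)}$, the same upper bound at $(\lambda^0,f^0)$ with projection onto $(x,z)$ so that ${\rm Tr}(d(\alpha)e')$ cancels, and then Assumption~\ref{ass:con}$(v)$, with $\tilde\gamma_{\alpha^0}=o_p(1)$ from the linear interactive fixed effects argument and the Step~3 argument for $\widehat\beta$.

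The only difference is bookkeeping. The paper first uses $\tilde\gamma_{\widehat\alpha}=o_p(1)$ to drop the $\tilde\gamma\cdot Z$ terms at $\widehat\alpha$, while you keep them to get an inequality uniform in $\alpha$. In your version the cross term between $\Delta\xi_{\alpha,\tilde\beta_\alpha}$ and $\tilde\gamma_\alpha\cdot Z$ is controlled by Cauchy--Schwarz and $\|Z_m\|_F={\cal O}_p(\sqrt{JT})$ from $(i)$, and the quadratic term in $\tilde\gamma_\alpha$ has a favorable sign, so $(iv)$ is not what does the work at that step.
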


The proof of Theorem~\ref{th:consistency} is given in the appendix.
The similarity between Assumption~\ref{ass:con} and
Assumption~\ref{ass:ID} is obvious, so that for the most part we can refer to
Section~\ref{sec:discussIDcon} for the interpretation of these assumptions, and
in the following we focus on discussing the differences between the consistency
and identification assumptions. The one additional assumption is the last one,
which requires existence of a positive definite probability limit of the weight matrix
$\smallW_{JT}$.

Apart from a rescaling with appropriate powers of
$JT$, the Assumptions~\ref{ass:con}$(i)$, $(iii)$, $(iv)$, and $(v)$
are almost exact sample analogs of their identification counterparts in
Assumption~\ref{ass:ID}.
The two main differences are that  assumption $(i)$ also imposes a Lipschitz-like
continuity condition on $\delta(\alpha)$ around $\alpha^0$, and that
the right hand-side of the inequality in assumption $(v)$ is not just zero, but a quadratic form
in $(\alpha-\alpha^0)$ and $(\beta-\beta^0)$ --- the latter is needed, because
expressions which are exactly zero in the identification proof
are now only converging to zero asymptotically.

Assumption~\ref{ass:con}$(ii)$ imposes
a bound on the the spectral norm of $e$, which
is satisfied as long as $e_{jt}$ has mean zero, has a uniformly bounded fourth moment
(across $j,t,J,T$) and is weakly correlated across $j$ and $t$.\footnote{
Such a statement on the spectral norm of a random matrix is a typical result in
random matrix theory. The difficulty -- and the reason why we prefer
such a high-level assumption on the spectral norm of $e$ --
 is to specify the meaning of
``weakly correlated across $j$ and $t$''. The extreme case is obviously
independence across $j$ and $t$, but weaker assumptions are possible.
We refer to the discussion in Moon and Weidner \cite*{MoonWeidner2015} for
other examples.} The assumption is therefore the analog of Assumption~\ref{ass:ID}$(ii)$.

At finite $J$, $T$, a sufficient condition for existence of $b>0$ such that the inequality
in Assumption~\ref{ass:con}$(iv)$ is satisfied,
is ${\rm rank}(\Xi) > 2R$ for any non-zero linear combination $\Xi$ of $X_k$ and $Z_m$.
This rank condition rules out product-invariant and market-invariant product characteristics $X_k$ and instruments $Z_m$, since those have rank 1 and can be absorbed into the factor structure.\footnote{
  Inclusion of product-invariant and market-invariant characteristics
  (``low-rank regressors'') does not hamper the identification
  and estimation of the regression coefficients on the other
  (``high-rank'') regressors. This is because including low-rank
  regressors is equivalent to increasing the number of factors $R$, and then
  imposing restrictions on the factors and factors loadings of these new
  factors. Conditions under which the coefficients of low-rank regressors
  can be estimated consistently are discussed in
  Moon and Weidner \cite*{MoonWeidner2015}.
}
There are many reformulations of this rank condition, but in one formulation or another this rank condition can be found in any of the above cited papers on linear factor regressions, and we refer to Bai~\cite*{Bai2009}, and Moon and Weidner \cite*{MoonWeidner2015} for a further discussion.

Next, we present results on the
limiting distribution of $\widehat \alpha$ and $\widehat \beta$. 
Some further regularity condition are necessary to derive the
limiting distribution of our LS-MD estimator, and those are summarized in
Assumption~\ref{ass:A2} to \ref{ass:A4} in the appendix. These assumptions
are straightforward generalization of the assumptions imposed by
Moon and Weidner~\cite*{MoonWeidner2015,MoonWeidner2015b} for the linear model, except for
part $(i)$ of Assumption~\ref{ass:A4}, which demands that
$\delta(\alpha)$ can be linearly approximated around $\alpha^0$ such that
the Frobenius norm of the remainder term of the expansion is of order
$o_p(\sqrt{JT}\|\alpha-\alpha^0\|)$
in any $\sqrt{J}$ shrinking neighborhood of $\alpha^0$.
Notice also that Assumption~\ref{ass:A4}(iv) implies $\mathbbm{E}(e_{jt} | X_{jt},Z_{jt} )=0$, while so
far we only required $e_{jt}$ to be uncorrelated with $X_{jt}$ and $Z_{jt}$.

\begin{theorem}
   \label{th:asympt_dist}
   Let Assumptions~\ref{ass:con}, \ref{ass:A2}, \ref{ass:A3}
   and \ref{ass:A4} be satisfied, and
   let $\alpha^0$ be an interior point of ${\cal B}_\alpha$.
   In the limit $J,T \rightarrow \infty$ with $J/T \rightarrow \kappa^2$, $0<\kappa<\infty$,
   we then have
   \begin{align*}
      \sqrt{JT} \left( \begin{array}{c} \widehat \alpha - \alpha^0 \\[2mm]
                                        \widehat \beta  - \beta^0 \end{array} \right)
          \; \; \operatorname*{\rightarrow}_d \; \;
          {\cal N}\left(  \kappa B_0 + \kappa^{-1} B_1 + \kappa B_2,
         \; \; \left(G  \bigW  G' \right)^{-1}  G  \bigW  \Omega
          \bigW G'  \left(G  \bigW  G' \right)^{-1} \right) \; ,
   \end{align*}
 with the formulas for 
   $G$, ${\cal W}$, $\Omega$,
   $B_0$, $B_1$ and $B_2$ given in the appendix
   \ref{app:DetailsTh}.
\end{theorem}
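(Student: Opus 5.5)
The plan follows the standard extremum-estimator route, in three stages: a local expansion of the first-step output $\tilde\gamma_\alpha$ in $\alpha$, then the minimum-distance step, then the third step for $\widehat\beta$. Theorem~\ref{th:consistency} already gives $\widehat\alpha = \alpha^0 + o_p(1)$, so the first task is to sharpen this to a rate. I would show $\sqrt{JT}$-consistency (or at least $\|\widehat\alpha-\alpha^0\| = O_p(J^{-1/2})$, so that we are in the shrinking neighborhood where Assumption~\ref{ass:A4}(i) applies). To do so, I would redo the consistency argument while keeping track of orders. Assumption~\ref{ass:con}(ii) gives $\|e\|=O_p(\sqrt{\max(J,T)})$, and the quadratic lower bound in Assumption~\ref{ass:con}(v) converts an objective-function gap of order $O_p(1/\sqrt{JT})\cdot\|\alpha-\alpha^0\| + O_p(1/(JT))$ into the desired rate.

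Next, for $\alpha$ in this neighborhood, I write
\[
\delta(\alpha)=\delta(\alpha^0)+\sum_l(\alpha_l-\alpha_l^0)\nabla_l\delta(\alpha^0)+r(\alpha),
\qquad \|r(\alpha)\|_F=o_p\bigl(\sqrt{JT}\,\|\alpha-\alpha^0\|\bigr).
\]
Step 1 is then a linear interactive fixed effects regression of $\delta(\alpha^0)+\sum_l(\alpha_l-\alpha^0_l)\nabla_l\delta$ on $(X,Z)$, perturbed by $r(\alpha)$. I would apply the perturbative expansion of the profiled least-squares objective from Moon and Weidner~\cite*{MoonWeidner2015,MoonWeidner2015b}, treating $\nabla_l\delta$ as additional "regressors" with known coefficients $\alpha-\alpha^0$. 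The expansion runs in powers of $e$ and of the regressor deviations. This yields, uniformly in the neighborhood,
\[
\sqrt{JT}\,\tilde\gamma_\alpha = A\,\sqrt{JT}(\alpha-\alpha^0) + C + o_p\bigl(1+\sqrt{JT}\|\alpha-\alpha^0\|\bigr),
\]
where $A$ is built from $\frac1{JT}(x,z)'M_{f^0}\otimes M_{\lambda^0}$-type projections of $\nabla\delta$, and $C$ collects the score $\frac1{\sqrt{JT}}\sum(\text{projected regressors})e_{jt}$ plus the second-order bias terms. Those bias terms are of order $\kappa,\kappa^{-1}$ and come from the $e\,M\,e'$ and $e'Me$ pieces, exactly as in Moon and Weidner.

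Step 2 then minimizes an asymptotically quadratic function, giving
\[
\sqrt{JT}(\widehat\alpha-\alpha^0) = -(A'\smallW A)^{-1}A'\smallW C + o_p(1).
\]
Plugging $\widehat\alpha$ into Step 3 and using the same expansion for $\widehat\beta$ (with $Z$ dropped, and with the extra term $-\nabla\delta\cdot(\widehat\alpha-\alpha^0)$ in the regression residual) gives a joint linear representation for $(\widehat\alpha,\widehat\beta)$. This representation can be written in the stacked form $(G\bigW G')^{-1}G\bigW\,(\text{score}+\text{bias})$ for appropriately defined $G$ and $\bigW$. A CLT for the score, supplied by Assumptions~\ref{ass:A3}--\ref{ass:A4} with variance $\Omega$, combined with $J/T\to\kappa^2$ for the bias scaling, then delivers the stated normal limit.

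The main obstacle is the uniformity of the first-step expansion over $\alpha$. The Moon--Weidner expansion is stated at a fixed dependent variable, whereas here the dependent variable $\delta(\alpha)$ moves with $\alpha$, and the remainder $r(\alpha)$ enters the nonlinear eigenvalue profiling. I would handle this by bounding the effect of $r(\alpha)$ on the profiled objective through its Frobenius norm: the perturbation of the projectors $M_{\hat\lambda}$ and $M_{\hat f}$ is controlled by spectral norms, which the Frobenius bound dominates. I would also use the preliminary rate to keep all cross terms of order $o_p(1)$ after $\sqrt{JT}$ scaling.
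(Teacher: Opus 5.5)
Your route is the paper's. It has the same steps: a preliminary rate from re-running the consistency argument, and a Step-1 expansion valid uniformly on a $\sqrt{J}$-shrinking neighborhood. In that expansion the paper absorbs $\delta(\alpha)-\delta(\alpha^0)$ into the error, which is equivalent to your known-coefficient regressors plus remainder. Then come a locally quadratic Step-2 objective, the Step-3 expansion, stacking into the $(G\bigW G')^{-1}G\bigW$ form, and the Moon--Weidner limit theory. Two parts of your bookkeeping are off, however, and the first changes the limit you would obtain.

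\textbf{Missing $\kappa B_0$.} You attribute all bias to the second-order $eMe'$ and $e'Me$ pieces and apply a CLT to the score with variance $\Omega$, implicitly centered at zero. That yields only the mean $\kappa^{-1}B_1+\kappa B_2$.

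Assumption~\ref{ass:A4}(iv) allows predetermined components $X^{\rm weak}_k$, $Z^{\rm weak}_m$. In that case the projected score $\frac{1}{\sqrt{JT}}(x^{\lambda f},z^{\lambda f})'\varepsilon$ is not asymptotically centered. Its $k$-th entry contains $-\frac{1}{\sqrt{JT}}{\rm Tr}(P_{f^0}e'X_k)$, whose mean is $-\sqrt{J/T}\,{\rm Tr}(P_{f^0}\Sigma^{X,e}_k)\to-\kappa\, b^{(x,0)}_k$. The reason is that the $O(1/T)$ entries of $P_{f^0}$ couple $X_{k,jt}$ with earlier errors $e_{j\tau}$, $\tau<t$, summed over $j$.

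The correct input is that the score converges in distribution to ${\cal N}(-\kappa\,(b^{(x,0)\prime},b^{(z,0)\prime})',\,\Omega)$. Pushing this mean through the sandwich is exactly where $\kappa B_0$ comes from. Your version is right only for strictly exogenous $X$ and $Z$.

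\textbf{Rates.} The orders in your preliminary step are too optimistic. Comparing the upper bound at $(\lambda^0,f^0)$ with the lower bound that projects out $(\tilde\lambda,\lambda^0)$ leaves terms that do not cancel. With the very bound $\|e\|={\cal O}_p(\sqrt{\max(J,T)})$ you cite, $\frac{1}{JT}{\rm Tr}(e'P_{(\tilde\lambda,\lambda^0)}e)$ is only ${\cal O}_p(J^{-1})$. Likewise, $\frac{1}{JT}{\rm Tr}[(\delta(\widehat\alpha)-\delta(\alpha^0))'P_{(\tilde\lambda,\lambda^0)}e]$ is only ${\cal O}_p(J^{-1/2}\|\widehat\alpha-\alpha^0\|)$.

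Assumption~\ref{ass:con}(v) therefore gives only $\|\widehat\alpha-\alpha^0\|={\cal O}_p(J^{-1/2})$, which is what the paper's preliminary lemma proves. The $\sqrt{JT}$ rate appears only after the local quadratic expansion.

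Even this $J^{-1/2}$ rate requires controlling the $\tilde\gamma_{\widehat\alpha}\cdot Z$ terms. The paper gets $\sqrt{J}\,\tilde\gamma_{\widehat\alpha}={\cal O}_p(1)$ from the Step-2 inequality $\tilde\gamma_{\widehat\alpha}'\smallW_{JT}\tilde\gamma_{\widehat\alpha}\le\tilde\gamma_{\alpha^0}'\smallW_{JT}\tilde\gamma_{\alpha^0}$ and the Moon--Weidner rate at $\alpha^0$. It also needs $(JT)^{-1/2}{\rm Tr}(eX_k')={\cal O}_p(1)$ and $(JT)^{-1/2}{\rm Tr}(eZ_m')={\cal O}_p(1)$, which are implied by Assumption~\ref{ass:A4}.

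Similarly, replacing $e+\delta(\alpha)-\delta(\alpha^0)$ by $e$ in the second-order terms costs ${\cal O}_p(\sqrt{J}\|\alpha-\alpha^0\|)$. On the $\sqrt{J}$-neighborhood this is ${\cal O}_p(1)$, not $o_p(1)$. It must therefore stay in your relative remainder $o_p(1+\sqrt{JT}\|\alpha-\alpha^0\|)$ until the quadratic step has delivered $\sqrt{JT}$-consistency. The preliminary rate alone does not make it negligible.
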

The proof of Theorem~\ref{th:asympt_dist} is provided in the appendix.
Analogous to the least squares estimator in the linear model with interactive fixed effects, there are three
bias terms in the limiting distribution of the LS-MD estimator.
The bias term $\kappa B_0$ is only present if regressors or instruments are pre-determined, that is, if $X_{jt}$ or $Z_{jt}$ are correlated with $e_{j\tau}$
for $t>\tau$ (but not for $t=\tau$, since this would violate weak exogeneity).
A reasonable interpretation of this bias terms thus requires that the index $t$ refers
to time, or has some other well-defined ordering. The other two bias terms
$\kappa^{-1} B_1$ and $\kappa B_2$ are due to heteroscedasticity of the idiosyncratic
error $e_{jt}$ across firms $j$ and markets $t$, respectively. The first and last bias terms are proportional to $\kappa$, and thus are large when
$T$ is small compared to $J$, while the second bias terms is proportional
to $\kappa^{-1}$, and thus is large when $T$ is large compared
to $J$. Note that no asymptotic bias is present if regressors and instruments
are strictly exogenous and errors $e_{jt}$ are homoscedastic.
There is also no asymptotic bias when $R=0$, since then there are no
incidental parameters.
For a more detailed discussion of the asymptotic bias, we again refer to
Bai~\cite*{Bai2009} and Moon and Weidner~\cite*{MoonWeidner2015}.

While the structure of the asymptotic bias terms is analogous to the bias encountered
in linear models with interactive fixed effects, we find that
the structure of the asymptotic
variance matrix for $\widehat \alpha$ and $\widehat \beta$ is analogous to the GMM variance
matrix. The LS-MD estimator
can be shown to be equivalent to the GMM estimator if no factors are present.
In that case the weight matrix
$\bigW$ that appears in Theorem~\ref{th:asympt_dist} can be shown
to be the probability limit of the GMM weight matrix that is implicit in
our LS-MD approach and, thus, our asymptotic variance matrix exactly coincides with the one for GMM (see also Appendix~\ref{app:GMM}).
If factors are present, there is no GMM analog of our estimator, but the only change
in the structure of the asymptotic variance matrix is the appearance of the projectors
$M_{f^0}$ and $M_{\lambda^0}$ in the formulas for $G$, $\Omega$ and
$\bigW$. The presence of these projectors implies that those components
of $X_k$ and $Z_m$ which are proportional to $f^0$ and $\lambda^0$ do not contribute
to the asymptotic variance, that is, do not help in the estimation of $\widehat \alpha$
and $\widehat \beta$. This is again analogous the standard fixed effect setup in
panel data, where time-invariant components do not contribute
to the identification of the regression coefficients.

Using the explicit expressions for the
asymptotic bias and variance of the LS-MD estimator, one
can provide estimators for this asymptotic bias and variance.
By replacing
the true parameter values ($\alpha^0$, $\beta^0$, $\lambda^0$, $f^0$) by
the estimated parameters ($\widehat \alpha$, $\widehat \beta$, $\widehat \lambda$, $\widehat f$),
the error term ($e$) by the residuals ($\widehat e$), and population values
by sample values it is easy
to define estimators $\widehat B_0$, $\widehat B_1$, $\widehat B_2$,
$\widehat G$, $\widehat \Omega$ and $\bigWest$
for $B_0$, $B_1$, $B_2$, $G$, $\Omega$ and $\bigW$.
This is done explicitly in appendix \ref{app:BiasVarEst}.

\begin{theorem}
   \label{th:biascorrection}
   Let the assumptions of Theorem \ref{th:asympt_dist} and Assumption \ref{ass:A5}
   be satisfied. In the limit $J,T \rightarrow \infty$ with $J/T \rightarrow \kappa^2$, $0<\kappa<\infty$
   we then have $\widehat B_1 = B_1 + o_p(1)$, $\widehat B_2 = B_2 + o_p(1)$,
   $\widehat G = G + o_p(1)$,  $\widehat \Omega = \Omega + o_p(1)$
   and $\bigWest = \bigW + o_p(1)$.
   If in addition
   the bandwidth parameter $h$, which enters in the definition of $\widehat B_0$,
   satisfies $h \rightarrow \infty$ and $h^5/T \rightarrow 0$, then
   we also have $\widehat B_0 = B_0 + o_p(1)$.
\end{theorem}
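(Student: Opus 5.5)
The plan is to prove consistency of the plug-in estimators by showing that the replacements made in their construction produce $o_p(1)$ errors. The estimators $\widehat B_0,\widehat B_1,\widehat B_2,\widehat G,\widehat\Omega,\bigWest$ replace $(\alpha^0,\beta^0,\lambda^0,f^0,e)$ by $(\widehat\alpha,\widehat\beta,\widehat\lambda,\widehat f,\widehat e)$. Each target is a smooth function of a finite number of sample averages. These averages involve $X_k$, $Z_m$, $\partial\delta/\partial\alpha$, the projectors $M_{\lambda^0}$, $M_{f^0}$, $P_{\lambda^0}$, $P_{f^0}$, and squares or cross products of $e_{jt}$. By the continuous mapping theorem it therefore suffices to show that each sample building block evaluated at the estimates differs from its counterpart at the true values by $o_p(1)$. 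The population limits then follow from the definitions of $G,\Omega,\bigW,B_0,B_1,B_2$ in Appendix~\ref{app:DetailsTh}, which are themselves probability limits of these infeasible sample quantities under Assumptions~\ref{ass:A2}--\ref{ass:A4}.

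The first step collects convergence rates from the proof of Theorem~\ref{th:asympt_dist}. We have $\widehat\alpha-\alpha^0={\cal O}_p(1/\sqrt{JT})$ and $\widehat\beta-\beta^0={\cal O}_p(1/\sqrt{JT})$. Using the expansion of $\delta(\alpha)$ in Assumption~\ref{ass:A4}(i), the step-3 principal components problem yields $\|P_{\widehat\lambda}-P_{\lambda^0}\|={\cal O}_p(J^{-1/2})$ and $\|P_{\widehat f}-P_{f^0}\|={\cal O}_p(T^{-1/2})$, as in Moon and Weidner~\cite*{MoonWeidner2015}. Normalizing $\widehat\lambda,\widehat f$ against $\lambda^0,f^0$ via a rotation $H$, we get row-wise bounds $\frac1J\sum_j\|\widehat\lambda_j-H\lambda^0_j\|^2=o_p(1)$ and the analogous bound for $f$. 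For the residuals, $\widehat e-e=-(\widehat\beta-\beta^0)\cdot X+[\delta(\widehat\alpha)-\delta(\alpha^0)]-(\widehat\lambda\widehat f'-\lambda^0f^{0\prime})$. I would show this difference is small in Frobenius norm, $\frac1{JT}\|\widehat e-e\|_F^2=o_p(1)$, and additionally $\max_{j,t}|\widehat e_{jt}-e_{jt}|=o_p(1)$ under the moment bounds in Assumption~\ref{ass:A5}.

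With these rates, $\widehat G$, $\bigWest$ and $\widehat\Omega$ are handled directly. $\widehat G$ and $\bigWest$ are quadratic forms in $(x,z,\partial\delta/\partial\alpha)$ sandwiched by $M_{\widehat f}\otimes M_{\widehat\lambda}$, so the projector rates together with Assumption~\ref{ass:con}(i) give $o_p(1)$ differences. For $\widehat\Omega$, which involves $\frac1{JT}\sum \widehat e_{jt}^2 \tilde x_{jt}\tilde x_{jt}'$, I expand $\widehat e^2-e^2=2e(\widehat e-e)+(\widehat e-e)^2$ and bound the cross terms by Cauchy--Schwarz using bounded higher moments. $\widehat B_1$ and $\widehat B_2$ are treated the same way, as averages of the form $\frac1{JT}\sum_{j,t}\widehat e_{jt}^2(\cdot)$ weighted by projector elements like $(P_{\widehat\lambda})_{jj}$ and $(\widehat f'\widehat f)^{-1}$. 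Here the needed facts are that $\max_j|(P_{\widehat\lambda}-P_{\lambda^0})_{jj}|\to0$ and that the loadings are bounded.

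The main obstacle is $\widehat B_0$, which is a kernel (truncated) sum over lags $t<\tau\le t+h$ of $\widehat e_{jt}\,\tilde X_{j\tau}$-type products. Its error consists of three parts. The first is a truncation bias from lags beyond $h$; this vanishes as $h\to\infty$ by the mixing/summability condition in Assumption~\ref{ass:A5}. The second is the variance of the infeasible truncated estimator, which is ${\cal O}(h/T)$ times bounded terms. The third is the estimation error from using $\widehat e$ and the estimated projectors, which accumulates over $h$ lags. I would bound each lag's contribution by ${\cal O}_p(\max_{jt}|\widehat e_{jt}-e_{jt}|)$ plus projector errors of order ${\cal O}_p(T^{-1/2})$, each multiplied by $h$. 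The sharper restriction $h^5/T\to0$ should arise when controlling the interaction of $\widehat e-e$, in particular its low-rank part $\widehat\lambda\widehat f'-\lambda^0f^{0\prime}$, with the lagged regressors uniformly over the $h$ lags. I would try to show that this term is ${\cal O}_p(h^{5/2}T^{-1/2})$ via a fourth-moment bound, which is exactly what $h^5/T\to0$ kills.
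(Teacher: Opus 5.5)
Your overall architecture (plug-in, continuous mapping, the projector rates $\|M_{\widehat\lambda}-M_{\lambda^0}\|,\|M_{\widehat f}-M_{f^0}\|={\cal O}_p(J^{-1/2})$, and $\sqrt{JT}$-consistency of $\widehat\alpha$) is the right one. However, the step for $\widehat G$ has a real hole, and it is precisely the one step that is new relative to the linear model.

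$\widehat G$ is built from $\widehat g=-{\rm vec}(\nabla_l\delta(\widehat\alpha))$, i.e. the gradient evaluated at the \emph{estimate}. Projector rates and Assumption~\ref{ass:con}(i) do not control $\nabla_l\delta(\widehat\alpha)-\nabla_l\delta(\alpha^0)$. Assumption~\ref{ass:con}(i) is a Lipschitz bound on $\delta$ itself. Assumption~\ref{ass:A4}(i) is only a first-order expansion at $\alpha^0$ and says nothing about continuity of the gradient.

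Controlling that difference is exactly what Assumption~\ref{ass:A5} is for. Together with $\sqrt{JT}\|\widehat\alpha-\alpha^0\|={\cal O}_p(1)$ it gives $\|\widehat g-g\|=o_p(\sqrt{JT})$. Then $\frac1{JT}\widehat g'\widehat x^{\lambda f}-\frac1{JT}g'x^{\lambda f}=o_p(1)$ follows from Cauchy--Schwarz, $\|X_k\|_F={\cal O}_p(\sqrt{JT})$ and the projector rates.

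You never use Assumption~\ref{ass:A5} this way. Instead you cite it for ``moment bounds'' and a ``mixing/summability condition'', neither of which it contains. The moment bounds come from Assumption~\ref{ass:A4}(iii)--(iv). The truncation bias in $\widehat B_0$ vanishes because of the geometric decay of the coefficients $c_{k,j\tau}$, $d_{m,j\tau}$ in Assumption~\ref{ass:A4}(iv).

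The residual and $\widehat B_0$ steps are also unsupported as written. The bound $\max_{j,t}|\widehat e_{jt}-e_{jt}|=o_p(1)$ does not follow from the assumptions. The term $\delta(\widehat\alpha)-\delta(\alpha^0)$ is controlled only in Frobenius norm, which bounds its entries by ${\cal O}_p(1)$, not $o_p(1)$. No uniform (row-wise) consistency of $\widehat\lambda_j$, $\widehat f_t$ is available either. Even granting the bound, multiplying an $o_p(1)$ per-lag error by $h\to\infty$ does not give $o_p(1)$. Your ${\cal O}_p(h^{5/2}T^{-1/2})$ bound is reverse-engineered from the rate condition rather than derived.

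The clean way out, which is the paper's route, is to stay in Frobenius norm:
\begin{itemize}
\item By Assumption~\ref{ass:con}(i), $\|\delta(\widehat\alpha)-\delta(\alpha^0)\|_F={\cal O}_p(1)$. This is the same order as $\|(\widehat\beta-\beta^0)\cdot X\|_F$, so $\widehat e$ has exactly the structure of the residuals in the linear interactive-effects model.
\item The results of Moon and Weidner then give $\widehat\Omega=\Omega+o_p(1)$ and consistency of $\widehat b^{(x,i)},\widehat b^{(z,i)}$ for $i=0,1,2$; for $i=0$ this requires $h\to\infty$ and $h^5/T\to0$.
\item $\bigWest$ follows from the projector rates, and $\widehat G$ from the projector rates plus Assumption~\ref{ass:A5}.
\item Finally, $\widehat B_i$ follows by continuity, since $G\bigW G'>0$.
\end{itemize}
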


The proof is again given in the appendix.
Theorem~\ref{th:biascorrection} motivates the introduction of the bias corrected estimator
\begin{align}
   \left( \begin{array}{c} \widehat \alpha^* \\[2mm]
                           \widehat \beta^* \end{array} \right)
            &=   \left( \begin{array}{c} \widehat \alpha \\[2mm]
                           \widehat \beta \end{array}  \right)
                  - \frac 1 T  \, \widehat B_0
                    - \frac 1 J  \, \widehat B_1
                      - \frac 1 T  \, \widehat B_2   \; .
           \label{BCest}
\end{align}
Under the assumptions of Theorem \ref{th:biascorrection} the bias corrected estimator
is asymptotically unbiased, normally distributed, and has asymptotic variance
$\left(G  \bigW  G' \right)^{-1}  G  \bigW  \Omega
          \bigW G'  \left(G  \bigW  G' \right)^{-1}$,
which is consistently estimated by
$\left(\widehat G  \bigWest  \widehat G' \right)^{-1}  \widehat G  \bigWest  \widehat \Omega
          \bigWest \widehat G'  \left(\widehat G  \bigWest  \widehat G' \right)^{-1}$. These results allow inference on $\alpha^0$ and $\beta^0$.

From the standard GMM analysis it is know that the $(K+M) \times (K+M)$ weight
matrix $\bigW$ which minimizes the asymptotic variance is
given by $\bigW=c \, \Omega^{-1}$, where $c$ is an arbitrary scalar.
If the errors $e_{jt}$ are homoscedastic with variance $\sigma_e^2$ we
have
$\Omega=\sigma_e^2 \plim_{J,T \rightarrow \infty}  \frac 1 {JT}
          \left(x^{\lambda f}, z^{\lambda f}\right)'
                \left(x^{\lambda f}, z^{\lambda f}\right)$, 
 with  $x^{\lambda f}$ and $z^{\lambda f}$ defined in Appendix~\ref{app:DetailsTh}.
In this case
it is straightforward to show that the optimal
$\bigW =\sigma_e^2 \, \Omega^{-1}$ is
attained by choosing
\begin{align}
   \smallW_{JT}= \frac 1 {JT} z^{\prime} M_{x^{\lambda f}} z \; .
   \label{OptimalW}
\end{align}
Under homoscedasticity this choice of weight matrix
is optimal in the sense that it minimizes the asymptotic variance of our LS-MD estimator, but nothing is known about the efficiency bound in the presence
of interactive fixed effects, that is, a different alternative estimator could
theoretically have even lower asymptotic variance.

The unobserved factor loading $\lambda^0$ and factor $f^0$ enter into
the definition of $x^{\lambda f}$ and thus also into the optimal $\smallW_{JT}$
in \eqref{OptimalW}. A consistent
estimator for the optimal $\smallW_{JT}$ can be obtained by estimating
$\lambda^0$ and $f^0$ in a first stage LS-MD estimation, using an arbitrary
positive definite weight matrix.

Under heteroscedasticity of $e_{jt}$ there are in general not enough degrees
of freedom in  $\smallW_{JT}$ to attain the optimal $\bigW$.
The reason for this is that we have chosen the
first stage of our estimation procedure to be
an ordinary least squares step, which is optimal under
homoscedasticity but not under heteroscedasticity.
By generalizing the first stage optimization to weighted least squares one
would obtain the additional degrees of freedom to attain the optimal
$\bigW$ also under heteroscedasticity, but in the present paper
we will not consider
this possibility further.

\section{Monte Carlo Simulations}
\label{sec:MC}

We consider a model with only one regressors
$X_{jt}=p_{jt}$, which we refer to as price.
The data generating process for mean utility and price is given by
\begin{align}
   \delta_{jt} &=
         \beta^0 \, p_{jt} \, + \, \lambda^0_j \, f^0_t \, + \, e_{jt} ,
   \nonumber \\
   p_{jt} &= \max\left( 0.2, \;
   1 + \tilde p_{jt} + \lambda^0_j \, f^0_t \right)  ,
   \label{MC_DGP}
\end{align}
where $\lambda^0_j$, $f^0_t$, $e_{jt}$ and $\tilde p_{jt}$ are mutually
independent and are all independent and
identically distributed across $j$ and $t$ as ${\cal N}(0,1)$.
In the data generating process the number of factors is $R=1$.
For the number of factors used in the estimation
procedure, $R_{\rm EST}$, we consider the correctly
specified case $R_{\rm EST}=R=1$, the misspecified
case  $R_{\rm EST}=0$, and the case where the number
of factors is overestimated  $R_{\rm EST}=2$.
We have truncated the data generating process
 for price so that $p_{jt}$ takes no values smaller than
0.2.

The market shares are computed from the mean utilities according
to equation \eqref{choiceprobs} and \eqref{DefShares}, where we
assume a normally distributed random coefficient on price $p_{jt}$,
i.e. $v \sim {\cal N}(0,\alpha^2)$.
We chose the parameters of the model to be
 $\beta^0=-3$ and $\alpha^0=1$. These parameters corresponds
to a distribution of consumer tastes where more than $99\%$ of consumers prefer low prices.

Although the regressors are strictly exogenous with respect to $e_{jt}$, we still
need an instrument to identify $\alpha$. We choose $Z_{jt} = p_{jt}^2$,
the squared price. Thus, the number of instruments is $M=1$.
We justify the choice of squared price as an instrument in
subsection~\ref{sec:mc:verify} by verifying the instrument
relevance Assumption~\ref{ass:con}$(v)$ is satisfied for our simulation design.

Simulation results for three different samples sizes
$J=T=20$, $50$ and $80$, and three different
choices for the number of factors in estimation
$R_{\rm EST}=0$, $1$, and $2$
are presented in Table~\ref{tab:MCnew1}.
We find that the estimators for $\widehat \alpha$
and $\widehat \beta$ to be significantly biased when $R_{\rm EST}=0$
factors are chosen in the estimation.
This is because the factor and factor loading enter into the distribution
of  the regressor $p_{jt}$ and the instrument $Z_{jt}$, which makes them
endogenous with respect to the total unobserved error
$\xi^0_{jt} = \lambda^0_j \, f^0_t \, + \, e_{jt}$, and results in
the estimated model
with $R_{\rm EST}=0$ to be misspecified.
The standard errors of the estimators
are also much larger for $R_{\rm EST}=0$
than for $R_{\rm EST}>0$, since the variation of the total
unobserved error $\xi^0_{jt}$ is larger than the variation of $e_{jt}$,
which is the residual error  after accounting for the factor structure.

\begin{table}[tb]
   \center
   \begin{tabular}{ll|rr|rr|rr}
      & & \multicolumn{2}{|c|}{$R_{\rm EST} = 0$}
          & \multicolumn{2}{|c|}{$R_{\rm EST} = 1$}
          & \multicolumn{2}{|c}{$R_{\rm EST} = 2$}
      \\
      J,T & statistics &  \multicolumn{1}{|c}{$\widehat \alpha$} &
                                   \multicolumn{1}{c|}{$\widehat \beta$} &
                                \multicolumn{1}{|c}{$\widehat \alpha$} &
                                   \multicolumn{1}{c|}{$\widehat \beta$} &
                                \multicolumn{1}{|c}{$\widehat \alpha$} &
                                   \multicolumn{1}{c}{$\widehat \beta$}
      \\ \hline \hline
       20,20  & bias
          &  0.4255 & -0.3314
           &   0.0067& -0.0099
           &  0.0024& -0.0050  \\
              & std
         & 0.1644 &  0.1977
           &  0.0756&  0.0979
           & 0.0815&  0.1086   \\
              & rmse
           & 0.4562 &  0.3858
           &    0.0759&  0.0983
           & 0.0815&  0.1086
      \\ \hline
       50,50  & bias
           &  0.4305& -0.3178
           &  0.0005& -0.0012
           &   0.0022& -0.0024  \\
              & std
           &   0.0899&  0.0984
           &  0.0282&  0.0361
           &  0.0293&  0.0369  \\
              & rmse
           &    0.4398&  0.3326
           &   0.0282&  0.0361
           &   0.0293&  0.0369
      \\ \hline
       80,80  & bias
           &  0.4334& -0.3170
           & -0.0009&  0.0010
           &  0.0003& -0.0003   \\
              & std
           &   0.0686&  0.0731
           &    0.0175&  0.0222
           & 0.0176&  0.0223   \\
              & rmse
           &   0.4388&  0.3253
           &    0.0175&  0.0222
           & 0.0176&  0.0223
    \end{tabular}
    \caption{\label{tab:MCnew1}
           \footnotesize Simulation results for the data generating process
           \eqref{MC_DGP}, using 1000 repetitions.
            We report the bias, standard errors (std), and square roots of the
            mean square errors (rmse) of the LS-MD estimator
            $(\widehat \alpha,\widehat \beta)$. The true number of factors
            in the process is $R=1$, but we use
            $R_{\rm EST}=0,1$, and $2$ in the estimation.
              }
\end{table}

For the correctly specified case $R_{\rm EST}=R=1$
we find the biases of the estimators $\widehat \alpha$
and $\widehat \beta$ to be negligible relative to the standard errors.
For $J=T=20$ the absolute value of the biases is about one tenth
the standard errors,
and the ratio is even smaller for the larger sample sizes.
As the sample size increases from  $J=T=20$
to  $J=T=50$ and  $J=T=80$ one finds the standard error of the estimators
to decrease at the rate $1/\sqrt{JT}$, consistent with our asymptotic theory.

The result for the case $R_{\rm EST}=2$
are very similar to those for $R_{\rm EST}=1$,
that is, overestimating the number of factors does not affect
the estimation quality much in our simulation, at least as long as
$R_{\rm EST}$ is small relative to the sample size $J$, $T$.\footnote{%
In pure factor models consistent inference
procedures on the number of factors are known, e.g.
Bai and Ng \cite*{BaiNg2002}, Harding \cite*{Harding2007}, Onatski \cite*{Onatski2010},
and Ahn and Horenstein~\cite*{AhnHorenstein2013}. 
In our model the number of factor can be estimated by applying those pure factor model techniques
to  the residuals $\widehat \xi = \delta(\widehat \alpha)- \widehat \beta \cdot X$,
where $\widehat \alpha$ and $\widehat \beta$ are LS-MD estimator obtained with $R_{\rm EST} \geq R$.
Showing consistency of this procedure, however, goes beyond the scope of the current paper.
}
The biases for the estimators found for $R_{\rm EST}=2$ are still negligible
and the standard errors are about $10 \%$ larger for $R_{\rm EST}=2$
than for $R_{\rm EST}=1$ at $J=T=20$, and even less than $10 \%$ larger
for the larger sample sizes. The result
that choosing $R_{\rm EST}>R$
has only a small effect on the estimator
 is not covered by the asymptotic theory in this paper, where we assume
$R_{\rm EST}=R$, but is consistent with the analytical results
found in Moon and Weidner~\cite*{MoonWeidner2015b} for
the linear model with interactive fixed effects.

We have chosen a data generating process for our simulation where
regressors and instruments are strictly exogenous (as opposed
to pre-determined) with respect to $e_{jt}$, and where
the error distribution $e_{jt}$ is homoscedastic. According
to our asymptotic theory there is therefore no asymptotic bias in the estimators
$\widehat \alpha$ and $\widehat \beta$, which is consistent with the results in
Table~\ref{tab:MCnew1}. The simulation results
for the bias corrected estimators $\widehat \alpha^*$ and $\widehat \beta^*$
are reported in  Table~\ref{tab:MCnew2} in the appendix, but there is virtually no
effect from bias correction here, that is, the results in Table~\ref{tab:MCnew1}
and Table~\ref{tab:MCnew2} are almost identical.

Table~\ref{tab:MCnew2} also reports the average estimated standard error based on our asymptotic variance
estimator, as well as the empirical size of a nominal $5 \%$ t-test for the hypothesis
that the respective parameter equals its true value. Those are not particularly interesting for $R_{\rm EST}<R$,
where the model is badly misspecified. For $R_{\rm EST} \geq R$
we find that for small sample sizes ($J=T=20$)  our standard
errors underestimate the dispersion of the estimator distributions by around $20 \%$, and the t-test is 
oversized accordingly. For larger sample sizes ($J=T=80$) our standard errors are still a bit too small, but only by
around $5 \%$ or less, thus resulting in empirical sizes quite close to the nominal size.

\subsection{Remarks: Instrument Choice}
\label{sec:mc:verify}

 For the special case where there is only one normally distributed random  coefficient
attached to the regressor $p_{jt}$,
one can write equation \eqref{DefShares} as
\begin{align}
   s_{jt}(\alpha,\delta_t,X_t) &=  \frac{1}    {\sqrt{2 \pi} \alpha}
  \int \, \frac {\exp\left( \delta_{jt} + p_{jt} v \right) }
                          { 1 + \sum_{l=1}^J \, \exp\left( \delta_{lt} + p_{lt} v \right) }  \, \exp\left( - \frac {v^2} {2 \alpha^2} \right) \,
    d v .
   \label{DefSspecial}
\end{align}
For $x\geq 0$ we have the general inequalities
$ 1 \geq (1+x)^{-1} \geq 1 -x $. Applying this to \eqref{DefSspecial}
with $x=\sum_{l=1}^J \, \exp\left( \delta_{lt} + p_{lt} v \right)$
one obtains  $s^{\rm up}_{jt}(\alpha,\delta_t,X_t) \geq  s_{jt}(\alpha,\delta_t,X_t)
   \geq  s^{\rm low}_{jt}(\alpha,\delta_t,X_t)$, where
\begin{align}
    s^{\rm up}_{jt}(\alpha,\delta_t,X_t)
    &=  \frac{1}    {\sqrt{2 \pi} \alpha}
  \int \,  \exp\left( \delta_{jt} + p_{jt} v \right)
                       \, \exp\left( - \frac {v^2} {2 \alpha^2} \right) \,
    d v
  \nonumber \\
     &= \exp\left(  \delta_{jt} + \alpha^2 p_{jt}^2 /2 \right) ,
  \nonumber \\
   s^{\rm low}_{jt}(\alpha,\delta_t,X_t)
     &=   \frac{1}    {\sqrt{2 \pi} \alpha}
  \int \,  \exp\left( \delta_{jt} + p_{jt} v \right)
           \left[1 - \sum_{l=1}^J \, \exp\left( \delta_{lt} + p_{lt} v \right)
           \right]  \, \exp\left( - \frac {v^2} {2 \alpha^2} \right) \,
    d v
  \nonumber \\
    &=  s^{\rm up}_{jt}(\alpha,\delta_t,X_t)
       \bigg[ 1 -
      \underbrace{ \sum_{l=1}^J  \exp\left(  \delta_{lt} + \alpha^2 p_{lt}^2 /2
               + \alpha^2  p_{jt} p_{lt} \right) }_{ = \nu_{jt}(\alpha,\delta_t)}  \bigg] .
    \label{s_up_low}
\end{align}
 Here, the integrals over $v$ that appear in the upper and lower bound
are solvable analytically, so that we obtain convenient expressions for
$s^{\rm up}_{jt}(\alpha,\delta_t,X_t)$ and $s^{\rm low}_{jt}(\alpha,\delta_t,X_t)$.

Consider the specification~\eqref{MC_DGP} for $\beta$
negative and large (in absolute value) relative to $\alpha^2$.
 Then $\delta_{jt}$ is also negative
and large in absolute value, which implies that
the $\nu_{jt}=\nu_{jt}(\alpha,\delta_t)$ defined in \eqref{s_up_low} is small.
For $\nu_{jt} \ll 1$, as here,
the  lower and upper bounds are almost identical, which implies
$s_{jt}(\alpha,\delta_t,X_t) \approx
     \exp\left(  \delta_{jt} + \alpha^2 p_{jt}^2 /2 \right)$,
 where $\approx$ means almost equal under that
approximation. Solving for the mean utility yields
$\delta_{jt}(\alpha,s_t,X_t)  \approx \log    s_{jt}(\alpha,\delta_t,X_t)
     -  \alpha^2 p_{jt}^2 /2$. The difference between
     $\delta_{jt}(\alpha,s_t,X_t)$ and $\delta^0_{jt} = \delta_{jt}(\alpha^0,s_t,X_t)$
can then be approximated by
\begin{align}
     \delta_{jt}(\alpha,s_t,X_t) - \delta^0_{jt}
     \; &\approx  \; - \frac{p_{jt}^2} 2 \; \left[\alpha^2- (\alpha^0)^2 \right] .
     \label{delta_approx}
\end{align}
This shows that whenever the approximation $\nu_{jt} \ll 1$ is justified, then
the squared price $p_{jt}^2$ is a valid instrument to identify $\alpha$.
More precisely, equation \eqref{delta_approx}
implies that the LS-MD estimator with instrument $p_{jt}^2$
is approximately equivalent to the least squares estimator
for the linear model with outcome variable
$Y_{jt} = \beta p_{jt} + \alpha^2 p_{jt}^2 + \lambda_j' f_t + e_{jt}$.
Consistency of this least squared estimator
for $\beta$ and $\alpha^2$ in the presence of
the parameters $\lambda_j$ and $f_t$
is discussed in Bai \cite*{Bai2009} and Moon and Weidner \cite*{MoonWeidner2015}.

We have thus shown that
 $\nu_{jt} \ll 1$ is a sufficient condition for validity of the instrument $p_{jt}^2$.
However,
for the data-generating process
with parameters $\alpha^0=1$ and $\beta^0=-3$ used in
the Monte Carlo simulation
this is not a good approximation ---
when calculating $\nu_{jt}$ in that setup one
typically finds values much larger than one.
Therefore, we next confirm by numerical methods
that $p_{jt}^2$ is also a valid instrument when $\nu_{jt}\ll 1$ does not hold.

\subsubsection*{The Instrument Relevance Condition: Some Numerical Evidence}

We want to verify the instrument relevance Assumption~\ref{ass:con}$(v)$
for the data generating process
\eqref{MC_DGP} in the Monte Carlo Simulations with
parameters $\beta^0=-3$, and $\alpha^0=1$.
For this purpose we define
 \begin{align}
      \rho_{\rm IV}(\alpha,\beta)
      &=
       \frac{ \left[ \ft 1 {JT} \Delta \xi_{\alpha,\beta}' \, (x,z) \right]
        \left[ \ft 1 {JT} (x,z)' (x,z) \right]^{-1}
        \left[ \ft 1 {JT} (x,z)' \, \Delta \xi_{\alpha,\beta}  \right] }
       {   \ft 1 {JT} \Delta \xi_{\alpha,\beta}'  \Delta \xi_{\alpha,\beta}   }  ,
    \nonumber \\
     \rho_{\rm F}(\alpha,\beta)
       &= \frac{  \max_{\lambda \in \mathbbm{R}^{J \times R}} \left[
         \ft 1 {JT} \Delta \xi_{\alpha,\beta}'
     \left( \mathbbm{1}_T \otimes P_{(\lambda,\lambda^0)}\right)
       \Delta \xi_{\alpha,\beta} \right] }
     {   \ft 1 {JT} \Delta \xi_{\alpha,\beta}'  \Delta \xi_{\alpha,\beta}  }  ,
   \nonumber \\
     \Delta \rho(\alpha,\beta) &=   \rho_{\rm IV}(\alpha,\beta)  -  \rho_{\rm F}(\alpha,\beta) .
   \label{DefRho}
\end{align}
$\rho_{\rm IV}(\alpha,\beta)$ is the amount of $\Delta \xi_{\alpha,\beta}$ explained
by the instruments and regressors relative to the total variation of $\Delta \xi_{\alpha,\beta}$,
i.e. the relative explanatory power of the instruments.
$\rho_{\rm F}(\alpha,\beta)$ is the maximum amount of $\Delta \xi_{\alpha,\beta}$
explained
by $R$ factor loadings relative the total variation of $\Delta \xi_{\alpha,\beta}$, i.e.
the relative explanatory power of the factors. 
$\rho_{\rm IV}(\alpha,\beta)$ and $\rho_{\rm F}(\alpha,\beta)$
take values betweens 0 and 1.

The difference between the explanatory
power of the instruments and regressors and the explanatory
power of the factors is
given by
$\Delta \rho(\alpha,\beta)$.
Assumption~\ref{ass:con}$(v)$ requires that $\Delta \rho(\alpha,\beta)>0$
for all $\alpha \in {\cal B}_\alpha$ and $\beta \in \mathbbm{R}^K$.

Figure~\ref{fig:rho_iv} contains plots of
$\rho_{\rm IV}(\alpha,\beta)$,  $\rho_{\rm F}(\alpha,\beta)$
    and $ \Delta \rho(\alpha,\beta)$
    as a function of $\alpha$ and $\beta$
for one particular draw of the data generating process with $J=T=80$.
The sample
size is sufficiently large that for different draws the plots in Figure~\ref{fig:rho_iv}
look essentially identical.\footnote{
The appendix contains additional details on the numerical calculation of $\rho_F(\alpha,\beta)$.
}
 Although the data generating process only contains
one factor, we used $R=2$ factors in the calculation of $\rho_{\rm F}(\alpha,\beta)$ and $ \Delta \rho(\alpha,\beta)$
in Figure~\ref{fig:rho_iv}, in order to verify Assumption~\ref{ass:con}$(v)$
also for the case where the number of factors is overestimated
(denoted $R_{\rm EST}$=2 above) --- since
$\rho_{\rm F}(\alpha,\beta)$ is an increasing function of $R$, we thus
also verify the conditions of $R=1$.

For the given draw and
within the examined parameter range one finds that $\rho_{\rm IV}(\alpha,\beta)$
varies between
$0.69$ and $1.00$,
$\rho_{\rm F}(\alpha,\beta)$ varies between $0.34$
and $0.87$, and $\Delta \rho(\alpha,\beta)$ varies between
$0.03$ and $0.49$, in particular
$\Delta \rho(\alpha,\beta)>0$, which is what we wanted to verify.

The variation in $\Delta \rho(\alpha,\beta)$ in this example is mostly
driven by the variation in $\rho_{\rm F}(\alpha,\beta)$, since
$\rho_{\rm IV}(\alpha,\beta)$ for the most part is quite close to one, that is, the explanatory power of the instruments and regressors is very large. The
analytical approximation above showed that for $\nu_{jt} \ll 1$
the regressor $p_{jt}$ and the instrument $p_{jt}^2$
perfectly predict $\Delta \xi_{\alpha,\beta}$, that is, we have  $\rho_{\rm IV}(\alpha,\beta) \approx 1$
under that approximation. Our numerical result now shows that
$p_{jt}^2$  can be a sufficiently powerful instrument
also outside the validity range of this approximation.

\section{Empirical illustration: estimation of demand for new automobiles}

As an illustration of our procedure, we estimate an aggregate random
coefficients logit model
of demand for new automobiles, modeled after the analysis in BLP (1995).
We compare specifications with and without factors, and with and without
price endogeneity. Throughout, we allow for one
normally-distributed random coefficient, attached to price.\footnote{
In principle, multiple random coefficients could be accommodated in a straightforward manner; as this application is primarily illustrative, we do not consider this here.}

For this empirical illustration, we use the same data as was used in BLP
(1995), which are new automobile sales from 1971-1990.\footnote{
 In such a setting, where we have a single national market
  evolving over time, we can interpret
$\lambda_j$ as (unobserved) national advertising for
  brand $j$, which may be roughly constant across time,
and  $f_t$ represents the effectiveness or ``success'' of the advertising,
which varies over time.   Indeed, for the automobile sector (which is the
topic of our empirical example), the dollar amount of national brand-level
advertising does not
vary much across years, but the success of the ad campaign does vary.
}
However, our
estimation procedure requires a balanced panel for the principal components
step.   Since there is substantial entry and exit of individual car models,
we aggregate up to manufacturer-size level, and assume that consumers
choose between aggregate composites of cars.\footnote{
This resembles the treatment in Esteban and Shum's
\citeyear{esteban_shum1} empirical study of the new
and used car markets, which likewise required a balanced panel.
}   Furthermore, we also reduce
our sample window to the sixteen years 1973-1988.  In Table \ref{tab:summary_stats}, we list the
23 car aggregates employed in our analysis, along with the across-year
averages of the variables.

Except from the aggregation, our variables are the same as in BLP.
Market share is given by total sales divided by the number of households in that year.
Price is measured in \$1000 of 1983/84 dollars. Our unit for ``horse
power over weight'' (hp/weight)
is 100 times horse power over pound. ``Miles per dollar'' (mpd) is obtained from
miles per gallons divided by real price per gallon, and measured in miles
over 1983/84 dollars. Size is given by length times width, and measured in
$10^{-4} \, {\rm inch}^2$.

We construct instruments using the idea of Berry \cite*{Berry1994}.
The instruments for a particular aggregated model and year are given by the
averages of hp/weight, mpd and size, over all cars produced by different
manufactures in the same year.
As the weight matrix in the second step of the LS-MD procedure
we use
$\smallW_{JT}= \frac 1 {JT} z^{\prime} M_{x} z$,
which is the optimal weight matrix under homoscedasticity
of $e_{jt}$ and for $R=0$.\footnote{
 We do not change the weight matrix when
estimating specifications with $R=1$, because we do not want
differences in the results for different
values of $R$ to be attributed to the change in
$\smallW_{JT}$.

We include a constant regressor in the model,
although this is a ``low-rank'' regressor, which is
ruled out by our identification and consistency
assumptions. However, as discussed in a footnote above
the inclusion of a low-rank regressor does not hamper
the identification
and estimation of the regression coefficients of the other
  (``high-rank'') regressors.
One certainly wants to include a constant regressor
when estimating the model with no factors ($R=0$),
so to make results easily comparable we include it
in all our model specifications.}

\paragraph{Results.}
Table \ref{tab:main_results} contains estimation results from four
specifications of the model.
In specification~A, prices are considered exogenous (wrt $e_{jt}$), but
one factor is present, which captures some degree of price endogeneity
(wrt. $\xi_{jt}$). Specification~B also contains one factor,
but treats prices as endogenous, even conditional on the factor.
Specification~C corresponds to the BLP (1995) model, where
prices are endogenous, but no factor is present.
Finally, in specification~D, we treat prices as
exogenous, and do not allow for a factor. This final specification is clearly
unrealistic, but is included for comparison with the other
specifications.
In
table \ref{tab:main_results} we report the bias corrected LS-MD estimator
(this only makes a difference for specification A and B), which
accounts for bias due to heteroscedasticity in the error terms,
and due to pre-determined regressors
(we choose bandwidth $h=2$ in the construction of $\widehat B_0$).
The estimation results without bias correction are reported
in table \ref{tab:no_bias_correction}. It turns out,
that it makes not much difference, whether
the LS-MD estimator, or its bias corrected version are used. The t-values
of the bias corrected estimators are somewhat larger, but apart from
the constant, which is insignificant anyways, the bias correction changes
neither the sign of the coefficients nor the conclusion whether the coefficients
are significant at $5\%$ level.

\begin{table}[tb]
   \centering
  \begin{tabular}{l|rl|rl|rl|rl}
                & \multicolumn{8}{|c}{Specifications:}
            \\ \cline{2-9}
                & \multicolumn{2}{|l}{A: \; $R=1$}
                & \multicolumn{2}{|l}{B: \; $R=1$}
                & \multicolumn{2}{|l}{C: \; $R=0$}
                & \multicolumn{2}{|l}{D: \; $R=0$}
           \\
                & \multicolumn{2}{|l}{\; \; \; exogenous p}
                & \multicolumn{2}{|l}{\; \; \; endogenous p}
                & \multicolumn{2}{|l}{\; \; \; endogenous p}
                & \multicolumn{2}{|l}{\; \; \; exogenous p}
             \\ \hline
        price
                  & -4.109 & (-3.568)
                  & -3.842 & (-4.023)
                  & -1.518 & (-0.935)
                  & -0.308 & (-1.299)
        \\
        hp/weight
                  &  0.368  & (1.812)
                  &  0.283  & (1.360)
                  &  -0.481 & (-0.314)
                  &  0.510 & (1.981)
        \\
        mpd
                  & 0.088 & (2.847)
                  & 0.117 & (3.577)
                  & 0.157 & (0.870)
                  & 0.030 & (1.323)
        \\
        size      &  5.448 & (3.644)
                  &  5.404 & (3.786)
                  &  0.446 & (0.324)
                  &  1.154 & (2.471)
        \\
        $\alpha$  & 2.092 & (3.472)
                  & 2.089 & (3.837)
                  & 0.894 & (0.923)
                  & 0.171 & (1.613)
        \\
        const
                  & 3.758 & (1.267)
                  & 0.217 & (0.117)
                  & -3.244 & (-0.575)
                  & -7.827 & (-8.984)

  \end{tabular}

  \caption{\label{tab:main_results}
           \footnotesize Parameter estimates
             (and t-values) for four different model
           specifications (no factor $R=0$ vs. one factor $R=1$;
              exogenous price vs. endogenous price). $\alpha$
              is the standard deviation of the random coefficient distribution
              (only price has a random coefficient), and the regressors
              are p (price),
              hp/weight (horse power per weight),
              mpd (miles per dollar),
              size (car length times car width), and a constant.
              }
\end{table}

In Specification A, most of the coefficients are
precisely estimated.   The price coefficient is -4.109, and the
characteristics coefficients take the expected signs.   The $\alpha$
parameter, corresponding to the standard deviation of the random
coefficient on price, is estimated to be 2.092.   These point estimates
imply that, roughly
97\% of the time, the random price coefficient is negative, which is as we
should expect.

Compared to this baseline, Specification B allows price to be endogenous
(even conditional on the factor). The point estimates for this
specifications are virtually
unchanged from those in Specification A, except for the constant term.
Overall, the estimation results for
the specifications A and B are very similar, and show that once factors
are taken into account
it does not make much difference whether price is treated
as exogenous or endogenous. This suggests that the
factors indeed capture most of the price endogeneity in this application.

In contrast, the estimation results for specifications C and D, which are
the two specifications without any factors, are very
different qualitatively.  The t-values for specification C are rather small
(i.e. standard errors are large), so that the difference in the coefficient
estimates in these two specifications are not actually statistically significant. However, the
differences in the t-values themselves shows that it makes a substantial
difference for the no-factor estimation results whether price is treated as
exogenous or endogenous.

Specifically, in Specification C, the key price
coefficient and $\alpha$ are substantially smaller in magnitude;
furthermore, the standard errors are large, so that none of the estimates
are significant at usual significance levels.  Moreover, the coefficient on
{\em hp/weight} is negative, which is puzzling.
In Specification D, which corresponds to
a BLP model, but without price endogeneity, we see that the price
coefficient is reduced dramatically relative to the other specifications,
down to -0.308.

\paragraph{Elasticities.}
The sizeable differences in the magnitudes of the price coefficients
across the specification with and without factors suggest that these
models may imply economically meaningful differences in price
elasticities.   For this reason, we compute the matrices of
own- and cross-price elasticities for Specifications B (see Table~\eqref{tab:elasticities}) 
and C (see Table~\eqref{tab:elasticitiesBLP}).
The elasticities
were computed using the data in 1988, the final year of our sample.
Comparing these two sets of elasticities, the most obvious difference is
that the elasticities -- both own- and
cross-price -- for Specification C, corresponding to the standard BLP model
without factors, are substantially smaller (about one-half in magnitude)
than the Specification B elasticities.   For instance, reading down the
first column of Table  \eqref{tab:elasticities}, we see that a one-percent
increase in the price of a small
Chevrolet car would result in a 28\% reduction in its market share, but
increase the market share for large Chevrolet cars by 1.5\%.
For the results in Table \eqref{tab:elasticitiesBLP}, however, this same
one-percent price increase would reduce the market share for small
Chevrolet cars by only 13\%, and increase the market share for large
Chevrolet cars by less than half a percent.

On the whole, then, this empirical illustration shows that our estimation
procedure is feasible
even for moderate-sized datasets like the one used here.  Including interactive
fixed effects delivers results which are strikingly different than those obtained
from specifications without these fixed
effects.

\section{Conclusion}
In this paper, we consider an extension of the popular
 BLP
random coefficients discrete-choice demand model, which
  underlies much recent empirical work in IO. We add
 interactive fixed effects in the form of a factor structure on the
 unobserved product characteristics.   The interactive fixed effects can be
 arbitrarily correlated with the observed product characteristics
 (including price), which accommodate endogeneity and, at the same time,
captures strong persistence in market shares across products and
markets.  We propose a two-step least squares-minimum distance (LS-MD)
procedure to calculate the estimator.
Our estimator is easy to compute, and Monte Carlo
simulations show that it performs well.

The model in this paper is, to our knowledge, the first application of
factor-modeling to a nonlinear setting with endogenous regressors.
Since many other models used in applied settings
(such as duration models in labor economics, and parametric
auction models in IO) have these features,
we believe that factor-modeling may prove an effective way of
controlling for unobserved heterogeneity in these models.   We are
 exploring these applications in ongoing work.

\newpage

\begin{appendix}

\section{Additional Tables and Figures}

\begin{figure}[h!]
   \centering
   \epsfig{file=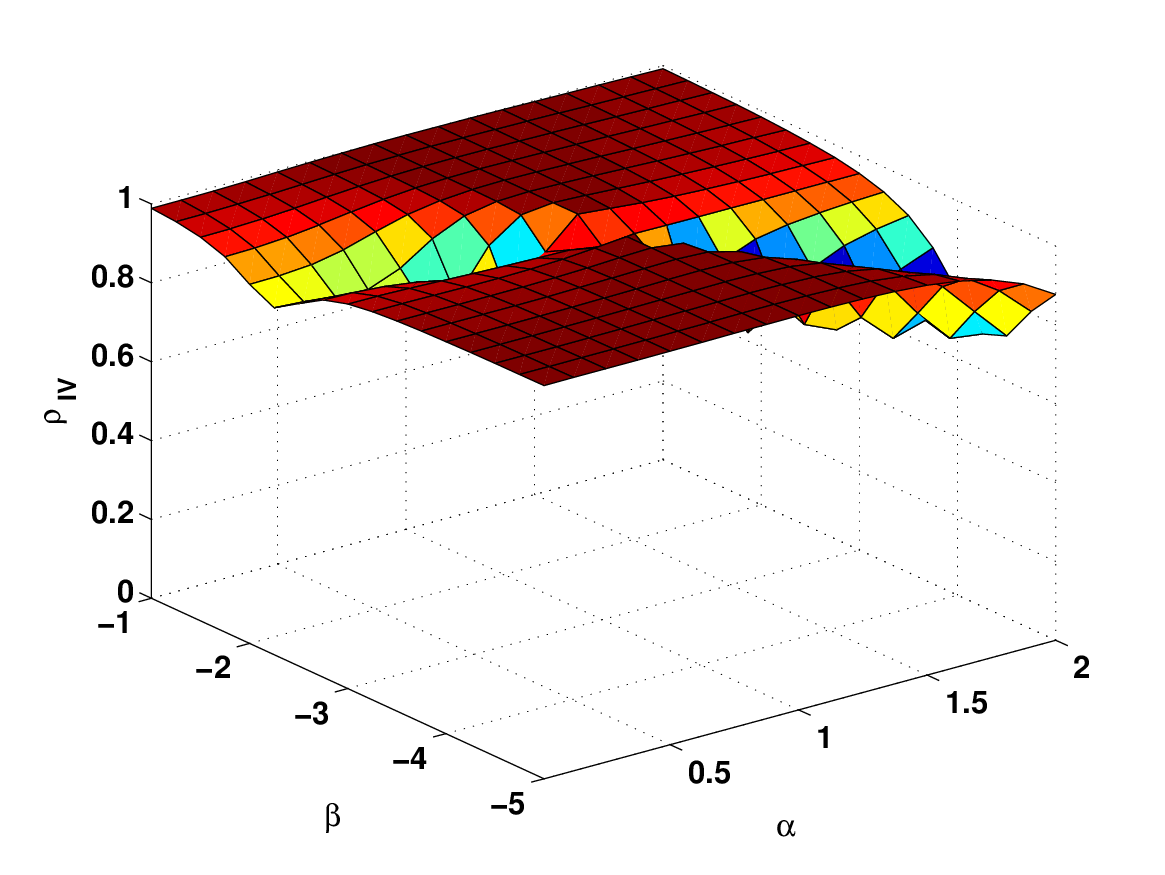,width=0.5\textwidth}
   \epsfig{file=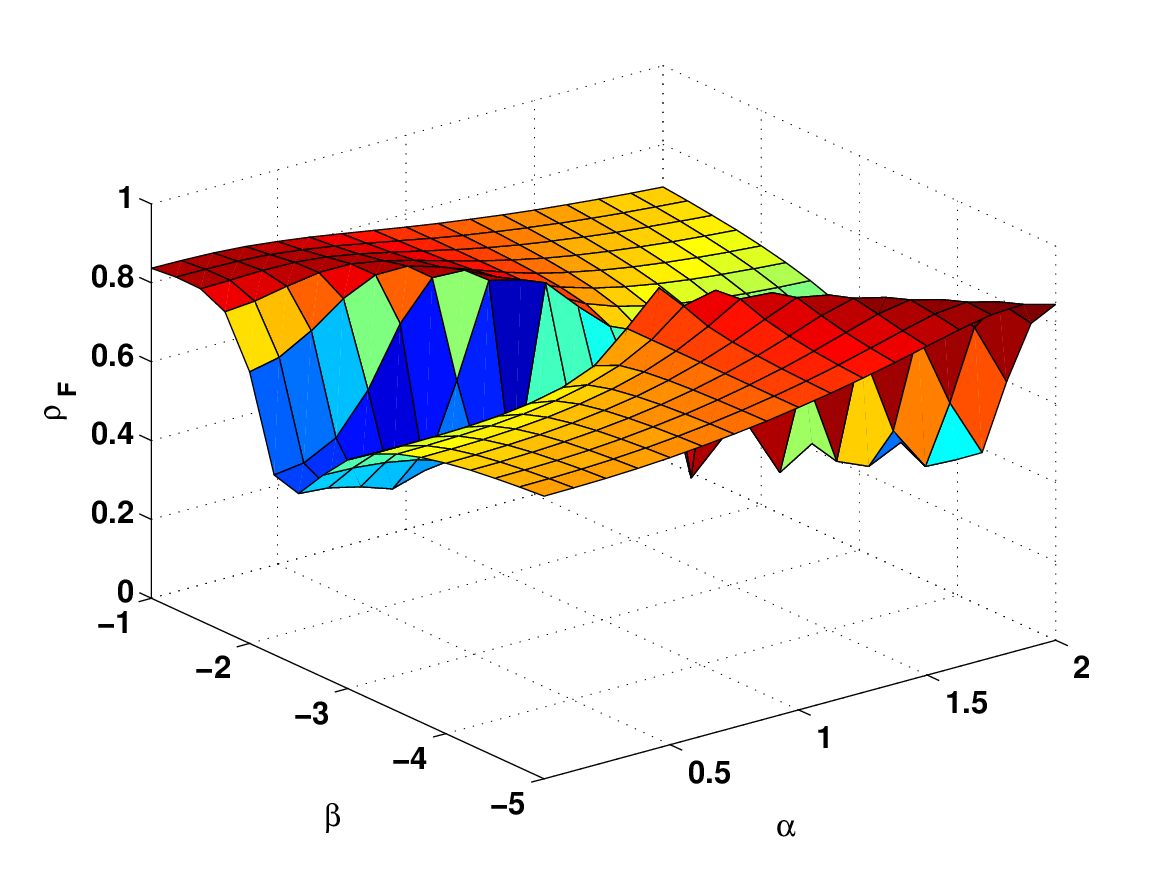,width=0.5\textwidth}
   \epsfig{file=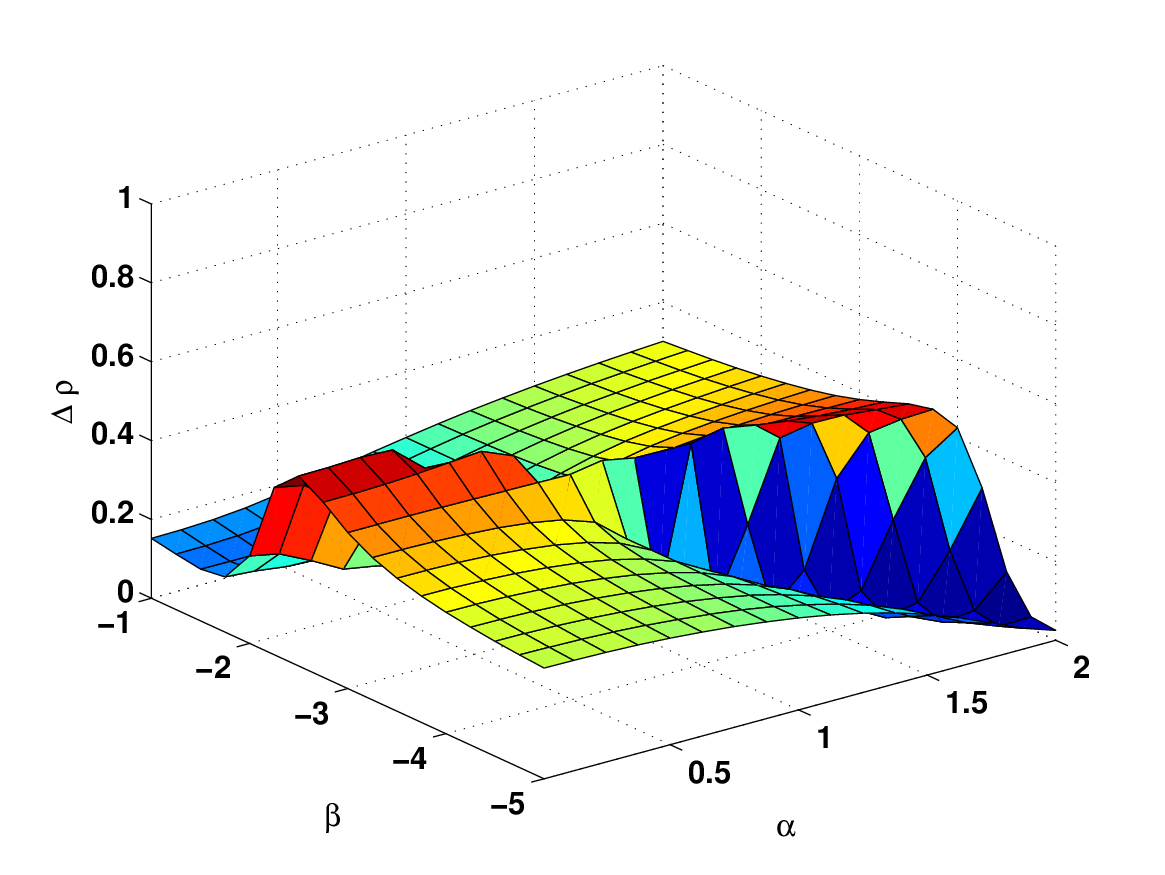,width=0.5\textwidth}
  \caption{\label{fig:rho_iv}
           \footnotesize
     For one draw of the data generating process used in the Monte Carlo design
     with $J=T=80$
    we plot $\rho_{\rm IV}(\alpha,\beta)$,  $\rho_{\rm F}(\alpha,\beta)$
    and $ \Delta \rho(\alpha,\beta)$ defined in \eqref{DefRho}
    as a function of $\alpha$ and $\beta$.
    The number of factors used in the calculation of    $\rho_{\rm F}(\alpha,\beta)$
    is $R=2$, although only one factor is present in the data generating process.
    }
  \vspace{-3cm}
\end{figure}

\begin{table}[h]  
   \center
   \begin{tabular}{ll|rr|rr|rr}
      & & \multicolumn{2}{|c|}{$R_{\rm EST} = 0$}
          & \multicolumn{2}{|c|}{$R_{\rm EST} = 1$}
          & \multicolumn{2}{|c}{$R_{\rm EST} = 2$}
      \\
      J,T & statistics &  \multicolumn{1}{|c}{$\widehat \alpha^*$} &
                                   \multicolumn{1}{c|}{$\widehat \beta^*$} &
                                \multicolumn{1}{|c}{$\widehat \alpha^*$} &
                                   \multicolumn{1}{c|}{$\widehat \beta^*$} &
                                \multicolumn{1}{|c}{$\widehat \alpha^*$} &
                                   \multicolumn{1}{c}{$\widehat \beta^*$}
      \\ \hline \hline
       20,20  & bias
          &  0.4255 & -0.3314
           &  0.0042& -0.0068
           &   0.0001& -0.0023   \\
              & std
           & 0.1644 &  0.1977
           &  0.0759&  0.0981
           &   0.0818&  0.1085   \\
              & rmse
          &  0.4562 &  0.3858
           &   0.0760&  0.0983
           & 0.0817&  0.1084 \\
               & mean(SE)  &
               0.0938   &  0.1300   &  0.0660   &  0.0870   &  0.0632   &  0.0833   \\
              & emp. size &
               0.96   &   0.65   &   0.09   &   0.07   &   0.15   &   0.12   
      \\ \hline
       50,50  & bias
           &   0.4305& -0.3178
           &  0.0000& -0.0006
           &  0.0017& -0.0018    \\
              & std
           &   0.0899&  0.0984
           &  0.0283&  0.0362
           & 0.0293&  0.0368   \\
              & rmse
           &   0.4398&  0.3326
           &  0.0282&  0.0361
           & 0.0293&  0.0369   \\
               & mean(SE)  &
               0.0418   &  0.0551   &  0.0270   &  0.0344   &  0.0265   &  0.0338   \\
              & emp. size &
               1.00   &   0.99   &   0.07   &   0.06   &   0.10   &   0.10   

      \\ \hline
       80,80  & bias
           &   0.4334& -0.3170
           &   -0.0012&  0.0012
           & 0.0001& 0.0000    \\
              & std
           &  0.0686&  0.0731
           &  0.0175&  0.0222
           &  0.0176&  0.0223    \\
              & rmse
           &  0.4388 &  0.3253
           &   0.0175&  0.0222
           & 0.0176&  0.0223 \\
               & mean(SE)  &
               0.0272   &  0.0354   &  0.0171   &  0.0215   &  0.0169   &  0.0213    \\
              & emp. size &
                1.00   &   1.00   &   0.07   &   0.07   &   0.06   &   0.06   
 
    \end{tabular}
    \caption{\label{tab:MCnew2}
           \footnotesize Simulation results for the data generating process
           \eqref{MC_DGP}, using 1000 repetitions.
            We report the bias, standard errors (std), square roots of the
            mean square errors (rmse),
            and the average of the estimated standard error (mean SE)
             of the bias corrected LS-MD estimator
            $(\widehat \alpha^*,\widehat \beta^*)$. 
            In addition, we report the empirical size of a nominal size $5 \%$ t-test
            based on $\widehat \alpha^*$ and $\widehat \beta^*$
            for the hypothesis that the parameter equals its true value.
                        The true number of factors
            in the process is $R=1$, but we use
            $R_{\rm EST}=0,1$, and $2$ in the estimation.
              }
\end{table}

\begin{table}[h]
   \centering
  \begin{tabular}{l|rl|rl}
                & \multicolumn{4}{|c}{Specifications:}
            \\ \cline{2-5}
                & \multicolumn{2}{|l}{A: \; $R=1$}
                & \multicolumn{2}{|l}{B: \; $R=1$}
           \\
                & \multicolumn{2}{|l}{\; \; \; exogenous p}
                & \multicolumn{2}{|l}{\; \; \; endogenous p}
             \\ \hline
        price
                  & -3.112 & (-2.703)
                  & -2.943 & (-3.082)
        \\
        hp/weight
                  &  0.340  & (1.671)
                  &  0.248  & (1.190)
        \\
        mpd
                  & 0.102 & (3.308)
                  & 0.119 & (3.658)
        \\
        size      &   4.568 & (3.055)
                  &  4.505 & (3.156)
        \\
        $\alpha$  & 1.613 & (2.678)
                  & 1.633 & (3.000)
        \\
        const
                  & -0.690 & (-0.232)
                  & -2.984 & (-1.615)
  \end{tabular}

  \caption{\label{tab:no_bias_correction}
           \footnotesize Parameter estimates
             (and t-values) for model
           specification A and B.
           Here we report the LS-MD estimators without bias correction,
           while in table \ref{tab:main_results} we report the
           bias corrected LS-MD estimators.
                         }
\end{table}

\begin{landscape}

\begin{table}[tb]
   \centering
  \begin{tabular}{l||l|l|l|cc|ccc}
    Product\#   & Make & Size Class & Manuf. & Mkt Share \% & Price &
    hp/weight & mpd & size \\
 & & & & (avg) & (avg) & (avg) & (avg) & (avg)
\\\hline\hline
1  & CV (Chevrolet) & small  & GM & 1.39 & 6.8004 & 3.4812 & 20.8172 & 1.2560 \\
2  & CV & large  & GM & 0.49 & 8.4843 & 3.5816 & 15.9629 & 1.5841 \\
3  & OD (Oldsmobile) & small  & GM & 0.25 & 7.6786 & 3.4789 & 19.1946 & 1.3334 \\
4  & OD & large  & GM & 0.69 & 9.7551 & 3.6610 & 15.7762 & 1.5932 \\
5  & PT (Pontiac) & small  & GM & 0.46 & 7.2211 & 3.4751 & 19.3714 & 1.3219 \\
\hline
6  & PT & large  & GM & 0.31 & 8.6504 & 3.5806 & 16.6192 & 1.5686  \\
7  & BK (Buick) & all & GM & 0.84 & 9.2023 & 3.6234 & 16.9960 & 1.5049     \\
8  & CD (Cadillac) & all & GM & 0.29 & 18.4098 & 3.8196 & 13.6894 & 1.5911    \\
9  & FD (Ford) & small  & Ford & 1.05 & 6.3448 & 3.4894 & 21.7885 & 1.2270\\
10 & FD & large  & Ford & 0.63 & 8.9530 & 3.4779 & 15.7585 & 1.6040\\
\hline
11 & MC (Mercury) & small  & Ford & 0.19 & 6.5581 & 3.6141 & 22.2242 & 1.2599\\
12 & MC & large  & Ford & 0.32 & 9.2583 & 3.4610 & 15.9818 & 1.6053\\
13 & LC (Lincoln) & all & Ford & 0.16 & 18.8322 & 3.7309 & 13.6460 & 1.7390  \\
14 & PL (Plymouth) & small  & Chry & 0.31 & 6.2209 & 3.5620 & 22.7818 & 1.1981\\
15 & PL & large  & Chry & 0.17 & 7.7203 & 3.2334 & 15.4870 & 1.5743\\
\hline
16 & DG (Dodge) & small  & Chry & 0.35 & 6.5219 & 3.6047 & 23.2592 & 1.2031 \\
17 & DG & large  & Chry & 0.17 & 7.8581 & 3.2509 & 15.4847 & 1.5681 \\
18 & TY (Toyota) & all & Other & 0.54 & 7.1355 & 3.7103 & 24.3294 & 1.0826   \\
19 & VW (Volkswagen) & all & Other & 0.17 & 8.2388 & 3.5340 & 24.0027 & 1.0645   \\
20 & DT/NI (Datsen/Nissan) & all & Other & 0.41 & 7.8120 & 4.0226 & 24.5849 & 1.0778\\
\hline
21 & HD (Honda) & all & Other & 0.41 & 6.7534 & 3.5442 & 26.8501 & 1.0012    \\
22 & SB (Subaru) & all & Other & 0.10 & 5.9568 & 3.4718 & 25.9784 & 1.0155    \\
23 & REST & all & Other & 1.02 & 10.4572 & 3.6148 & 19.8136 & 1.2830 \\

  \end{tabular}

  \caption{\label{tab:summary_stats}
           \footnotesize Summary statistics for the 23 product-aggregates
           used in estimation.
              }
\end{table}

\end{landscape}

\begin{landscape}

\begin{table}

\footnotesize
\hspace{-18pt}%
\begin{tabular}{l|r@{\,}r@{\,}r@{\,}r@{\,}r@{\,}r@{\,}r@{\,}r@{\,}r@{\,}r@{\,}r@{\,}r@{\,}r@{\,}r@{\,}r@{\,}r@{\,}r@{\,}r@{\,}r@{\,}r@{\,}r@{\,}r@{\,}r}
& \, CV s& \, CV l& \, OD s& \, OD l& \, PT s& \, PT l& \, BK& \, CD& \, FD s& \, FD l& \, MC s& \, MC l& \, LC& \, PL s& \, PL l& \, DG s& \, DG l& \, TY& \, VW& DT/NI& \, HD& \, SB& \, REST\\ \hline
CV s & -28.07 & 0.82 & 0.70 & 1.70 & 0.96 & 0.31 & 2.77 & 0.14 & 1.32 & 2.38 & 0.41 & 1.45 & 0.03 & 0.32 & 0.22 & 0.44 & 0.31 & 1.57 & 0.57 & 1.74 & 0.91 & 0.15 & 6.58\\
CV l & 1.50 & -34.54 & 0.72 & 2.02 & 0.79 & 0.21 & 3.27 & 0.73 & 0.97 & 3.54 & 0.37 & 2.15 & 0.16 & 0.21 & 0.21 & 0.30 & 0.30 & 1.21 & 0.40 & 1.62 & 0.71 & 0.10 & 10.17\\
OD s & 1.29 & 0.72 & -35.78 & 2.08 & 0.72 & 0.18 & 3.36 & 1.15 & 0.84 & 3.90 & 0.35 & 2.37 & 0.25 & 0.17 & 0.20 & 0.25 & 0.28 & 1.06 & 0.34 & 1.53 & 0.63 & 0.08 & 11.35\\
OD l & 0.98 & 0.64 & 0.65 & -35.80 & 0.59 & 0.13 & 3.37 & 2.09 & 0.64 & 4.34 & 0.30 & 2.63 & 0.45 & 0.12 & 0.17 & 0.18 & 0.25 & 0.84 & 0.25 & 1.36 & 0.51 & 0.06 & 12.86\\
PT s & 1.76 & 0.80 & 0.72 & 1.90 & -32.51 & 0.26 & 3.09 & 0.38 & 1.14 & 3.02 & 0.39 & 1.84 & 0.08 & 0.26 & 0.22 & 0.37 & 0.31 & 1.39 & 0.48 & 1.70 & 0.81 & 0.12 & 8.56\\
PT l & 2.17 & 0.81 & 0.68 & 1.55 & 0.98 & -26.85 & 2.53 & 0.06 & 1.40 & 1.97 & 0.41 & 1.21 & 0.01 & 0.35 & 0.22 & 0.48 & 0.31 & 1.65 & 0.61 & 1.72 & 0.94 & 0.16 & 5.37\\
BK & 0.99 & 0.64 & 0.66 & 2.09 & 0.60 & 0.13 & -34.47 & 2.04 & 0.65 & 4.33 & 0.30 & 2.62 & 0.44 & 0.12 & 0.18 & 0.18 & 0.25 & 0.84 & 0.25 & 1.36 & 0.51 & 0.06 & 12.81\\
CD & 0.00 & 0.01 & 0.01 & 0.08 & 0.00 & 0.00 & 0.12 & -6.97 & 0.00 & 0.36 & 0.00 & 0.21 & 3.67 & 0.00 & 0.00 & 0.00 & 0.00 & 0.00 & 0.00 & 0.02 & 0.00 & 0.00 & 1.19\\
FD s & 2.03 & 0.82 & 0.71 & 1.71 & 0.95 & 0.31 & 2.79 & 0.15 & -28.99 & 2.41 & 0.41 & 1.47 & 0.03 & 0.32 & 0.22 & 0.44 & 0.31 & 1.56 & 0.57 & 1.74 & 0.90 & 0.15 & 6.67\\
FD l & 0.61 & 0.50 & 0.55 & 1.95 & 0.42 & 0.07 & 3.13 & 4.23 & 0.40 & -34.69 & 0.23 & 2.80 & 0.90 & 0.06 & 0.14 & 0.10 & 0.20 & 0.56 & 0.15 & 1.07 & 0.34 & 0.04 & 14.05\\
MC s & 1.57 & 0.77 & 0.72 & 1.99 & 0.81 & 0.22 & 3.24 & 0.63 & 1.02 & 3.41 & -34.49 & 2.07 & 0.14 & 0.23 & 0.21 & 0.32 & 0.30 & 1.26 & 0.42 & 1.64 & 0.74 & 0.11 & 9.77\\
MC l & 0.62 & 0.50 & 0.55 & 1.95 & 0.43 & 0.07 & 3.14 & 4.15 & 0.41 & 4.64 & 0.23 & -36.50 & 0.88 & 0.06 & 0.14 & 0.11 & 0.20 & 0.56 & 0.15 & 1.08 & 0.35 & 0.04 & 14.03\\
LC & 0.00 & 0.01 & 0.02 & 0.09 & 0.01 & 0.00 & 0.15 & 20.15 & 0.00 & 0.41 & 0.00 & 0.24 & -23.81 & 0.00 & 0.00 & 0.00 & 0.00 & 0.00 & 0.00 & 0.02 & 0.00 & 0.00 & 1.39\\
PL s & 2.21 & 0.79 & 0.64 & 1.40 & 0.98 & 0.34 & 2.29 & 0.03 & 1.42 & 1.65 & 0.40 & 1.01 & 0.01 & -23.54 & 0.21 & 0.49 & 0.30 & 1.66 & 0.63 & 1.67 & 0.95 & 0.16 & 4.42\\
PL l & 1.47 & 0.75 & 0.72 & 2.03 & 0.78 & 0.21 & 3.29 & 0.78 & 0.96 & 3.59 & 0.37 & 2.18 & 0.17 & 0.21 & -35.26 & 0.30 & 0.29 & 1.19 & 0.39 & 1.61 & 0.70 & 0.10 & 10.33\\
DG s & 2.17 & 0.81 & 0.68 & 1.55 & 0.98 & 0.33 & 2.54 & 0.06 & 1.40 & 1.99 & 0.41 & 1.22 & 0.01 & 0.35 & 0.22 & -26.80 & 0.31 & 1.64 & 0.61 & 1.72 & 0.94 & 0.16 & 5.41\\
DG l & 1.47 & 0.75 & 0.72 & 2.03 & 0.78 & 0.21 & 3.29 & 0.78 & 0.96 & 3.59 & 0.37 & 2.18 & 0.17 & 0.21 & 0.20 & 0.30 & -35.18 & 1.19 & 0.39 & 1.61 & 0.70 & 0.10 & 10.33\\
TY & 1.94 & 0.81 & 0.72 & 1.79 & 0.93 & 0.29 & 2.91 & 0.22 & 1.25 & 2.65 & 0.41 & 1.62 & 0.05 & 0.30 & 0.22 & 0.41 & 0.31 & -30.16 & 0.54 & 1.73 & 0.87 & 0.14 & 7.41\\
VW & 2.13 & 0.82 & 0.69 & 1.61 & 0.97 & 0.32 & 2.63 & 0.09 & 1.37 & 2.13 & 0.41 & 1.31 & 0.02 & 0.34 & 0.22 & 0.47 & 0.31 & 1.62 & -27.86 & 1.73 & 0.93 & 0.15 & 5.85\\
DT/NI & 1.49 & 0.76 & 0.72 & 2.02 & 0.79 & 0.21 & 3.28 & 0.75 & 0.97 & 3.55 & 0.37 & 2.16 & 0.16 & 0.21 & 0.21 & 0.30 & 0.29 & 1.20 & 0.40 & -33.74 & 0.71 & 0.10 & 10.22\\
HD & 1.88 & 0.81 & 0.72 & 1.83 & 0.91 & 0.28 & 2.97 & 0.26 & 1.22 & 2.77 & 0.40 & 1.69 & 0.06 & 0.29 & 0.22 & 0.40 & 0.31 & 1.47 & 0.52 & 1.72 & -31.39 & 0.13 & 7.77\\
SB & 2.16 & 0.82 & 0.68 & 1.58 & 0.98 & 0.33 & 2.57 & 0.07 & 1.39 & 2.04 & 0.41 & 1.25 & 0.02 & 0.35 & 0.22 & 0.47 & 0.31 & 1.64 & 0.61 & 1.73 & 0.94 & -27.60 & 5.58\\
REST & 0.56 & 0.47 & 0.53 & 1.91 & 0.40 & 0.07 & 3.07 & 4.71 & 0.37 & 4.65 & 0.22 & 2.80 & 1.00 & 0.06 & 0.13 & 0.09 & 0.19 & 0.51 & 0.13 & 1.02 & 0.32 & 0.03 & -25.42\\
\end{tabular}

  \caption{\label{tab:elasticities}
           \footnotesize
    Estimated price elasticities for specification B in $t=1988$.
    Rows ($i$) correspond to market shares ($s_{jt}$), and columns ($j$)
    correspond to prices ($p_{jt}$)
    with respect to which elasticities are calculated.}
\end{table}

\end{landscape}

\newpage

\begin{landscape}

\begin{table}

\footnotesize
\hspace{-18pt}%
\begin{tabular}{l|r@{\,}r@{\,}r@{\,}r@{\,}r@{\,}r@{\,}r@{\,}r@{\,}r@{\,}r@{\,}r@{\,}r@{\,}r@{\,}r@{\,}r@{\,}r@{\,}r@{\,}r@{\,}r@{\,}r@{\,}r@{\,}r@{\,}r}
& \, CV s& \, CV l& \, OD s& \, OD l& \, PT s& \, PT l& \, BK& \, CD& \, FD s& \, FD l& \, MC s& \, MC l& \, LC& \, PL s& \, PL l& \, DG s& \, DG l& \, TY& \, VW& DT/NI& \, HD& \, SB& \, REST\\ \hline
CV s & -12.95 & 0.46 & 0.46 & 0.48 & 0.46 & 0.47 & 0.48 & 1.45 & 0.46 & 0.51 & 0.46 & 0.51 & 1.41 & 0.48 & 0.46 & 0.47 & 0.46 & 0.46 & 0.46 & 0.46 & 0.46 & 0.47 & 0.51\\
CV l & 0.43 & -15.20 & 0.49 & 0.53 & 0.45 & 0.41 & 0.53 & 2.46 & 0.43 & 0.60 & 0.46 & 0.59 & 2.39 & 0.40 & 0.47 & 0.41 & 0.47 & 0.43 & 0.42 & 0.47 & 0.44 & 0.41 & 0.61\\
OD s & 0.41 & 0.47 & -15.79 & 0.53 & 0.44 & 0.39 & 0.53 & 2.83 & 0.41 & 0.61 & 0.46 & 0.61 & 2.73 & 0.37 & 0.47 & 0.39 & 0.47 & 0.42 & 0.40 & 0.47 & 0.42 & 0.39 & 0.63\\
OD l & 0.38 & 0.45 & 0.48 & -16.57 & 0.41 & 0.35 & 0.53 & 3.40 & 0.38 & 0.63 & 0.44 & 0.63 & 3.28 & 0.33 & 0.45 & 0.35 & 0.45 & 0.39 & 0.36 & 0.45 & 0.40 & 0.36 & 0.65\\
PT s & 0.44 & 0.47 & 0.49 & 0.51 & -14.32 & 0.44 & 0.51 & 2.01 & 0.45 & 0.56 & 0.47 & 0.56 & 1.95 & 0.44 & 0.47 & 0.44 & 0.47 & 0.45 & 0.44 & 0.47 & 0.45 & 0.44 & 0.58\\
PT l & 0.46 & 0.44 & 0.43 & 0.44 & 0.44 & -11.76 & 0.44 & 1.09 & 0.46 & 0.45 & 0.44 & 0.45 & 1.07 & 0.51 & 0.44 & 0.48 & 0.44 & 0.45 & 0.47 & 0.44 & 0.45 & 0.48 & 0.46\\
BK & 0.38 & 0.45 & 0.48 & 0.53 & 0.42 & 0.35 & -16.54 & 3.38 & 0.38 & 0.63 & 0.44 & 0.63 & 3.26 & 0.33 & 0.45 & 0.35 & 0.45 & 0.39 & 0.36 & 0.45 & 0.40 & 0.36 & 0.65\\
CD & 0.03 & 0.06 & 0.07 & 0.10 & 0.05 & 0.03 & 0.10 & -7.85 & 0.03 & 0.15 & 0.06 & 0.15 & 5.14 & 0.02 & 0.06 & 0.03 & 0.06 & 0.04 & 0.03 & 0.06 & 0.04 & 0.03 & 0.16\\
FD s & 0.46 & 0.46 & 0.47 & 0.48 & 0.46 & 0.47 & 0.48 & 1.48 & -13.03 & 0.51 & 0.46 & 0.51 & 1.44 & 0.48 & 0.46 & 0.47 & 0.46 & 0.46 & 0.46 & 0.46 & 0.46 & 0.47 & 0.52\\
FD l & 0.33 & 0.42 & 0.45 & 0.52 & 0.37 & 0.30 & 0.51 & 4.22 & 0.33 & -17.46 & 0.41 & 0.63 & 4.06 & 0.27 & 0.42 & 0.30 & 0.42 & 0.35 & 0.31 & 0.42 & 0.36 & 0.30 & 0.65\\
MC s & 0.43 & 0.47 & 0.49 & 0.53 & 0.45 & 0.42 & 0.52 & 2.35 & 0.43 & 0.59 & -14.99 & 0.59 & 2.28 & 0.41 & 0.47 & 0.42 & 0.47 & 0.44 & 0.42 & 0.47 & 0.44 & 0.42 & 0.60\\
MC l & 0.33 & 0.42 & 0.45 & 0.52 & 0.38 & 0.30 & 0.52 & 4.20 & 0.33 & 0.63 & 0.41 & -17.44 & 4.03 & 0.27 & 0.42 & 0.30 & 0.42 & 0.35 & 0.31 & 0.42 & 0.36 & 0.30 & 0.66\\
LC & 0.04 & 0.07 & 0.08 & 0.11 & 0.05 & 0.03 & 0.10 & 5.75 & 0.04 & 0.16 & 0.06 & 0.16 & -8.59 & 0.02 & 0.07 & 0.03 & 0.07 & 0.04 & 0.03 & 0.07 & 0.04 & 0.03 & 0.17\\
PL s & 0.45 & 0.40 & 0.40 & 0.39 & 0.42 & 0.48 & 0.39 & 0.79 & 0.45 & 0.39 & 0.41 & 0.39 & 0.77 & -10.42 & 0.40 & 0.48 & 0.40 & 0.43 & 0.47 & 0.40 & 0.43 & 0.47 & 0.39\\
PL l & 0.42 & 0.47 & 0.49 & 0.53 & 0.45 & 0.41 & 0.53 & 2.51 & 0.42 & 0.60 & 0.46 & 0.60 & 2.43 & 0.40 & -15.28 & 0.41 & 0.47 & 0.43 & 0.41 & 0.47 & 0.44 & 0.41 & 0.61\\
DG s & 0.46 & 0.44 & 0.44 & 0.44 & 0.44 & 0.48 & 0.44 & 1.10 & 0.46 & 0.45 & 0.44 & 0.45 & 1.08 & 0.51 & 0.44 & -11.80 & 0.44 & 0.45 & 0.47 & 0.44 & 0.45 & 0.48 & 0.46\\
DG l & 0.42 & 0.47 & 0.49 & 0.53 & 0.45 & 0.41 & 0.53 & 2.51 & 0.42 & 0.60 & 0.46 & 0.60 & 2.43 & 0.40 & 0.47 & 0.41 & -15.28 & 0.43 & 0.41 & 0.47 & 0.44 & 0.41 & 0.61\\
TY & 0.46 & 0.47 & 0.48 & 0.49 & 0.46 & 0.46 & 0.49 & 1.69 & 0.46 & 0.53 & 0.46 & 0.53 & 1.64 & 0.46 & 0.47 & 0.46 & 0.47 & -13.58 & 0.46 & 0.47 & 0.46 & 0.46 & 0.54\\
VW & 0.46 & 0.45 & 0.45 & 0.46 & 0.45 & 0.48 & 0.45 & 1.24 & 0.46 & 0.48 & 0.45 & 0.47 & 1.21 & 0.50 & 0.45 & 0.48 & 0.45 & 0.46 & -12.28 & 0.45 & 0.45 & 0.47 & 0.48\\
DT/NI & 0.42 & 0.47 & 0.49 & 0.53 & 0.45 & 0.41 & 0.53 & 2.48 & 0.43 & 0.60 & 0.46 & 0.60 & 2.40 & 0.40 & 0.47 & 0.41 & 0.47 & 0.43 & 0.42 & -15.22 & 0.44 & 0.41 & 0.61\\
HD & 0.45 & 0.47 & 0.48 & 0.50 & 0.46 & 0.45 & 0.50 & 1.79 & 0.45 & 0.54 & 0.47 & 0.54 & 1.74 & 0.46 & 0.47 & 0.45 & 0.47 & 0.45 & 0.45 & 0.47 & -13.83 & 0.45 & 0.55\\
SB & 0.46 & 0.44 & 0.44 & 0.45 & 0.45 & 0.48 & 0.45 & 1.15 & 0.46 & 0.46 & 0.44 & 0.46 & 1.13 & 0.50 & 0.44 & 0.48 & 0.44 & 0.45 & 0.47 & 0.44 & 0.45 & -12.00 & 0.47\\
REST & 0.32 & 0.41 & 0.45 & 0.51 & 0.37 & 0.29 & 0.51 & 4.37 & 0.32 & 0.63 & 0.40 & 0.63 & 4.19 & 0.26 & 0.41 & 0.29 & 0.41 & 0.34 & 0.30 & 0.41 & 0.35 & 0.29 & -17.59\\
\end{tabular}

  \caption{\label{tab:elasticitiesBLP}
           \footnotesize
    Estimated price elasticities for specification C (BLP case) in $t=1988$.
    Rows ($i$) correspond to market shares ($s_{jt}$), and columns ($j$)
    correspond to prices ($p_{jt}$)
    with respect to which elasticities are calculated.}
\end{table}

\end{landscape}

\newpage

\section{Alternative GMM approach}
\label{app:GMM}

In this section we show that in the presence of factors a moment based
estimation approach along the lines originally proposed by BLP
is inadequate.
The moment conditions imposed by the model are
\begin{align}
   \mathbb{E} \left[  e_{jt}\left(\alpha^0,\, \beta^0,\, \lambda^0 f^{0\prime} \right) X_{k,jt} \right] &= 0  \; ,
        &   k &= 1,\ldots,K \; ,
     \nonumber \\
   \mathbb{E} \left[  e_{jt}\left(\alpha^0,\, \beta^0,\, \lambda^0 f^{0\prime} \right) Z_{m,jt} \right] &= 0  \; ,
        &   m &= 1,\ldots,M \; ,
\end{align}
where $e_{jt}(\alpha,\, \beta,\, \lambda f') = \delta_{jt}(\alpha,\, s_t,\, X_t)
              -  \sum_{k=1}^K \, \beta_k \, X_{k,jt} -  \sum_{r=1}^R \, \lambda_{jr} f_{tr}$.
Note that we write the residuals $e_{jt}$ as a function of the $J\times T$ matrix $\lambda f'$
in order to avoid the ambiguity of the decomposition into $\lambda$ and $f$.
The corresponding sample moments read
\begin{align}
   m^X_k(\alpha,\, \beta,\, \lambda f')  &=
     \frac 1 {JT} \, {\rm Tr} \left(
         e(\alpha,\, \beta,\, \lambda f') \, X'_k \right) \; ,
   \nonumber \\
   m^Z_m(\alpha,\, \beta,\, \lambda f')  &= \frac 1 {JT} \, {\rm Tr} \left(
      e(\alpha,\, \beta,\, \lambda f') \, Z'_m \right) \; .
\end{align}
We also define the sample moment vectors
$m^X(\alpha,\, \beta,\, \lambda f')=\left( m^X_1, \ldots, m^X_K \right)'$
and $m^Z(\alpha,\, \beta,\, \lambda f')=\left( m^Z_1, \ldots, m^Z_M \right)'$.
An alternative estimator for
$\alpha$, $\beta$, $\lambda$ and $f$
is then given by\footnote{
              The minimizing $\hat \lambda_{\alpha,\beta}$ and $\hat f_{\alpha,\beta}$
are the least squares estimators, or equivalently, the principal components estimators, e.g. $\hat \lambda_{\alpha,\beta}$
consists of the eigenvectors corresponding to the $R$ largest eigenvalues
of the $J\times J$ matrix
\begin{align*}
 \left(  \delta(\alpha,\, s,\, X) -  \sum_{k=1}^K \, \beta_k \, X_{k} \right)
 \left(  \delta(\alpha,\, s,\, X) -  \sum_{k=1}^K \, \beta_k \, X_{k} \right)' \; .
\end{align*}
}
\begin{align}
   \left( \hat \lambda_{\alpha,\beta} \, , \; \hat f_{\alpha,\beta} \right)
     &= \argmin_{ \{ \lambda, \, f \} } \, \sum_{j=1}^J \, \sum_{t=1}^T \,
                e^2_{jt}(\alpha,\, \beta,\, \lambda f') \; .
 \nonumber \\
   \left( \hat \alpha^{\rm GMM}, \, \hat \beta^{\rm GMM} \right)
      &= \argmin_{ \{ \alpha \in {\cal B}_\alpha, \, \beta \} }
            \left( \begin{array}{c} m^X(\alpha,\, \beta,\, \hat \lambda_{\alpha,\beta}
                                                           \hat f_{\alpha,\beta}')
            \\ m^Z(\alpha,\, \beta,\, \hat \lambda_{\alpha,\beta}
                                                           \hat f_{\alpha,\beta}') \end{array} \right)'
            \; \bigW_{JT} \;
            \left( \begin{array}{c} m^X(\alpha,\, \beta,\, \hat \lambda_{\alpha,\beta}
                                                           \hat f_{\alpha,\beta}') \\
                                    m^Z(\alpha,\, \beta,\, \hat \lambda_{\alpha,\beta}
                                                           \hat f_{\alpha,\beta}') \end{array} \right) \; ,
   \label{estimatorGMM}
\end{align}
where $\bigW_{JT}$
is a positive definite $(K+M) \times (K+M)$ weight matrix.
  The main difference between this alternative estimator
  and our estimator (\ref{estimator}) is that the least-squares step
  is used solely to recover estimates of the factors and factor
  loadings (principal components estimator),
  while the structural parameters $(\alpha,\beta)$ are
  estimated in the GMM second step.
The relation between $\hat \alpha$ and $\hat \beta$ defined in
\eqref{estimator} and $\hat \alpha^{\rm GMM}$ and $\hat \beta^{\rm GMM}$ defined
in \eqref{estimatorGMM} is as follows
\begin{itemize}
   \item[(i)] Let $R=0$ (no factors) and set
          \begin{align}
             \bigW_{JT} &=
        \left( \begin{array}{cc} \left(\frac 1 {JT} x'x \right)^{-1}& 0_{K\times M} \\
                              0_{M\times K} & 0_{M\times M} \end{array}  \right)
      + \left(\begin{array}{c} -  (x'x)^{-1} x' \, z \\ \mathbbm{1}_M \end{array} \right)
         \left(\frac 1 {JT} z' M_x z \right)^{-1}
         \nonumber \\ & \qquad\ \qquad \qquad\ \qquad\qquad\ \qquad
          \, \smallW_{JT} \,
         \left(\frac 1 {JT} z' M_x z \right)^{-1}
         \left(\begin{array}{c} -  (x'x)^{-1} x' \, z \\ \mathbbm{1}_M \end{array} \right)' \; ,
          \label{RelationW}
          \end{align}
          where
          $x$ is a $JT \times K$ matrix and $z$ is a $JT \times M$ matrix,
          given by $x_{.,k} = {\rm vec}\left( X_{k} \right)$,
          $k=1,\ldots,K$, and
          $z_{.,m} = {\rm vec}\left( Z_{m} \right)$,
          $m=1,\ldots,M$.
         Then $\hat \alpha$ and $\hat \beta$
         solve \eqref{estimator} with weight matrix $\smallW_{JT}$
         if and only if they solve \eqref{estimatorGMM} with this weight matrix
         $\bigW_{JT}$,\footnote{
          With this weight matrix $\bigW_{JT}$ the second
          stage objective function in \eqref{estimatorGMM} becomes
          \begin{align*}
             & \left( d(\alpha) - x \beta \right)' \, x \, (x'x)^{-1} \,
                        x' \,  \left( d(\alpha) - x \beta \right) /JT
              + d'(\alpha) \, M_x  \, z \, (z'M_x z)^{-1} \smallW_{JT} (z'M_x z)^{-1}
              \, z' \, M_x \, d(\alpha)
           \\ &=  \left( d(\alpha) - x \beta \right)'  P_x
            \left( d(\alpha) - x \beta \right)/JT
                   +  \tilde \gamma'_\alpha \, \smallW_{JT} \, \tilde \gamma_\alpha \; ,
          \end{align*}
          where $d(\alpha)={\rm vec}(\delta(\alpha,\, s,\, X)-\delta(\alpha^0,\, s,\, X))$.
          Here, $\beta$ only appears in the first term, and by choosing
          $\beta=\hat \beta = (x' x)^{-1} x' d(\alpha)$
          this term becomes zero.
          Thus, we are left
          with the second term, which is exactly the second stage objective function
          in \eqref{estimator} in this case, since for $R=0$
          by the Frisch-Waugh theorem
          we have $\tilde \gamma_\alpha = (z'M_x z)^{-1}
              \, z' \, M_x \, d(\alpha)$.
          }         i.e. in this case we have
         $(\hat \alpha,\hat \beta) = (\hat \alpha^{\rm GMM},\hat \beta^{\rm GMM})$.
   \item[(ii)] Let $R>0$ and $M=L$ (exactly identified case). Then a solution of \eqref{estimator}
         also is a solution of \eqref{estimatorGMM}, but not every solution
         of \eqref{estimatorGMM} needs to be a solution of \eqref{estimator}.
   \item[(iii)] For $M>L$ and $R>0$ there is no straightforward
     characterization of the relationship between
          the estimators in \eqref{estimator} and \eqref{estimatorGMM}.
\end{itemize}

We want to discuss the exactly identified case $M=L$ a bit further. The reason why in this
case every solution of \eqref{estimator} also solves \eqref{estimatorGMM}
is that the first order conditions (FOC's)
wrt to $\beta$ and $\gamma$ of the first stage optimization
in \eqref{estimator} read
$m^X(\hat \alpha,\, \hat \beta,\, \hat \lambda_{\hat \alpha,\hat \beta} \hat f_{\hat \alpha,\hat\beta}')=0$
and
$m^Z(\hat \alpha,\, \hat \beta,\, \hat \lambda_{\hat \alpha,\hat \beta} \hat f_{\hat \alpha,\hat\beta}')=0$, which implies that the GMM objective function of \eqref{estimatorGMM}
is zero, i.e. minimized.
The reverse statement is not true, because for $R>0$
the first stage objective function
in \eqref{estimator} is not a quadratic function of $\beta$ and $\gamma$
anymore once one concentrates out $\lambda$ and $f$, and it can have multiple local
minima that satisfy the FOC.
Therefore, $\hat \alpha^{\rm GMM}$ and $\hat \beta^{\rm GMM}$ can be
inconsistent, while $\hat \alpha$ and $\hat \beta$ are consistent,
which is the main reason to consider the latter in this paper.

To illustrate this important difference between $\hat \alpha^{\rm GMM}$, $\hat \beta^{\rm GMM}$  and $\hat \alpha$, $\hat \beta$, we want to give a simple
example for a linear model in which the least squares objective function has multiple local
minima. Consider a DGP where $Y_{jt} = \beta^0 X_{jt} + \lambda^0_j f^0_t + e_{jt}$,
with $X_{jt} = 1 + 0.5 \tilde X_{jt} + \lambda^0_j f^0_t$, and 
$\tilde X_{jt}$, $e_{jt}$, $\lambda^0_j$ and $f^0_t$ are all identically distributed
as ${\cal N}(0,1)$, mutually independent, and independent across $j$ and $t$.
Here, the number of factors $R=1$, and we
assume that $Y_{jt}$ and $X_{jt}$ are observed and that $\beta^0=0$. The profiled least
squares objective function in this model, which corresponds to our inner loop,
is given by $L(\beta) = \sum_{r=2}^{T} \mu_r \left[ (Y-\beta X)' (Y-\beta X) \right]$.
For $J=T=100$ and a concrete draw of $Y$ and $X$, this objective function is plotted
in figure~\ref{fig:ambiguity}. The shape of this objective function is
qualitatively unchanged for other draws of $Y$ and $X$, or larger values of $J$ and $T$. As predicted by our consistency result, the global minimum of $L(\beta)$
is close to $\beta^0=0$, but another local minimum is present, which does neither vanish nor converge to $\beta^0=0$
when $J$ and $T$ grow to infinity. Thus, the global minimum of $L(\beta)$
gives a consistent estimator, but the solution to the FOC
$\partial L(\beta) / \partial \beta =0$ gives not. In this example, the
principal components estimator of $\lambda(\beta)$ and $f(\beta)$,
which are derived from
$Y-\beta X$, become very bad approximations for $\lambda^0$ and $f^0$ for
$\beta\gtrsim 0.5$. Thus, for $\beta\gtrsim 0.5$, the fixed effects are
essentially not controlled for anymore in the objective function, and the local minimum
around $\beta \approx 0.8$ reflects the resulting endogeneity problem.

\begin{figure}[tbh]
   \centering
   \epsfig{file=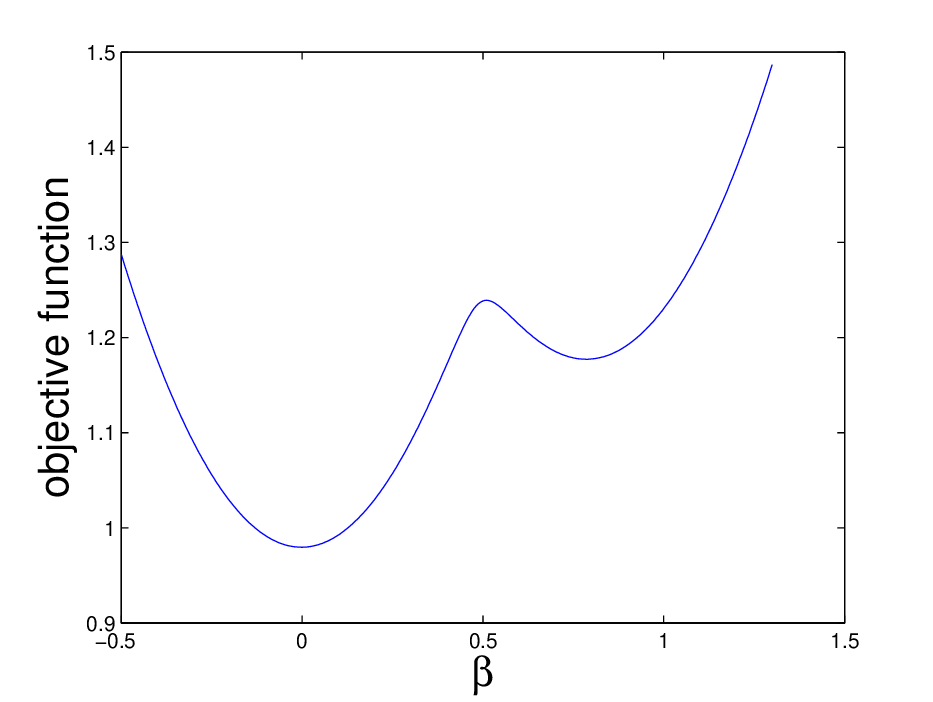,width=0.5\textwidth}
  \caption{\label{fig:ambiguity}
           \footnotesize Example for multiple local minima in the 
           least squares objective function $L(\beta)$. The global minimum can be found close to the true value
                  $\beta^0=0$, but another local minimum exists around
                  $\beta \approx 0.8$, which renders the FOC inappropriate for
                  defining the estimator $\hat \beta$.}  
\end{figure}

\section{Details for Theorems~\ref{th:asympt_dist}
   and \ref{th:biascorrection} }
\label{app:DetailsTh}
\subsection{Formulas for Asymptotic Variance Terms}

We define the $JT \times K$ matrix $x^{\lambda f}$,
the $JT \times M$ matrix $z^{\lambda f}$,
and the $JT \times L$ matrix $g$ by
\begin{align}
   x^{\lambda f}_{.,k} &= {\rm vec}\left( M_{\lambda^0} X_{k} M_{f^0} \right) \; , &
   z^{\lambda f}_{.,m} &= {\rm vec}\left( M_{\lambda^0} Z_{m} M_{f^0} \right) \; , &
   g_{.,l} &= - {\rm vec}\left( \nabla_l \, \delta(\alpha^0) \right) \; , &
\end{align}
where $k=1,\ldots,K$, $m=1,\ldots,M$, and $l=1,\ldots,L$.
Note that $x^{\lambda f} = (\mathbbm{1}_T \otimes M_{\lambda^0}) x^f$,
$z^{\lambda f} = (\mathbbm{1}_T \otimes M_{\lambda^0}) z^f$, and
$g$ is the vectorization of the gradient of $\delta(\alpha)$, evaluated at the
true parameter.
We introduce the  $(L+K)\times(L+K)$ matrix $G$
and the $(K+M)\times(K+M)$ matrix $\Omega$ as follows
\begin{align}
        G & = \hspace{-2pt}
          \plim_{J,T \rightarrow \infty}
              \frac 1 {JT}
               \left( \begin{array}{cc} g'  x^{\lambda f} & g' z^{\lambda f} \\
                x^{\lambda f \prime}  x^{\lambda f} &
                    x^{\lambda f \prime} z^{\lambda f} \end{array} \right)  ,
     &
   \Omega &= \hspace{-2pt}\plim_{J,T \rightarrow \infty}  \frac 1 {JT}
          \left(x^{\lambda f}, z^{\lambda f}\right)'
          {\rm diag}(\Sigma^{\rm vec}_e)
                \left(x^{\lambda f}, z^{\lambda f}\right),
   \label{DefGOmega}
\end{align}
where $ \Sigma_e^{\rm vec} = {\rm vec}\left\{ \left[ \mathbbm{E} \left(e_{jt}^2\right)
           \right]_{\begin{minipage}{1.2cm} \tiny
            j=1,\ldots,J\\t=1,\ldots,T\end{minipage}}  \right\}$ is
the $JT$-vector of vectorized variances of $e_{jt}$.
Finally, we define the  $(K+M) \times (K+M)$ weight matrix $\bigW$ by
\begin{align}
                \bigW &=  \plim_{J,T \rightarrow \infty} \Bigg[
        \left( \begin{array}{cc} \left( \frac 1 {JT} x^{\lambda f \prime}
                                           x^{\lambda f} \right)^{-1} & 0_{K\times M} \\
                              0_{M\times K} & 0_{M\times M} \end{array}  \right)
  + \left(\begin{array}{c} -  (x^{\lambda f \prime} x^{\lambda f})^{-1} x^{\lambda f \prime} \, z^{\lambda f} \\ \mathbbm{1}_M \end{array} \right)
      \nonumber \\ & \quad
       \times  \left(\frac 1 {JT} z^{\lambda f \prime} M_{x^{\lambda f}}
       z^{\lambda f } \right)^{-1}
         \, \smallW_{JT} \,
         \left(\frac 1 {JT} z^{\lambda f  \prime} M_{x^{\lambda f}}
             z^{\lambda f } \right)^{-1}
         \left(\begin{array}{c} -  (x^{\lambda f \prime} x^{\lambda f})^{-1} x^{\lambda f \prime} \, z^{\lambda f} \\ \mathbbm{1}_M \end{array} \right)' \Bigg].
      \label{DefBigW}
\end{align}
Existence of these probability limits is imposed by Assumption \ref{ass:A3} below.

\subsection{Formulas for Asymptotic Bias Terms}
\label{app:biasterms}

Here we provide the formulas for the asymptotic bias terms $B_0$, $B_1$
and $B_2$ that enter into Theorem~\ref{th:asympt_dist}.
Let the $J\times 1$ vector
$\Sigma_e^{(1)}$, the $T\times 1$ vector $\Sigma_e^{(2)}$,
and the $T\times T$
matrices $\Sigma^{X,e}_k$, $k=1,\ldots,K$, and $\Sigma^{Z,e}_m$, $m=1,\ldots,M$,
be defined by
\begin{align}
   \Sigma_{e,j}^{(1)} &=
     \frac 1 T \, \sum_{t=1}^T \, \mathbbm{E} \left(e_{jt}^2\right)
      \; ,
  &
   \Sigma_{e,t}^{(2)} &=
      \frac 1 J \, \sum_{j=1}^J \, \mathbbm{E} \left(e_{jt}^2\right)  \; ,
 \nonumber \\
   \Sigma^{X,e}_{k,t\tau} \, &= \, \frac 1 J \, \sum_{j=1}^J \,
                                \mathbbm{E}\left( X_{k,jt} \, e_{j\tau} \right) \; ,
  &
   \Sigma^{Z,e}_{m,t\tau} \, &= \, \frac 1 J \, \sum_{j=1}^J \,
                                \mathbbm{E}\left( Z_{m,jt} \, e_{j\tau} \right) \; ,
\end{align}
where $j=1,\ldots,J$ and
$t,\tau=1,\ldots,T$. Furthermore, let
\begin{align}
        b^{(x,0)}_k &= \plim_{J,T \rightarrow \infty} \,
                      {\rm Tr}\left(  P_{f^0} \, \Sigma^{X,e}_k \right) \; ,
        \nonumber \\
        b^{(x,1)}_k &= \plim_{J,T \rightarrow \infty} \,
       {\rm Tr}\left[ {\rm diag}\left(\Sigma_e^{(1)} \right) \, M_{\lambda^0} \, X_k \,
              f^0 \, (f^{0\prime}f^0)^{-1} \, (\lambda^{0\prime}\lambda^0)^{-1} \, \lambda^{0\prime} \right] \; ,
        \nonumber \\
        b^{(x,2)}_k &= \plim_{J,T \rightarrow \infty} \,
      {\rm Tr}\left[ {\rm diag}\left(\Sigma_e^{(2)} \right) \, M_{f^0} \, X^{\prime}_k \,
              \lambda^0 \, (\lambda^{0\prime}\lambda^0)^{-1} \, (f^{0\prime}f^0)^{-1} \, f^{0\prime} \right] \; ,
        \nonumber \\
        b^{(z,0)}_m &=  \plim_{J,T \rightarrow \infty} \,
               {\rm Tr}\left(  P_{f^0} \, \Sigma^{Z,e}_m \right) \; ,
        \nonumber \\
        b^{(z,1)}_m &= \plim_{J,T \rightarrow \infty} \,
       {\rm Tr}\left[ {\rm diag}\left(\Sigma_e^{(1)} \right) \, M_{\lambda^0} \, Z_m \,
              f^0 \, (f^{0\prime}f^0)^{-1} \, (\lambda^{0\prime}\lambda^0)^{-1} \, \lambda^{0\prime} \right] \; ,
        \nonumber \\
        b^{(z,2)}_m &= \plim_{J,T \rightarrow \infty} \,
      {\rm Tr}\left[ {\rm diag}\left(\Sigma_e^{(2)} \right) \, M_{f^0} \, Z^{\prime}_m \,
              \lambda^0 \, (\lambda^{0\prime}\lambda^0)^{-1} \, (f^{0\prime}f^0)^{-1} \, f^{0\prime} \right] \; ,
    \label{Defbzx012}          
\end{align}
and we set
$b^{(x,i)}=\left(b^{(x,i)}_1,\ldots,b^{(x,i)}_K \right)'$ and
$b^{(z,i)}=\left(b^{(z,i)}_1,\ldots,b^{(z,i)}_M \right)'$,
for $i=0,1,2$.
With these definitions we can now give the expression for the asymptotic bias terms
which appear in Theorem \ref{th:asympt_dist}, namely
\begin{align}
     B_i  &= - \left( G \bigW G' \right)^{-1}
     G \bigW \,
     \left( \begin{array}{c} b^{(x,i)} \\ b^{(z,i)} \end{array} \right)  ,
    \label{DefBias}
\end{align}
where $i=0,1,2$.

\subsection{Additional Assumptions for Asymptotic Distribution and Bias Correction}

In addition to Assumption~\ref{ass:con}, which guarantees consistency
of the LS-MD estimator, we also require the Assumptions~\ref{ass:A2}, \ref{ass:A3}
and \ref{ass:A4} to derive the limiting distribution of the estimator in
Theorem~\ref{th:asympt_dist},
and Assumption~\ref{ass:A5} to provide consistent estimators for
the asymptotic bias and asymptotic covariance matrix in Theorem~\ref{th:biascorrection}.
These additional assumptions are presented below.

\begin{assumption}   $\phantom{a}$
   \label{ass:A2}
    We assume that the limits of $\lambda^{0\prime} \lambda^0 / J$
                 and $f^{0\prime} f^0 / T$ are finite and have full rank, i.e.
  (a)~$\lim_{J,T \rightarrow \infty}\left(\lambda^{0\prime} \lambda^0/J\right) > 0$,
  (b)~$\lim_{J,T \rightarrow \infty} \left( f^{0\prime} f^0 / T \right) > 0$\;.
\end{assumption}

Assumption \ref{ass:A2} guarantees that $\|\lambda^0\|$
and $\|f^0\|$ grow at a rate of $\sqrt{J}$ and $\sqrt{T}$, respectively.
This is a so called ``strong factor'' assumption that makes sure that the influence of the
factors is sufficiently large, so that the principal components estimators
$\widehat \lambda$ and $\widehat f$ can pick up the correct factor loadings and factors.

\begin{assumption}
    \label{ass:A3}
     We assume existence of the probability limits
     $G$, $\Omega$, $\bigW$,
     $b^{(x,i)}$ and $b^{(z,i)}$, $i=0,1,2$.
     In addition, we assume
     $G \bigW G' > 0$ and
     $G \bigW \Omega \bigW G'>0$.
\end{assumption}

\begin{assumption}
    \label{ass:A4}
    $\phantom{a}$
   \begin{itemize}
      \item[(i)] There exist $J\times T$ matrices $r^{\Delta}(\alpha)$
            and $\nabla_l \delta(\alpha^0)$,
            $l=1,\ldots,L$, such that
       \begin{align*}
  \delta(\alpha)-\delta(\alpha^0) &=
               \sum_{l=1}^L (\alpha_l-\alpha_l^0) \,
                \nabla_l \delta(\alpha^0) + r^{\Delta}(\alpha) \; ,
\end{align*}
and
\begin{align*}
   \frac 1 {\sqrt{JT}} \| \nabla_l \delta(\alpha^0) \|_F &= {\cal O}_p(1) \; ,
   && \text{for} \;\; l=1,\ldots,L \; ,
  \nonumber \\
   \sup_{\{\alpha: \, \sqrt{J} \|\alpha-\alpha^0\| < c, \;
                     \alpha \neq \alpha^0 \}}
       \frac{ \frac 1 {\sqrt{JT}} \| r^{\Delta}(\alpha) \|_F }
            {\| \alpha - \alpha^0 \|} &= o_p(1) \; ,
   && \text{for all} \; \; c>0 \; .
\end{align*}

       \item[(ii)] $\|\lambda^0_j\|$ and $\|f^0_t\|$ are uniformly bounded
                 across $j$, $t$, $J$ and $T$.

       \item[(iii)] The errors $e_{jt}$ are independent across $j$ and $t$, they satisfy
                  $\mathbb{E} e_{jt}=0$, and $\mathbb{E} (e_{jt})^{8+\epsilon}$ is bounded
                   uniformly across $j,t$ and $J,T$, for some $\epsilon>0$.
       \item[(iv)] The regressors $X_k$, $k=1,\ldots,K$,
                   (both high- and low rank regressors) and the instruments
                   $Z_m$, $m=1,\ldots,M$,  can be decomposed as
                 $X_k = X^{\rm str}_k + X^{\rm weak}_k$
                 and $Z_m = Z^{\rm str}_m + Z^{\rm weak}_m$.
                 The components $X^{\rm str}_k$ and $Z^{\rm str}_m$ are strictly exogenous,
                 i.e. $X^{\rm str}_{k,jt}$ and $Z^{\rm str}_{m,jt}$ are independent
                 of $e_{j\tau}$ for all $j,i,t,\tau$.
                 The components $X^{\rm weak}_k$ and $Z^{\rm weak}_m$ are weakly exogenous,
                  and we assume
                 \begin{align*}
                    X^{\rm weak}_{k,jt} &= \sum_{\tau=1}^{t-1} \, c_{k,j\tau} \, e_{j,t-\tau} \; ,
                    &
                    Z^{\rm weak}_{m,jt} &= \sum_{\tau=1}^{t-1} \, d_{m,j\tau} \, e_{j,t-\tau} \; ,
                    \label{defXweak}
                 \end{align*}
                 for some coefficients $c_{k,j\tau}$ and $d_{m,j\tau}$ that satisfy
                 \begin{align*}
                    |c_{k,j\tau}| \, &< \, \alpha^{\tau} \; ,
                    &
                    |d_{k,j\tau}| \, &< \, \alpha^{\tau} \; ,
                 \end{align*}
                 where $\alpha \in (0,1)$ is a constant that is independent of $\tau=1,\ldots,T-1$,
                 $j=1\ldots J$, $k=1,\ldots,K$ and $m=1,\ldots,M$.
                 We also assume that $\mathbb{E}[(X^{\rm str}_{k,jt})^{8+\epsilon}]$
                 and $\mathbb{E}[(Z^{\rm str}_{m,jt})^{8+\epsilon}]$ are
                 bounded uniformly over $j,t$ and $J,T$, for some $\epsilon>0$.
   \end{itemize}
\end{assumption}

Assumption \ref{ass:con}$(ii)$ and $(iii)$ are implied by Assumption \ref{ass:A4}, so it would not be
necessary to impose those explicitly in Theorem \ref{th:asympt_dist}.
Part $(ii)$, $(iii)$ and $(iv)$ of Assumption \ref{ass:A4} are identical
to Assumption 5 in Moon and Weidner \cite*{MoonWeidner2015,MoonWeidner2015b}, except for the
appearance of the instruments $Z_m$ here,
which need to be included since they appear as additional regressors in the
first step of our estimation procedure. Part $(i)$ of Assumption
\ref{ass:A4} can for example be justified by assuming that
within any  $\sqrt{J}$-shrinking neighborhood of $\alpha^0$
we have wpa1 that
 $\delta_{jt}(\alpha)$
is differentiable, that $| \nabla_l \delta_{jt}(\alpha) |$
is uniformly bounded across $j$, $t$, $J$ and $T$, and that
$\nabla_l \delta_{jt}(\alpha)$ is Lipschitz continuous with a
Lipschitz constant that is uniformly bounded
across $j$, $t$, $J$ and $T$, for all $l=1,\ldots L$. But since the assumption
is only on the Frobenius norm of the gradient and remainder term,
one can also conceive weaker sufficient conditions for
Assumption \ref{ass:A4}$(i)$.

\begin{assumption}
   \label{ass:A5}
   For all $c>0$ and $l=1,\ldots,L$ we have
   \begin{align*}
      \sup_{\{\alpha: \, \sqrt{JT} \|\alpha-\alpha^0\| < c  \}}
     \| \nabla_l \delta(\alpha) - \nabla_l \delta(\alpha^0) \|_F = o_p(\sqrt{JT}).
   \end{align*}
\end{assumption}
This last assumption is needed to guarantee consistency
of the bias and variance estimators that are presented in the following.

\subsection{Bias and Variance Estimators}
\label{app:BiasVarEst}

Here we present consistent estimators for the matrices
$G$, $\Omega$, and $\bigW$, which enter into the asymptotic variance of
the LS-MD estimator, and for the vectors $B_0$, $B_1$ and $B_2$,
which enter into the asymptotic bias of the estimator.
Consistency of these estimators is stated in Theorem~\ref{th:biascorrection}.

Given the LS-MD estimators $\widehat \alpha$ and $\widehat \beta$, we can define the residuals
\begin{align}
   \widehat e &= \delta(\widehat \alpha,\, s,\, X)
   -  \sum_{k=1}^K \, \widehat \beta_k \, X_{k} - \widehat \lambda \widehat f' \; .
\end{align}
We also define the $JT \times K$ matrix $\widehat x^{\lambda f}$,
the $JT \times M$ matrix $\widehat z^{\lambda f}$,
and the $JT \times L$ matrix $\widehat g$ by
\begin{align}
   \widehat x^{\lambda f}_{.,k} &= {\rm vec}\left( M_{\widehat \lambda} X_{k} M_{\widehat f} \right) \; , &
   \widehat z^{\lambda f}_{.,m} &= {\rm vec}\left( M_{\widehat \lambda} Z_{m} M_{\widehat f} \right) \; , &
   \widehat g_{.,l} &= - {\rm vec}\left( \nabla_l \, \delta(\widehat \alpha) \right) \; , &
\end{align}
where $k=1,\ldots,K$, $m=1,\ldots,M$, and $l=1,\ldots,L$.
The definition of $\widehat \Sigma^{\rm vec}_e$, $\widehat \Sigma_e^{(1)}$ and $\widehat \Sigma_e^{(2)}$
is analogous to that of $\Sigma^{\rm vec}_e$, $\Sigma_e^{(1)}$ and $\Sigma_e^{(2)}$,
but with $\mathbbm{E}(e^2_{jt})$ replaced by $\widehat e^2_{jt}$.
The $T\times T$
matrices $\widehat \Sigma^{X,e}_k$, $k=1,\ldots,K$, and $\widehat \Sigma^{Z,e}_m$, $m=1,\ldots,M$,
are defined by
\begin{align}
   \widehat \Sigma^{X,e}_{k,t\tau} \, &= \,
    \left\{ \begin{array}{ll}
    \frac 1 J \, \sum_{j=1}^J \, X_{k,jt} \, \widehat e_{j\tau}
      & \text{for $0 < t-\tau \leq h$}
     \\
       0 & \text{otherwise}
    \end{array} \right.
  \nonumber \\
   \widehat \Sigma^{Z,e}_{m,t\tau} \, &= \,
    \left\{ \begin{array}{ll}
    \frac 1 J \, \sum_{j=1}^J \, Z_{m,jt} \, \widehat e_{j\tau}
      & \text{for $0 < t-\tau \leq h$}
     \\
       0 & \text{otherwise}
    \end{array} \right.
\end{align}
where $t,\tau=1,\ldots,T$, and $h \in \mathbbm{N}$ is a bandwidth parameter.
Using these objects we define
\begin{align}
        \widehat G
          & =
           \,
              \frac 1 {JT}
               \left( \begin{array}{cc} \widehat g' \, \widehat x^{\lambda f} & \widehat g' \,\widehat z^{\lambda f} \\
                \widehat x^{\lambda f \prime} \, \widehat x^{\lambda f} &
                    \widehat x^{\lambda f \prime} \,\widehat z^{\lambda f} \end{array} \right) \; ,
         \nonumber \\
         \widehat \Omega &=  \, \frac 1 {JT}
          \left(\widehat x^{\lambda f}, \widehat z^{\lambda f}\right)'
          {\rm diag}(\widehat \Sigma^{\rm vec}_e)
                \left(\widehat x^{\lambda f}, \widehat z^{\lambda f}\right)   \; ,
        \nonumber \\
        \widehat b^{(x,0)}_k &=  \,
                      {\rm Tr}\left(  P_{\widehat  f} \, \widehat \Sigma^{X,e}_k \right) \; ,
        \nonumber \\
        \widehat b^{(x,1)}_k &=  \,
       {\rm Tr}\left[ {\rm diag}\left(\widehat \Sigma_e^{(1)} \right) \, M_{\widehat \lambda} \, X_k \,
              \widehat  f \, (\widehat f^{\prime}\widehat  f)^{-1} \, (\widehat \lambda^{\prime}\widehat \lambda)^{-1} \, \widehat \lambda^{\prime} \right] \; ,
        \nonumber \\
        \widehat b^{(x,2)}_k &= \,
      {\rm Tr}\left[ {\rm diag}\left(\widehat \Sigma_e^{(2)} \right) \, M_{\widehat  f} \, X^{\prime}_k \,
              \widehat \lambda \, (\widehat \lambda^{\prime}\widehat \lambda)^{-1} \, (\widehat f^{\prime}\widehat  f)^{-1} \, \widehat f^{\prime} \right] \; ,
        \nonumber \\
        \widehat b^{(z,0)}_m &=   \,
               {\rm Tr}\left(  P_{\widehat  f} \, \widehat \Sigma^{Z,e}_m \right) \; ,
        \nonumber \\
        \widehat b^{(z,1)}_m &= \,
       {\rm Tr}\left[ {\rm diag}\left(\widehat \Sigma_e^{(1)} \right) \, M_{\widehat \lambda} \, Z_m \,
              \widehat  f \, (\widehat f^{\prime}\widehat  f)^{-1} \, (\widehat \lambda^{\prime}\widehat \lambda)^{-1} \, \widehat \lambda^{\prime} \right] \; ,
        \nonumber \\
        \widehat b^{(z,2)}_m &= \,
      {\rm Tr}\left[ {\rm diag}\left(\widehat \Sigma_e^{(2)} \right) \, M_{\widehat  f} \, Z^{\prime}_m \,
              \widehat \lambda \, (\widehat \lambda^{\prime}\widehat \lambda)^{-1} \, (\widehat f^{\prime}\widehat  f)^{-1} \, \widehat f^{\prime} \right] \; ,
\end{align}
for $k=1,\ldots,K$ and $m=1,\ldots,M$.
We set
$\widehat b^{(x,i)}=\left(\widehat b^{(x,i)}_1,\ldots,\widehat b^{(x,i)}_K \right)'$ and
$\widehat b^{(z,i)}=\left(\widehat b^{(z,i)}_1,\ldots,\widehat b^{(z,i)}_M \right)'$,
for $i=0,1,2$. The estimator of $\bigW$ is given by
\begin{align}
   \bigWest &=
        \left( \begin{array}{cc} \left( \frac 1 {JT} \widehat x^{\lambda f \prime}
                                           \widehat x^{\lambda f} \right)^{-1} & 0_{K\times M} \\
                              0_{M\times K} & 0_{M\times M} \end{array}  \right)
  + \left(\begin{array}{c} -  (\widehat x^{\lambda f \prime} \widehat x^{\lambda f})^{-1} \widehat x^{\lambda f \prime} \, \widehat z^{\lambda f} \\ \mathbbm{1}_M \end{array} \right)
         \left(\frac 1 {JT} \widehat z^{\lambda f \prime} M_{\widehat x^{\lambda f}} \widehat z^{\lambda f} \right)^{-1}
      \nonumber \\ & \qquad\qquad\qquad\qquad
         \, \smallW_{JT} \,
         \left(\frac 1 {JT} \widehat z^{\lambda f \prime} M_{\widehat x^{\lambda f}} \widehat z^{\lambda f} \right)^{-1}
         \left(\begin{array}{c} -  (\widehat x^{\lambda f \prime} \widehat x^{\lambda f})^{-1} \widehat x^{\lambda f \prime} \, \widehat z^{\lambda f} \\ \mathbbm{1}_M \end{array} \right)' \; .
\end{align}
Finally, for $i=0,1,2$, we have
\begin{align}
     \widehat B_i  &= - \left( \widehat G  \bigWest  \widehat G' \right)^{-1}
     \widehat G  \bigWest \,
     \left( \begin{array}{c} \widehat b^{(x,i)} \\ \widehat b^{(z,i)} \end{array} \right) \; .
\end{align}
The only subtlety here lies in the definition of $\widehat \Sigma^{X,e}_{k}$
and $\widehat \Sigma^{Z,e}_{m}$, where we explicitly impose the constraint that
$\widehat \Sigma^{X,e}_{k,t\tau}=\widehat \Sigma^{Z,e}_{m,t\tau}=0$ for $t-\tau \leq 0$
and for $t-\tau > h$, where $h \in \mathbbm{N}$ is a bandwidth parameter.
On the one side ($t-\tau \leq 0$) this constraint stems from the assumption that $X_k$ and $Z_m$ are only correlated with past values of the errors $e$, not with present and future values,
on the other side ($t-\tau > h$) we need the bandwidth cutoff to guarantee that the variance
of our estimator for $B_0$ converges to zero. Without imposing
this constraint and introducing the bandwidth parameter, our estimator for $B_0$ would be inconsistent.

\subsection{Correlation of $e_{jt}$ across $j$}

Our assumptions impose that the error term $e_{jt}$ is independent both across products $j$ and over markets/time $t$.
We allow for the regressors $X_{jt}$ to be pre-determined,  for example, lagged dependent variables are allowed, 
and would therefore run into identification problems if we also allowed  $e_{jt}$ to be to be 
correlated over time. However, it would not cause any conceptual problem to allow weak correlation of $e_{jt}$
across products $j$. The above formulas for the asymptotic variance and bias of the LS-MD estimator would need
to be modified as follows:
\begin{itemize}
    \item The diagonal matrix $ {\rm diag}(\Sigma^{\rm vec}_e)$ that appears in the definition of $\Omega$
       in equation~\eqref{DefGOmega} needs to be replaced by the potentially non-diagonal $JT \times JT$
       variance-covariance matrix of the $JT$-vector of error terms ${\rm vec}(e)$. 
       Otherwise, the formula for the asymptotic variance-covariance matrix
       $\left(G  \bigW  G' \right)^{-1}  G  \bigW  \Omega
          \bigW G'  \left(G  \bigW  G' \right)^{-1} $ of the LS-MD estimator is unchanged.
          
     \item The diagonal matrix  ${\rm diag}\left( \Sigma_e^{(1)} \right)$ that enters into the definition
     of $ b^{(x,1)}_k$ and $ b^{(z,1)}_m$ in equation \eqref{Defbzx012} needs to be replaced with 
     the potentially non-diagonal $J \times J$  matrix
      $\frac 1 T \sum_{t=1}^T \Sigma^{(e)}_t$, where $\Sigma^{(e)}_t$ is the variance-covariance matrix of
      the $J$-vector $(e_{jt} \, : \, j=1,\ldots,J)$. After this change and the change of $\Omega$ already described
      above, the asymptotic bias terms $B_0$, $B_1$ and $B_2$ are still given by equation \eqref{DefBias}.
\end{itemize}
 Those two modifications to the asymptotic variance and bias are very much in line with the results in
Bai \cite*{Bai2009}, who allows for cross-sectional dependence in the error term in a linear model with
interactive fixed effects.  We leave the question of bias correction and inference for the case of
cross-sectional dependence in $e_{jt}$ for future work.

\section{Proofs}

\footnotesize

In addition to the vectorizations $x$, $x^{\lambda f}$,
$z$, $z^{\lambda f}$, $g$, and $d(\alpha)$, which were
already defined above, we also introduce
the $JT \times K$ matrix $x^f$, the $JT \times M$ matrix $z^f$, and
the $JT \times 1$ vector $\varepsilon$
by
\begin{align*}
   x^f_{.,k} &= {\rm vec}\left( X_{k} M_{f^0} \right) ,  &
   z^f_{.,m} &= {\rm vec}\left( Z_{m} M_{f^0} \right) , &
   \varepsilon &= {\rm vec}\left( e \right) ,
\end{align*}
where $k=1,\ldots,K$ and $m=1,\ldots,M$.

\subsection{Proof of Identification}

\begin{proof}[\bf Proof of Theorem~\ref{th:id-new}]
To show that any two different \textit{parameters} cannot be observational
equivalent, we introduce the following functional
\[
  Q\left( \alpha ,\beta ,\gamma ,\lambda ,f ; F^0_{s,X,Z} \right) =
  \mathbbm{E}_0 \left\| \delta(\alpha^{})
              - \beta^{} \cdot X
              - \gamma^{} \cdot Z  -  \lambda^{}f^{'} \right\|_F^2 ,
\]
where $\mathbbm{E}_0$ refers to the expectation under the distribution of observables
$F^0_{s,X,Z}$, which is implied by the model, i.e.
$F^0_{s,X,Z}= \Gamma(\alpha^0, \beta^0, \lambda^0 f^{0 \prime}, F^0_{e,X,Z})$.

First, we show that under Assumption~\ref{ass:ID}(i)-(iv), the minima of the
       function $Q\left( \alpha^0 ,\beta ,\gamma ,\lambda ,f; F^0_{s,X,Z}\right)$
       over ($\beta$, $\gamma$, $\lambda$, $f$)
       satisfies
           $\beta=\beta^0$, $\gamma=0$, and $\lambda f' = \lambda^0 f^{0 \prime}$.
       Using model
       \eqref{model0} and Assumption \ref{ass:ID}$(ii)$ and $(iii)$ we find
       \begin{align}
          &Q\left( \alpha^0 ,\beta ,\gamma ,\lambda ,f;F^0_{s,X,Z}\right)
         \nonumber \\
          &= \mathbbm{E}_0 \,
          {\rm Tr}\left\{
          [\delta(\alpha^0)
              - \beta \cdot X - \gamma \cdot Z -  \lambda f']'
          [\delta(\alpha^0)
              - \beta \cdot X - \gamma \cdot Z - \lambda f'] \right\}
          \nonumber \\
          &= \mathbbm{E}_0 \,
          {\rm Tr}\left\{
          [ (\beta^0 - \beta) \cdot X - \gamma \cdot Z
        + \lambda^0 f^{0 \prime} - \lambda f' + e ]'
           [ (\beta^0 - \beta) \cdot X - \gamma \cdot Z
        + \lambda^0 f^{0 \prime} - \lambda f' + e ] \right\}
          \nonumber \\
          &= \mathbbm{E}_0{\rm Tr}( e' e)
            + \underbrace{  \mathbbm{E}_0 \,
          {\rm Tr}\left\{
          [ (\beta^0 - \beta) \cdot X - \gamma \cdot Z
        + \lambda^0 f^{0 \prime} - \lambda f' ]'
           [ (\beta^0 - \beta) \cdot X - \gamma \cdot Z
        + \lambda^0 f^{0 \prime} - \lambda f' ] \right\}
        }_{=Q^*(\beta,\gamma,\lambda,f; F^0_{s,X,Z})}        .
       \end{align}
       Note that $Q^*(\beta,\gamma,\lambda,f; F^0_{s,X,Z}) \geq 0$
       and that $Q^*(\beta^0,0,\lambda^0,f^0; F^0_{s,X,Z}) = 0$.
       Thus, the minimum value of
       $Q\left( \alpha^0 ,\beta , \gamma ,\lambda ,f; F^0_{s,X,Z}\right)$
       equals $\mathbbm{E}_0{\rm Tr}( e' e)$
       and all parameters that minimize
       $Q\left( \alpha^0 ,\beta , \gamma ,\lambda ,f; F^0_{s,X,Z}\right)$
       must satisfy $Q^*(\beta,\gamma,\lambda,f;F^0_{s,X,Z}) = 0$.
       We have for any $\lambda$ and $f$
     \begin{align}
         Q^*(\beta,\gamma,\lambda,f;F^0_{s,X,Z})
           &\geq
       \mathbbm{E}_0 \, {\rm Tr}\{ [(\beta^0 - \beta) \cdot X - \gamma \cdot Z ]'
        M_{(\lambda,\lambda^0)} [(\beta^0 - \beta) \cdot X - \gamma \cdot Z ] \}
       \nonumber \\
           &=
         [(\beta^0-\beta)',\gamma']
    \mathbbm{E}_0[(x,z)' (\mathbbm{1}_T \otimes M_{(\lambda,\lambda^0)}) (x,z)]
        [(\beta^0-\beta)',\gamma']'
        \nonumber \\
            & \geq b \left( \|\beta-\beta^0\|^2 + \|\gamma\|^2 \right)^2,
     \end{align}
      where the last line holds by Assumption \ref{ass:ID}$(iv)$.
     This shows that $\beta=\beta^0$ and $\gamma=0$ is necessary to
       minimize $Q\left( \alpha^0 ,\beta , \gamma ,\lambda ,f;F^0_{s,X,Z}\right)$.
        Since ${\rm Tr}(AA')=0$ for a matrix $A$ implies $A=0$, we find
       that  $Q^*(\beta^0,0,\lambda,f;F^0_{s,X,Z})=0$  implies
        $\beta=\beta^0$, $\gamma=0$ and $\lambda^0 f^{0 \prime} - \lambda f' = 0$.
     We have thus shown that $Q\left( \alpha^0 ,\beta ,\gamma ,\lambda ,f;F^0_{s,X,Z}\right)$
       is minimized if and only if
       $\beta=\beta^0$, $\gamma=0$ and  $\lambda f' = \lambda^0 f^{0 \prime}$.

For the second part, we introduce a second functional; for a given
$\alpha$ we define:
\begin{equation}
   \gamma(\alpha; F^0_{s,X,Z})
   \in \text{argmin}_{\gamma} \min_{\beta, \lambda, f}
         Q\left(  \alpha ,\beta ,\gamma ,\lambda ,f ; F^0_{s,X,Z} \right) .
    \label{DefGammaId}
\end{equation}
We show that under
Assumption~\ref{ass:ID}(i)-(v), $\gamma(\alpha; F^0_{s,X,Z})=0$ implies $\alpha=\alpha^0$.
    From part $(i)$ we already know that $\gamma(\alpha^0; F^0_{s,X,Z})=0$.
      The proof proceeds by contradiction. Assume that $\gamma(\alpha; F^0_{s,X,Z})=0$
      for $\alpha \neq \alpha^0$. By definition of $\gamma(\cdot)$ in
      Eq. (\ref{DefGammaId}), this
      implies that there exists
      $\tilde \beta$, $\tilde \lambda$ and $\tilde f$ such that
\begin{equation}\label{Inequality}
Q\left( \alpha ,\tilde \beta ,0 ,\tilde \lambda , \tilde f; F^0_{s,X,Z}\right)
       \leq \min_{\beta,\gamma,\lambda,f}
         Q\left( \alpha ,\beta ,\gamma ,\lambda ,f;F^0_{s,X,Z}\right).
\end{equation}
      Using model \eqref{model0} and our assumptions
      we obtain the following lower bound
      for the lhs of  inequality~\eqref{Inequality}
      \begin{align}
         & Q \left( \alpha ,\tilde \beta ,0 ,\tilde \lambda , \tilde f;F^0_{s,X,Z}\right)
          =
         \mathbbm{E}_0 {\rm Tr} \left[
           \left( \delta(\alpha)
              - \tilde \beta \cdot X
              -  \tilde \lambda \tilde f' \right)'
           \left( \delta(\alpha)
              - \tilde \beta \cdot X
              -  \tilde \lambda \tilde f' \right) \right]
        \nonumber \\
         & \quad =
         \mathbbm{E}_0 {\rm Tr} \Big[
           \left( \delta(\alpha) - \delta(\alpha^0)
              - (\tilde \beta - \beta^0) \cdot X
              + \lambda^0 f^0
              -  \tilde \lambda \tilde f' + e \right)'
          \nonumber \\  & \qquad \qquad \qquad \qquad
           \left( \delta(\alpha) - \delta(\alpha^0)
              - (\tilde \beta - \beta^0) \cdot X
              + \lambda^0 f^0
              -  \tilde \lambda \tilde f' + e \right) \Big]
        \nonumber \\
         & \quad =
         2 \mathbbm{E}_0 {\rm Tr} \Big[
           \left( \delta(\alpha) - \delta(\alpha^0) + \ft 1 2 e \right)' e \Big]
          +
         \mathbbm{E}_0 {\rm Tr} \Big[
           \left( \delta(\alpha) - \delta(\alpha^0)
              - (\tilde \beta - \beta^0) \cdot X
              + \lambda^0 f^0
              -  \tilde \lambda \tilde f' \right)'
          \nonumber \\  & \qquad \qquad \qquad \qquad
                    \qquad \qquad \qquad  \qquad \qquad  \;
           \left( \delta(\alpha) - \delta(\alpha^0)
              - (\tilde \beta - \beta^0) \cdot X
              + \lambda^0 f^0
              -  \tilde \lambda \tilde f' \right) \Big]
        \nonumber \\
         & \quad \geq
         2 \mathbbm{E}_0 {\rm Tr} \Big[
           \left( \delta(\alpha) - \delta(\alpha^0) + \ft 1 2 e \right)' e \Big]
          +
         \mathbbm{E}_0 {\rm Tr} \Big[
           \left( \delta(\alpha) - \delta(\alpha^0)
              - (\tilde \beta - \beta^0) \cdot X \right)'
              M_{(\tilde \lambda,\lambda^0)}
          \nonumber \\  & \qquad \qquad \qquad \qquad
                    \qquad \qquad \qquad  \qquad \qquad \qquad \qquad  \;
           \left( \delta(\alpha) - \delta(\alpha^0)
              - (\tilde \beta - \beta^0) \cdot X
              \right) \Big]
        \nonumber \\
         & \quad =
         2 \mathbbm{E}_0 {\rm Tr} \Big[
           \left( \delta(\alpha) - \delta(\alpha^0) + \ft 1 2 e \right)' e \Big]
          +  \mathbbm{E}_0\left[ \Delta \xi_{\alpha,\tilde \beta}'
     \left( \mathbbm{1}_T \otimes M_{(\tilde \lambda,\lambda^0)}\right)
     \Delta \xi_{\alpha,\tilde \beta} \right]
        \nonumber \\
         & \quad =
         2 \mathbbm{E}_0 {\rm Tr} \Big[
           \left( \delta(\alpha) - \delta(\alpha^0) + \ft 1 2 e \right)' e \Big]
          + \mathbbm{E}_0\left[ \Delta \xi_{\alpha,\tilde \beta}'
     \Delta \xi_{\alpha,\tilde \beta} \right]
          - \mathbbm{E}_0\left[ \Delta \xi_{\alpha,\tilde \beta}'
     \left( \mathbbm{1}_T \otimes P_{(\tilde \lambda,\lambda^0)}\right)
     \Delta \xi_{\alpha,\tilde \beta} \right]  .
      \end{align}
      Similarly, we obtain the following upper bound for the rhs of the
      above inequality (\ref{Inequality})
      \begin{align}
  &\min_{\beta,\gamma,\lambda,f} Q\left( \alpha ,\beta ,\gamma ,\lambda ,f;F^0_{s,X,Z}\right)
  \leq \min_{\beta,\gamma}
             Q\left( \alpha ,\beta ,\gamma, \lambda^0, f^0; F^0_{s,X,Z}\right)
   \nonumber \\
   & \quad = \min_{\beta,\gamma} \mathbbm{E}_0 {\rm Tr} \left[
           \left( \delta(\alpha)
              - \beta \cdot X
              -  \gamma \cdot Z - \lambda^0 f^{0 \prime}\right)'
           \left( \delta(\alpha)
              - \beta \cdot X
              -  \gamma \cdot Z - \lambda^0 f^{0 \prime} \right) \right]
   \nonumber \\
   & \quad = \min_{\beta,\gamma} \mathbbm{E}_0 {\rm Tr} \Big[
           \left( \delta(\alpha) - \delta(\alpha^0)
              - (\beta-\beta^0) \cdot X
              -  \gamma \cdot Z + e \right)'
         \nonumber \\ & \qquad \qquad \qquad \qquad \qquad
           \left( \delta(\alpha) - \delta(\alpha^0)
              - (\beta - \beta^0) \cdot X
              -  \gamma \cdot Z +e \right) \Big]
  \nonumber \\
  & \quad =
         2 \mathbbm{E}_0 {\rm Tr} \Big[
           \left( \delta(\alpha) - \delta(\alpha^0) + \ft 1 2 e \right)' e \Big]
        + \min_{\beta,\gamma} \mathbbm{E}_0 {\rm Tr} \Big[
           \left( \delta(\alpha) - \delta(\alpha^0)
              - (\beta-\beta^0) \cdot X
              -  \gamma \cdot Z  \right)'
         \nonumber \\ & \qquad \qquad \qquad \qquad \qquad \qquad \qquad \qquad
               \qquad \qquad \qquad
           \left( \delta(\alpha) - \delta(\alpha^0)
              - (\beta - \beta^0) \cdot X
              -  \gamma \cdot Z \right) \Big]
  \nonumber \\
   & \quad =
               2 \mathbbm{E}_0 {\rm Tr} \Big[
           \left( \delta(\alpha) - \delta(\alpha^0) + \ft 1 2 e \right)' e \Big]
        \nonumber \\ & \qquad \qquad
        + \min_{\beta,\gamma}
         \mathbbm{E}_0\left[
         \left(\Delta \xi_{\alpha,\tilde \beta}
         - x (\beta - \tilde \beta) - z \gamma \right)'
          \left(\Delta \xi_{\alpha,\tilde \beta}
         - x (\beta - \tilde \beta) - z \gamma \right) \right]
  \nonumber \\
   & \quad =
               2 \mathbbm{E}_0 {\rm Tr} \Big[
           \left( \delta(\alpha) - \delta(\alpha^0) + \ft 1 2 e \right)' e \Big]
          +\mathbbm{E}_0\left[ \Delta \xi_{\alpha,\tilde \beta}'
     \Delta \xi_{\alpha,\tilde \beta} \right]
        \nonumber \\ & \qquad \qquad
        - \mathbbm{E}_0\big[ \Delta \xi_{\alpha,\tilde \beta}' \, (x,z) \big]
     \mathbbm{E}_0\big[ (x,z)' (x,z) \big]^{-1}
     \mathbbm{E}_0\big[ (x,z)' \, \Delta \xi_{\alpha,\tilde \beta}  \big]  .
     \end{align}
     Plugging these bounds in the original inequality we obtain
     \begin{align}
        \mathbbm{E}_0\big[ \Delta \xi_{\alpha,\tilde \beta}' \, (x,z) \big]
     \mathbbm{E}_0\big[ (x,z)' (x,z) \big]^{-1}
     \mathbbm{E}_0\big[ (x,z)' \, \Delta \xi_{\alpha,\tilde \beta}  \big]
         \leq
       \mathbbm{E}_0\left[ \Delta \xi_{\alpha,\tilde \beta}'
     \left( \mathbbm{1}_T \otimes P_{(\tilde \lambda,\lambda^0)}\right)
     \Delta \xi_{\alpha,\tilde \beta} \right] ,
     \end{align}
     which is a contradiction to Assumption \ref{ass:ID}$(v)$.

We have thus shown that
     $\gamma(\alpha; F^0_{s,X,Z})=0$ implies $\alpha=\alpha^0$, which shows that $\alpha^0$
     is uniquely identified from $F^0_{s,X,Z}$.
Using that $\alpha^0$ is identified, we can now use the first part of the proof, and
uniquely identify $\beta^0$ and $\lambda^0 f^{0\prime}$ from $F^0_{s,X,Z}$
as the unique minimizers of $Q( \alpha^0 ,\beta ,\gamma ,\lambda ,f ;F^0_{s,X,Z})$.  Note that these findings immediately preclude
observational equivalence,
  {\em viz} two sets of distinct parameters $(\alpha^0, \beta^0, \lambda^0,
  f^0)\neq (\alpha^1, \beta^1, \lambda^1, f^1)$ which are both consistent
  with the observed distribution $F^0_{s,X,Z}$.

  Assumption~\ref{ass:INV} guarantees that for given  $\alpha^0$, $\beta^0$ and $\lambda^0 f^{0\prime}$
  the map $F^0_{s,X,Z} = \Gamma(\alpha^0, \beta^0, \lambda^0 f^{0 \prime}, F^0_{e,X,Z})$
  from $F^0_{e,X,Z}$ to $F^0_{s,X,Z}$
  is invertible, i.e. we can  uniquely identify $F^0_{e,X,Z}$ from $F^0_{s,X,Z}$.
\end{proof}

\subsection{Proof of Consistency}

\begin{proof}[\bf Proof of Theorem~\ref{th:consistency}]
   \# Part 1: We show that for any consistent estimator $\widehat \alpha$
     (not necessarily the LS-MD estimator)
      we have  $\tilde \beta_{\widehat \alpha} = \beta^0 + o_p(1)$ and
      $\tilde \gamma_{\widehat \alpha} = o_p(1)$. Thus, for this part of the proof
      assume that $\widehat \alpha = \alpha^0 + o_p(1)$.
       This part of the proof is a direct extension of the consistency proof
       in Moon and Weidner \cite*{MoonWeidner2015b}.
       We denote the least square objective function by
       $Q_{JT}(\alpha,\beta,\gamma,\lambda,f) =  \frac 1 {JT} \left\| \delta(\alpha)
              - \beta \cdot X
              - \gamma \cdot Z  -  \lambda f' \right\|_F^2$.
        We first establish a lower bound on $Q_{JT}(\widehat \alpha,\beta,\gamma,\lambda,f)$.
   We have for all $\lambda$, $f$:
    \begin{align}
       Q_{JT}(\widehat \alpha,\beta,\gamma,\lambda,f)
       &=   \frac 1 {JT}  {\rm Tr} \left[
            \left( \delta(\widehat \alpha)
              - \beta \cdot X
              - \gamma \cdot Z  -  \lambda f' \right)'
            \left( \delta(\widehat \alpha)
              - \beta \cdot X
              - \gamma \cdot Z  -  \lambda f' \right) \right]
       \nonumber \\
       &\geq
        \frac 1 {JT}  {\rm Tr} \left[
            \left( \delta(\widehat \alpha)
              - \beta \cdot X
              - \gamma \cdot Z  -  \lambda f' \right)' M_{(\lambda,\lambda^0)}
            \left( \delta(\widehat \alpha)
              - \beta \cdot X
              - \gamma \cdot Z  -  \lambda f' \right) \right]
          \nonumber \\
          &=  \frac 1 {JT}  {\rm Tr} \bigg[
            \left( (\delta(\widehat \alpha) - \delta(\alpha^0)) + e - (\beta-\beta^0) \cdot X
              - \gamma \cdot Z   \right)' M_{(\lambda,\lambda^0)}
                \nonumber \\ & \qquad \qquad \qquad \qquad \qquad
            \left( (\delta(\widehat \alpha) - \delta(\alpha^0)) + e - (\beta-\beta^0) \cdot X
              - \gamma \cdot Z   \right) \bigg]
         \nonumber \\
           & \geq b \, \| \beta - \beta^0 \|^2 + b \, \| \gamma \|^2
                + o_p\left( \| \beta - \beta^0 \| +  \| \gamma - \gamma^0 \| \right)
                          + \frac 1 {JT} \, {\rm Tr} \left( e  e' \right)
                          + o_p(1).
    \end{align}
    where in the last line we used Assumption~\ref{ass:con}$(i)$, $(ii)$, $(iii)$,
    and $(iv)$. Here are some representative examples of how the bounds
    in this last step are obtained from these assumptions:
    \begin{align}
          & \frac 1 {JT}  {\rm Tr} \left[
            \left(  (\beta-\beta^0) \cdot X
              - \gamma \cdot Z   \right)' M_{(\lambda,\lambda^0)}
             \left(   (\beta-\beta^0) \cdot X - \gamma \cdot Z   \right) \right]
          \nonumber \\
            & \qquad \qquad =
          (\beta',\gamma')
           [  \ft 1 {JT} (x,z)' (\mathbbm{1}_T \otimes M_{(\lambda,\lambda^0)}) (x,z)]
           (\beta',\gamma')'
          \nonumber \\
            & \qquad  \qquad
             \geq b  (\beta',\gamma') (\beta',\gamma')'
              = b \, \| \beta - \beta^0 \|^2 + b \, \| \gamma \|^2 ,
           \nonumber \\
           & \left| \frac 1 {JT}  {\rm Tr} \left[
             (\delta(\widehat \alpha) - \delta(\alpha^0))' M_{(\lambda,\lambda^0)}
            \left( (\beta-\beta^0) \cdot X  \right) \right]  \right|
          \nonumber \\
            & \qquad  \qquad
                \leq        \frac 1 {JT}
                    \left\| \delta(\widehat \alpha) - \delta(\alpha^0) \right\|_F
                    \left\| M_{(\lambda,\lambda^0)} \left( (\beta-\beta^0) \cdot X  \right) \right\|_F
          \nonumber \\
            & \qquad  \qquad
                \leq        \frac 1 {JT}
                    \left\| \delta(\widehat \alpha) - \delta(\alpha^0) \right\|_F
                    \left\| (\beta-\beta^0) \cdot X  \right\|_F
          \nonumber \\
            & \qquad  \qquad
                  = {\cal O}_p(1)       \left\| \widehat \alpha - \alpha^0 \right\|
                                \left\| \beta - \beta^0 \right\|     = o_p(     \left\| \beta - \beta^0 \right\|  )  ,
          \nonumber \\
           & \left|  \frac 1 {JT}  {\rm Tr} \bigg[
             e' M_{(\lambda,\lambda^0)}  \left( (\beta-\beta^0) \cdot X   \right) \bigg]   \right|
          \nonumber \\
            & \qquad  \qquad
                 =  \left| \frac 1 {JT}  {\rm Tr} \bigg[
             e'  \left( (\beta-\beta^0) \cdot X   \right) \bigg] \right|
                + \left|  \frac 1 {JT}  {\rm Tr} \bigg[
             e' P_{(\lambda,\lambda^0)}  \left( (\beta-\beta^0) \cdot X   \right) \bigg] \right|
          \nonumber \\
            & \qquad  \qquad
                \leq o_p(1) \|\beta-\beta^0 \| +
                   \frac R {JT} \|e\| \left\| (\beta-\beta^0) \cdot X  \right\|
          \nonumber \\
            & \qquad  \qquad
                \leq o_p(1) \|\beta-\beta^0 \| +
                   \frac R {JT} \|e\| \left\| (\beta-\beta^0) \cdot X  \right\|_F
                 = o_p(     \left\| \beta - \beta^0 \right\|  )  .
            \label{SomeBoundsNeeded}
    \end{align}
    See the supplementary material in
    Moon and Weidner \cite*{MoonWeidner2015}
    for further details regarding the algebra here.
    Applying the same methods, we also obtain
    \begin{align}
         Q_{JT}(\widehat \alpha,\beta^0,0,\lambda^0,f^0)
          =  \frac 1 {JT} \, {\rm Tr} \left( e  e' \right) + o_p(1).
    \end{align}
    Since we could choose $\beta=\beta^0$,
    $\gamma=0$, $\lambda=\lambda^0$ and $f=f^0$
     in the first step minimization of the LS-MD estimator, the optimal LS-MD
     first stage parameters at $\widehat \alpha$ need to
    satisfy $Q_{JT}(\widehat \alpha,\tilde \beta_{\widehat \alpha},\tilde \gamma_{\widehat \alpha},\tilde \lambda_{\widehat \alpha},\tilde f_{\widehat \alpha}) \leq Q_{JT}(\widehat \alpha,\beta^0,0,\lambda^0,f^0)$.
    Using the above results thus gives
    \begin{align}
          b \, \| \tilde \beta_{\widehat \alpha} - \beta^0 \|^2 + b \, \| \tilde \gamma_{\widehat \alpha} \|^2
                + o_p\left( \| \tilde \beta_{\widehat \alpha} - \beta^0 \| +  \| \tilde \gamma_{\widehat \alpha} - \gamma^0 \| \right)
                          +  o_p(1)
                          \, \leq  \, 0 \; .
    \end{align}
    It follows that
    $\|  \tilde \beta_{\widehat \alpha} - \beta^0 \|  = o_p(1)$
    and $\tilde \gamma_{\widehat \alpha} = o_p(1)$.

   \# Part 2: Now, let $\widehat \alpha$ be the LS-MD estimator. We want to show that
      $\widehat \alpha - \alpha^0 = o_p(1)$. From part 1 of the proof we already know that
      $\tilde \gamma_{\alpha^0} = o_p(1)$.
   In the second step of the LS-MD estimator
     the optimal choice $\widehat \alpha$ minimizes
$\tilde \gamma'_{\widehat \alpha} \, \smallW_{JT} \, \tilde \gamma_{\widehat \alpha}$,
which implies that
\begin{align}
   \tilde \gamma'_{\widehat \alpha} \, \smallW_{JT} \, \tilde \gamma_{\widehat \alpha}
   \; \leq \; \tilde \gamma'_{\alpha^0} \,
   \smallW_{JT} \, \tilde \gamma_{\alpha^0} \; = \; o_p(1) \; ,
   \label{gammaBound}
\end{align}
and therefore $\tilde \gamma_{\widehat \alpha} = o_p(1)$.
Here we used that
$\smallW_{JT}$ converges to a positive definite matrix in probability.
Analogous to the identification proof we are now going to find an upper
and a lower bound for
$Q_{JT}\left( \widehat \alpha ,\tilde \beta_{\widehat \alpha} ,
\tilde \gamma_{\widehat \alpha} ,\tilde \lambda_{\widehat \alpha} ,
\tilde f_{\widehat \alpha} \right)$.
In the rest of this proof we drop the subscript
$\widehat \alpha$ on $\tilde \beta$, $\tilde \gamma$, $\tilde \lambda$
and $\tilde f$.
      Using model \eqref{model0} and our assumptions
      we obtain the following lower bound
      \begin{align}
         & Q_{JT}\left( \widehat \alpha ,\tilde \beta ,
\tilde \gamma ,\tilde \lambda ,
\tilde f \right)
          =
         \ft 1 {JT} {\rm Tr} \left[
           \left( \delta(\widehat \alpha)
              - \tilde \beta \cdot X
              - \tilde \gamma \cdot Z
              -  \tilde \lambda \tilde f' \right)'
           \left( \delta(\widehat \alpha)
              - \tilde \beta \cdot X
              - \tilde \gamma \cdot Z
              -  \tilde \lambda \tilde f' \right) \right]
        \nonumber \\
         & \quad =
          \ft 1 {JT} {\rm Tr} \Big[
           \left( \delta(\widehat \alpha) - \delta(\alpha^0)
              - (\tilde \beta - \beta^0) \cdot X
               - \tilde \gamma \cdot Z
              + \lambda^0 f^0
              -  \tilde \lambda \tilde f' + e \right)'
          \nonumber \\  & \qquad \qquad \qquad \qquad
           \left( \delta(\widehat \alpha) - \delta(\alpha^0)
              - (\tilde \beta - \beta^0) \cdot X
               - \tilde \gamma \cdot Z
              + \lambda^0 f^0
              -  \tilde \lambda \tilde f' + e \right) \Big]
       \nonumber \\
         & \quad \geq
          \ft 1 {JT} {\rm Tr} \Big[
           \left( \delta(\widehat \alpha) - \delta(\alpha^0)
              - (\tilde \beta - \beta^0) \cdot X
               - \tilde \gamma \cdot Z
              + e \right)'  M_{(\tilde \lambda,\lambda^0)}
          \nonumber \\  & \qquad \qquad \qquad \qquad \qquad
          \left( \delta(\widehat \alpha) - \delta(\alpha^0)
              - (\tilde \beta - \beta^0) \cdot X
               - \tilde \gamma \cdot Z
             + e \right) \Big]
        \nonumber \\
         & \quad =
             \ft 1 {JT}
          {\rm Tr} \Big[
           \left( \delta(\widehat \alpha) - \delta(\alpha^0)
              - (\tilde \beta - \beta^0) \cdot X \right)'
              M_{(\tilde \lambda,\lambda^0)}
           \left( \delta(\widehat \alpha) - \delta(\alpha^0)
              - (\tilde \beta - \beta^0) \cdot X
              \right) \Big]
         \nonumber \\ & \qquad \qquad
       +   \ft 2 {JT}  {\rm Tr} \Big[
           \left( \delta(\widehat \alpha) - \delta(\alpha^0) + \ft 1 2 e \right)' e \Big]
           + o_p(\| \widehat \alpha - \alpha^0\| + \|\tilde \beta - \beta^0\|  ) + o_p(1)
       \nonumber \\
         & \quad =
         \ft 1 {JT} \left[ \Delta \xi_{\widehat \alpha,\tilde \beta}'
     \left( \mathbbm{1}_T \otimes M_{(\tilde \lambda,\lambda^0)}\right)
     \Delta \xi_{\widehat \alpha,\tilde \beta} \right]
        \nonumber \\
        & \qquad \qquad
      +  \ft 2 {JT}  {\rm Tr} \Big[
           \left( \delta(\widehat \alpha) - \delta(\alpha^0) + \ft 1 2 e \right)' e \Big]
           + o_p(\| \widehat \alpha - \alpha^0\| + \|\tilde \beta - \beta^0\|  ) + o_p(1)
        \nonumber \\
         & \quad =
        \ft 1 {JT} \left[ \Delta \xi_{\widehat \alpha,\tilde \beta}'
     \Delta \xi_{\widehat \alpha,\tilde \beta} \right]
          - \ft 1 {JT}  \left[ \Delta \xi_{\widehat \alpha,\tilde \beta}'
     \left( \mathbbm{1}_T \otimes P_{(\tilde \lambda,\lambda^0)}\right)
     \Delta \xi_{\widehat \alpha,\tilde \beta} \right]
       \nonumber \\
      & \qquad \qquad
     +  \ft 2 {JT}  {\rm Tr} \Big[
           \left( \delta(\widehat \alpha) - \delta(\alpha^0) + \ft 1 2 e \right)' e \Big]
           + o_p(\| \widehat \alpha - \alpha^0\| + \|\tilde \beta - \beta^0\|  )  + o_p(1).
           \label{con_proof_lower_bnd}
      \end{align}
      The bounds used here are analogous to those in \eqref{SomeBoundsNeeded},
     and we again refer to the supplementary material in
    Moon and Weidner \cite*{MoonWeidner2015}.

      Similarly, we obtain the following upper bound
   \begin{align}
   &  Q_{JT}\left( \widehat \alpha ,\tilde \beta_{\widehat \alpha} ,
\tilde \gamma_{\widehat \alpha} ,\tilde \lambda_{\widehat \alpha} ,
\tilde f_{\widehat \alpha} \right)
= \min_{\beta,\gamma,\lambda,f}
         Q_{JT}\left( \widehat \alpha ,\beta ,\gamma ,\lambda ,f\right)
  \leq \min_{\beta,\gamma}
             Q_{JT}\left( \widehat \alpha,\beta ,\gamma, \lambda^0, f^0\right)
   \nonumber \\
   & \quad = \min_{\beta,\gamma} \ft 1 {JT} {\rm Tr} \left[
           \left( \delta(\widehat \alpha)
              - \beta \cdot X
              -  \gamma \cdot Z - \lambda^0 f^{0 \prime}\right)'
           \left( \delta(\widehat \alpha)
              - \beta \cdot X
              -  \gamma \cdot Z - \lambda^0 f^{0 \prime} \right) \right]
   \nonumber \\
   & \quad = \min_{\beta,\gamma}  \ft 1 {JT} {\rm Tr} \Big[
           \left( \delta(\widehat \alpha) - \delta(\alpha^0)
              - (\beta-\beta^0) \cdot X
              -  \gamma \cdot Z + e \right)'
         \nonumber \\ & \qquad \qquad \qquad \qquad \qquad
           \left( \delta(\widehat \alpha) - \delta(\alpha^0)
              - (\beta - \beta^0) \cdot X
              -  \gamma \cdot Z +e \right) \Big]
  \nonumber \\
  & \quad =
         \ft 2 {JT}  {\rm Tr} \Big[
           \left( \delta(\widehat \alpha) - \delta(\alpha^0) + \ft 1 2 e \right)' e \Big]
        + \min_{\beta,\gamma} \ft 1 {JT} {\rm Tr} \Big[
           \left( \delta(\widehat \alpha) - \delta(\alpha^0)
              - (\beta-\beta^0) \cdot X
              -  \gamma \cdot Z  \right)'
         \nonumber \\ & \qquad \qquad \qquad \qquad \qquad  \qquad \qquad
                           \qquad  \qquad \qquad
           \left( \delta(\widehat \alpha) - \delta(\alpha^0)
              - (\beta - \beta^0) \cdot X
              -  \gamma \cdot Z \right) \Big]
  \nonumber \\
   & \quad =
               \ft 2 {JT}  {\rm Tr} \Big[
           \left( \delta(\widehat \alpha) - \delta(\alpha^0) + \ft 1 2 e \right)' e \Big]
        \nonumber \\ & \qquad \qquad
        + \min_{\beta,\gamma}
         \ft 1 {JT} \left[
         \left(\Delta \xi_{\widehat \alpha,\tilde \beta}
         - x (\beta - \tilde \beta) - z \gamma \right)'
          \left(\Delta \xi_{\widehat \alpha,\tilde \beta}
         - x (\beta - \tilde \beta) - z \gamma \right) \right]
  \nonumber \\
   & \quad =
               \ft 2 {JT}  {\rm Tr} \Big[
           \left( \delta(\widehat \alpha) - \delta(\alpha^0) + \ft 1 2 e \right)' e \Big]
          + \ft 1 {JT} \left[ \Delta \xi_{\alpha,\tilde \beta}'
     \Delta \xi_{\widehat \alpha,\tilde \beta} \right]
        \nonumber \\ & \qquad \qquad
        -  \ft 1 {JT} \big[ \Delta \xi_{\widehat \alpha,\tilde \beta}' \, (x,z) \big]
     \big[ (x,z)' (x,z) \big]^{-1}
     \big[ (x,z)' \, \Delta \xi_{\widehat \alpha,\tilde \beta}  \big]  .
     \end{align}
     Combining this upper and lower bound we obtain
     \begin{align}
       &  \ft 1 {JT} \big[ \Delta \xi_{\widehat \alpha,\tilde \beta}' \, (x,z) \big]
     \big[ (x,z)' (x,z) \big]^{-1}
     \big[ (x,z)' \, \Delta \xi_{\widehat \alpha,\tilde \beta}  \big]
    \nonumber \\ & \qquad \qquad \qquad \qquad
         -
      \ft 1 {JT} \left[ \Delta \xi_{\widehat \alpha,\tilde \beta}'
     \left( \mathbbm{1}_T \otimes P_{(\tilde \lambda,\lambda^0)}\right)
     \Delta \xi_{\widehat \alpha,\tilde \beta} \right]
        \leq  o_p(\| \widehat \alpha - \alpha^0\| + \|\tilde \beta - \beta^0\|  ) + o_p(1).
     \end{align}
    Using Assumption~\ref{ass:con}$(v)$ we thus obtain
     \begin{align}
       &  b  \| \widehat \alpha - \alpha^0\|^2 + b \|\tilde \beta - \beta^0\| ^2
        \leq  o_p(\| \widehat \alpha - \alpha^0\| + \|\tilde \beta - \beta^0\|  ) + o_p(1),
     \end{align}
     from which we can conclude that
     $\| \widehat \alpha - \alpha^0 \|= o_p(1)$
     and $\| \tilde \beta - \beta^0 \| = o_p(1)$.

     \# Part 3: Showing consistency of $\widehat \beta$ obtained from step 3 of the LS-MD
        estimation procedure is analogous to part 1 of the proof --- one only needs
        to eliminate all $\gamma$ variables from part 1 of the proof,
        which actually simplifies the proof.
\end{proof}

\subsection{Proof of Limiting Distribution}

\begin{lemma}
   \label{lemma:sqrtJTalpha}
   Let Assumption~\ref{ass:con}
   be satisfied and in addition let
   $(JT)^{-1/2} {\rm Tr}(e X_k')={\cal O}_p(1)$,
   and $(JT)^{-1/2} {\rm Tr}(e Z_m')={\cal O}_p(1)$.
   In the limit $J,T \rightarrow \infty$ with $J/T \rightarrow \kappa^2$,
   $0 < \kappa < \infty$, we then have $\sqrt{J}(\widehat \alpha - \alpha)={\cal O}_p(1)$.
\end{lemma}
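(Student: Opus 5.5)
Our plan is to redo the upper-bound/lower-bound argument from Part 2 of the proof of Theorem~\ref{th:consistency}, this time keeping track of the rates of the error terms instead of just showing they are $o_p(1)$. Throughout, $\widehat\alpha$ and $\tilde\beta = \tilde\beta_{\widehat\alpha}$ are already known to be consistent. We write $a = \|\widehat\alpha - \alpha^0\|$ and $c = \|\tilde\beta - \beta^0\|$.

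\emph{Step 1: the rate of $\tilde\gamma_{\widehat\alpha}$.} We first sharpen Part 1 at $\alpha = \alpha^0$. In the lower bound for $Q_{JT}(\alpha^0,\beta,\gamma,\lambda,f)$ the $\delta$-difference terms vanish. The linear terms are of two kinds:
\begin{itemize}
\item $\frac{1}{JT}{\rm Tr}(e' X_k)$ and $\frac{1}{JT}{\rm Tr}(e' Z_m)$, which are ${\cal O}_p(1/\sqrt{JT})$ by the added hypothesis;
\item $\frac{1}{JT}{\rm Tr}(e' P_{(\lambda,\lambda^0)} X_k)$, bounded by $\frac{2R}{JT}\|e\|\,\|X_k\|_F = {\cal O}_p(1/\sqrt{J})$ using Assumption~\ref{ass:con}$(i)$,$(ii)$ and $J/T \to \kappa^2$.
\end{itemize}
The cross term $\frac{1}{JT}{\rm Tr}(e' P_{(\lambda,\lambda^0)} e)$ is bounded by $\frac{2R}{JT}\|e\|^2 = {\cal O}_p(1/J)$. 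Comparing with $Q_{JT}(\alpha^0,\beta^0,0,\lambda^0,f^0) = \frac{1}{JT}{\rm Tr}(ee')$ gives
\[
b\|\tilde\beta_{\alpha^0}-\beta^0\|^2 + b\|\tilde\gamma_{\alpha^0}\|^2 \leq {\cal O}_p(J^{-1/2})\big(\|\tilde\beta_{\alpha^0}-\beta^0\| + \|\tilde\gamma_{\alpha^0}\|\big) + {\cal O}_p(J^{-1}),
\]
so $\tilde\gamma_{\alpha^0} = {\cal O}_p(J^{-1/2})$. By optimality of $\widehat\alpha$ in step 2 and Assumption~\ref{ass:con}$(vi)$, this yields $\tilde\gamma_{\widehat\alpha} = {\cal O}_p(J^{-1/2})$.

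\emph{Step 2: the two bounds.} We redo the lower bound \eqref{con_proof_lower_bnd} and the matching upper bound on $Q_{JT}(\widehat\alpha,\tilde\beta,\tilde\gamma,\tilde\lambda,\tilde f)$, with one change. In the lower bound we keep the term $-\tilde\gamma\cdot Z$ explicitly rather than absorbing it. Its contributions are then:
\begin{itemize}
\item a cross term with $\Delta\xi$ of order ${\cal O}_p(\|\tilde\gamma\|(a+c)) = {\cal O}_p(J^{-1/2}(a+c))$, using Assumption~\ref{ass:con}$(i)$;
\item a quadratic term of order ${\cal O}_p(J^{-1})$;
\item a term against $e$ of order ${\cal O}_p(\|\tilde\gamma\|J^{-1/2})$.
\end{itemize}
The terms linear in $e$ are handled as in Step 1:
\begin{itemize}
\item $\frac{1}{JT}{\rm Tr}(e'(\tilde\beta-\beta^0)\cdot X) = {\cal O}_p(c/\sqrt{JT})$;
\item $\frac{1}{JT}{\rm Tr}(e'P\,\cdot)$ terms are ${\cal O}_p(J^{-1/2}(a+c))$, using the Lipschitz bound on $\delta$;
\item $\frac{1}{JT}{\rm Tr}(e'P e) = {\cal O}_p(1/J)$.
\end{itemize}
Two terms need particular care. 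The first is $\frac{2}{JT}{\rm Tr}[(\delta(\widehat\alpha)-\delta(\alpha^0))'e]$, which is not known to be small by the assumptions. This term appears identically in both the upper and lower bounds and cancels exactly, as in the consistency proof. The second is the term $\frac{1}{JT}{\rm Tr}(ee')$, which also cancels. The upper bound is exact algebra, apart from the $e$-projection onto $(x,z)$, which contributes ${\cal O}_p(1/(JT))$ and cross terms of order ${\cal O}_p((a+c)/\sqrt{JT})$.

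\emph{Step 3: conclusion.} Combining the two bounds and applying Assumption~\ref{ass:con}$(v)$ gives
\[
b a^2 + b c^2 \leq {\cal O}_p(J^{-1/2})(a+c) + {\cal O}_p(J^{-1}).
\]
Hence $a + c = {\cal O}_p(J^{-1/2})$, that is, $\sqrt{J}(\widehat\alpha - \alpha^0) = {\cal O}_p(1)$ (and likewise for $\tilde\beta$).

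The main obstacle is the bookkeeping in Step 2. Every remainder term must be either of order $J^{-1/2}(a+c)$ or of order $J^{-1}$, with nothing of size $o_p(1)$ left unquantified. The delicate pieces are:
\begin{itemize}
\item the $e$-terms projected by $P_{(\tilde\lambda,\lambda^0)}$, where $\tilde\lambda$ is random. These are handled uniformly via the rank-$2R$ bound $|{\rm Tr}(A'PB)| \leq 2R\|A\|\,\|B\|$ together with $\|e\| = {\cal O}_p(\sqrt{J})$;
\item making sure the $\tilde\gamma\cdot Z$ terms are controlled by Step 1 rather than by consistency alone.
\end{itemize}
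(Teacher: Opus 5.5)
Your proposal is correct and follows the same route as the paper. Both arguments obtain $\tilde\gamma_{\widehat\alpha}={\cal O}_p(J^{-1/2})$ from the rate at $\alpha^0$ via the step-2 optimality inequality \eqref{gammaBound}. Both then redo the lower/upper-bound comparison of the consistency proof with every remainder quantified as ${\cal O}_p(J^{-1/2})\,(\|\widehat\alpha-\alpha^0\|+\|\tilde\beta-\beta^0\|)+{\cal O}_p(J^{-1})$, with the $\frac{2}{JT}{\rm Tr}[(\delta(\widehat\alpha)-\delta(\alpha^0))'e]$ terms cancelling, and conclude with Assumption~\ref{ass:con}$(v)$.

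The only difference is that you derive $\tilde\gamma_{\alpha^0}={\cal O}_p(J^{-1/2})$ yourself, by sharpening Part~1 with the added trace hypothesis and the rank-$2R$ bound; the paper simply cites Moon and Weidner for it. Your bookkeeping also shows that the paper's displayed remainder ${\cal O}_p(\sqrt{J}\,\|\widehat\alpha-\alpha^0\|+\sqrt{J}\,\|\tilde\beta-\beta^0\|)$ should have $J^{-1/2}$ in place of $\sqrt{J}$.
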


\begin{proof}[\bf Proof]
   The proof is exactly analogous to the consistency proof.
   We know from Moon and Weidner \cite*{MoonWeidner2015,MoonWeidner2015b}
   that  $\sqrt{J} \tilde \gamma_{\alpha^0} = {\cal O}_p(1)$.
   Applying the inequality \eqref{gammaBound}
   one thus finds  $\sqrt{J} \tilde \gamma_{\widehat \alpha} = {\cal O}_p(1)$.
   With the additional assumptions in the lemma one can strengthen
   the result in \eqref{con_proof_lower_bnd} as follows
      \begin{align}
          & Q_{JT}\left( \widehat \alpha ,\tilde \beta ,
\tilde \gamma ,\tilde \lambda ,
\tilde f \right)
   \nonumber \\
                &\quad \geq
        \ft 1 {JT} \left[ \Delta \xi_{\widehat \alpha,\tilde \beta}'
     \Delta \xi_{\widehat \alpha,\tilde \beta} \right]
          - \ft 1 {JT}  \left[ \Delta \xi_{\widehat \alpha,\tilde \beta}'
     \left( \mathbbm{1}_T \otimes P_{(\tilde \lambda,\lambda^0)}\right)
     \Delta \xi_{\widehat \alpha,\tilde \beta} \right]
       \nonumber \\
      & \qquad \quad
     +  \ft 2 {JT}  {\rm Tr} \Big[
           \left( \delta(\widehat \alpha) - \delta(\alpha^0) + \ft 1 2 e \right)' e \Big]
           + {\cal O}_p \left(\sqrt{J} \| \widehat \alpha - \alpha^0\| +
           \sqrt{J} \|\tilde \beta - \beta^0\|  \right)  + {\cal O}_p(1/J).
      \end{align}
      Using this stronger result
      and following the steps in the consistency proof
      then yields
       $\sqrt{J}(\widehat \alpha - \alpha)={\cal O}_p(1)$.
\end{proof}

\begin{proof}[\bf Proof of Theorem \ref{th:asympt_dist}]

Assumption \ref{ass:A4} guarantees $(JT)^{-1/2} {\rm Tr}(e X_k')={\cal O}_p(1)$,
   and $(JT)^{-1/2} {\rm Tr}(e Z_m')={\cal O}_p(1)$, so that we can apply
 Lemma \ref{lemma:sqrtJTalpha} to conclude
$\sqrt{J}(\widehat \alpha - \alpha)={\cal O}_p(1)$.

The first step in the definition of the LS-MD estimator is equivalent to the
linear regression model with interactive fixed effects, but
with an error matrix that has an additional term
$\Delta \delta(\alpha) \equiv \delta(\alpha)-\delta(\alpha^0)$, we write
$\Psi(\alpha) \equiv e+\Delta \delta(\alpha)$ for this effective error term.
Using $\widehat \alpha - \alpha^0 = o_p(1)$ and Assumption
\ref{ass:con}$(i)$
we have $\|\Psi(\widehat \alpha)\|= o_p(\sqrt{JT})$, so that the results in
Moon and Weidner \cite*{MoonWeidner2015,MoonWeidner2015b} guarantee
$\tilde \beta_{\widehat \alpha}-\beta^0 = o_p(1)$
and $\| \tilde \gamma_{\widehat \alpha} \| = o_p(1)$, which we already used
in the consistency proof. Using $\sqrt{J}(\widehat \alpha - \alpha)={\cal O}_p(1)$
and Assumption \ref{ass:A4}$(i)$ we find
$\|\Psi(\widehat \alpha)\|= {\cal O}_p(\sqrt{J})$, which allows us to truncate the
asymptotic likelihood expansion derived in Moon and Weidner \cite*{MoonWeidner2015,MoonWeidner2015b}
at an appropriate order. Namely, applying their results we have
\begin{align}
   \sqrt{JT} \left( \begin{array}{c} \tilde \beta_{\alpha} - \beta^0
                   \\[2mm] \tilde \gamma_{\alpha}  \end{array} \right)
     \, &= \, V_{JT}^{-1} \,
        \left( \begin{array}{c}
         \left[ C^{(1)}\left(X_k,\Psi(\alpha) \right)
         +C^{(2)}\left(X_k,\Psi(\alpha) \right) \right]_{k=1,\ldots,K}
                   \\[2mm]
         \left[ C^{(1)}\left(Z_m,\Psi(\alpha) \right)
         +C^{(2)}\left(Z_m,\Psi(\alpha) \right) \right]_{m=1,\ldots,M}
         \end{array} \right)
      + r^{\rm LS}(\alpha),
    \label{betagammaAsym}
\end{align}
where
\begin{align}
  V_{JT} &= \frac 1 {JT}
        \left(  \begin{array}{cc}
         \left[ {\rm Tr}(M_{f^0}  X^{\prime}_{k_1}  M_{\lambda^0}  X_{k_2})
         \right]_{k_1,k_2=1,\ldots,K}
         &
         \left[ {\rm Tr}(M_{f^0}  X^{\prime}_{k}  M_{\lambda^0}  Z_{m})
         \right]_{k=1,\ldots,K;m=1,\ldots,M}
         \\
         \left[ {\rm Tr}(M_{f^0}  Z^{\prime}_{m}  M_{\lambda^0}  X_{k})
         \right]_{m=1,\ldots,M;k=1,\ldots,K}
         &
         \left[ {\rm Tr}(M_{f^0}  Z^{\prime}_{m_1}  M_{\lambda^0}  Z_{m_2})
         \right]_{m_1,m_2=1,\ldots,M}
         \end{array} \right)
     \nonumber \\
           &= \frac 1 {JT}  \left(x^{\lambda f} , z^{\lambda f} \right)'
                            \left(x^{\lambda f} , z^{\lambda f} \right) \; ,
\end{align}
and for ${\cal X}$ either $X_k$ or $Z_m$ and $\Psi=\Psi(\alpha)$ we have
\begin{align}
  C^{(1)}\left({\cal X},\, \Psi \right) &= \frac 1 {\sqrt{JT}} \,
  {\rm Tr}\left[M_{f^0}
         \,\Psi^{\prime}\, M_{\lambda^0} \, {\cal X} \right] \; ,
   \nonumber  \\
  C^{(2)}\left({\cal X},\, \Psi \right) &= - \, \frac 1 {\sqrt{JT}} \, \bigg[
       {\rm Tr}\left(\Psi M_{f^0} \, \Psi' \, M_{\lambda^0} \, {\cal X} \,
              f^0 \, (f^{0\prime}f^0)^{-1} \, (\lambda^{0\prime}\lambda^0)^{-1} \, \lambda^{0\prime} \right)
    \nonumber \\ & \qquad \qquad \quad
       +{\rm Tr}\left(\Psi^{\prime}M_{\lambda^0} \, \Psi \, M_{f^0} \, {\cal X}^{\prime} \,
              \lambda^0 \, (\lambda^{0\prime}\lambda^0)^{-1} \, (f^{0\prime}f^0)^{-1} \, f^{0\prime} \right)
    \nonumber \\ & \qquad \qquad \quad
       +{\rm Tr}\left(\Psi^{\prime}M_{\lambda^0} \, {\cal X} \, M_{f^0} \, \Psi^{\prime}
                \, \lambda^0 \, (\lambda^{0\prime}\lambda^0)^{-1} \, (f^{0\prime}f^0)^{-1} \, f^{0\prime} \right)
                        \bigg] ,
\end{align}
and finally for the remainder we have
\begin{align}
   r^{\rm LS}(\alpha) &= {\cal O}_p \left( (JT)^{-3/2} \|\Psi(\alpha)\|^3 \|X_k\| \right)
                          +{\cal O}_p \left( (JT)^{-3/2} \|\Psi(\alpha)\|^3 \|Z_m\| \right)
           \nonumber \\ & \quad
 + {\cal O}_p \left( (JT)^{-1} \|\Psi(\alpha)\| \|X_k\|^2 \| \|\tilde \beta_\alpha-\beta^0\| \right)
 +{\cal O}_p \left( (JT)^{-1} \|\Psi(\alpha)\| \|Z_m\|^2 \| \tilde \gamma_\alpha \| \right)
 \; ,
\end{align}
which holds uniformly over $\alpha$. The first two terms in $r^{\rm LS}(\alpha)$
stem from the bound on higher order terms in the score function
($C^{(3)}$, $C^{(4)}$, etc.), where $\Psi(\alpha)$ appears three times or more
in the expansion, while the last two terms in $r^{\rm LS}(\alpha)$ reflect the bound
on higher order terms in the Hessian expansion, and beyond.
Note that Assumption \ref{ass:con}$(iv)$ already guarantees that $V_{JT} > b > 0$, wpa1.
Applying $\|X_k\|={\cal O}_p(\sqrt{JT})$, $\|Z_m\|={\cal O}_p(\sqrt{JT})$,
and
$\|\Psi(\alpha)\|={\cal O}_p(\sqrt{J})$ within $\sqrt{J} \|\alpha-\alpha^0\| < c$,
we find for all $c>0$
\begin{align}
   \sup_{\{\alpha: \, \sqrt{J} \|\alpha-\alpha^0\| < c \}}
       \frac{ \left\| r^{\rm LS}(\alpha) \right\| }
       {1 + \sqrt{JT}  \|\tilde \beta_\alpha-\beta^0\|
          + \sqrt{JT} \| \tilde \gamma_\alpha \|}  &= o_p(1) \; .
\end{align}
The inverse of the partitioned matrix $V_{JT}$ is given by
\begin{align}
   V_{JT}^{-1} &= JT \left( \begin{array}{c@{\hspace{-6mm}}c}
           \left(x^{\lambda f\prime} M_{z^{\lambda f}} x^{\lambda f} \right)^{-1}
           &
           - \left(x^{\lambda f\prime} M_{z^{\lambda f}} x^{\lambda f} \right)^{-1}
 \left(x^{\lambda f\prime} z^{\lambda f} \right) \left(z^{\lambda f\prime} z^{\lambda f} \right)^{-1}
           \\[2mm]
           - \left(z^{\lambda f\prime} M_{x^{\lambda f}} z^{\lambda f} \right)^{-1}
 \left(z^{\lambda f\prime} x^{\lambda f} \right) \left(x^{\lambda f\prime} x^{\lambda f} \right)^{-1}
           &
           \left(z^{\lambda f\prime} M_{x^{\lambda f}} z^{\lambda f} \right)^{-1}
            \end{array} \right)  .
\end{align}
Using $\sqrt{J}(\widehat \alpha - \alpha)={\cal O}_p(1)$ and
Assumption \ref{ass:A4}$(i)$ we find
\begin{align}
   \left( \begin{array}{c}
         \left[ C^{(1)}\left(X_k,\Psi(\widehat \alpha) \right)   \right]_{k=1,\ldots,K}
                   \\[2mm]
         \left[ C^{(1)}\left(Z_m,\Psi(\widehat \alpha) \right) \right]_{m=1,\ldots,M}
         \end{array} \right)
     &=  \frac 1 {\sqrt{JT}}   \left(x^{\lambda f} , z^{\lambda f} \right)'  \varepsilon
        \nonumber \\ & \qquad
        - \left[ \frac 1 {JT}   \left(x^{\lambda f} , z^{\lambda f} \right)'   g \right]
          \sqrt{JT}  ( \widehat \alpha - \alpha^0 )
         +  o_p(\sqrt{JT}  \| \widehat \alpha - \alpha^0 \|)  ,
 \nonumber \\
   \left( \begin{array}{c}
         \left[ C^{(2)}\left(X_k,\Psi(\widehat \alpha) \right)   \right]_{k=1,\ldots,K}
                   \\[2mm]
         \left[ C^{(2)}\left(Z_m,\Psi(\widehat \alpha) \right) \right]_{m=1,\ldots,M}
         \end{array} \right)
    &= \left( \begin{array}{c} c^{(2)}_x \\[2mm] c^{(2)}_z \end{array} \right)
          + {\cal O}_p\left( \sqrt{J} \| \widehat \alpha- \alpha^0 \| \right) ,
\end{align}
where
\begin{align}
    c^{(2)}_x &= \left[ C^{(2)}\left(X_k,e \right)   \right]_{k=1,\ldots,K} \; ,
    &
    c^{(2)}_z &= \left[ C^{(2)}\left(Z_m,e \right) \right]_{m=1,\ldots,M} \; .
\end{align}
From this one can conclude that
$\sqrt{JT} \| \tilde \beta_{\widehat \alpha} - \beta^0 \| = {\cal O}_p(1)
  + {\cal O}_p(\sqrt{JT}  \| \widehat \alpha - \alpha^0 \|)$
and $\sqrt{JT} \| \tilde \gamma_{\widehat \alpha} \|
   = {\cal O}_p(1) + {\cal O}_p(\sqrt{JT}  \| \widehat \alpha - \alpha^0 \|)$,
so that we find
$r^{\rm LS}(\widehat \alpha) = o_p(1) + o_p(\sqrt{JT}  \| \widehat \alpha - \alpha^0 \|)$.
Combining the above results we obtain
\begin{align}
   \sqrt{JT} \, \tilde \gamma_{\widehat \alpha}
     &=\left( \frac 1 {JT} z^{\lambda f\prime} M_{x^{\lambda f}} z^{\lambda f} \right)^{-1}
         \Bigg[  \frac 1 {\sqrt{JT}} \,z^{\lambda f \prime} \, M_{x^{\lambda f}} \varepsilon
                 + c^{(2)}_z
  - \left(z^{\lambda f\prime} x^{\lambda f} \right) \left(x^{\lambda f\prime} x^{\lambda f} \right)^{-1} c^{(2)}_x
    \nonumber \\
    & \qquad \qquad
   - \left( \frac 1 {JT} \, z^{\lambda f \prime} \, M_{x^{\lambda f}}   g \right)
         \; \sqrt{JT} \, ( \widehat \alpha - \alpha^0 ) \Bigg]
    +  o_p(1)
    +  o_p(\sqrt{JT}  \| \widehat \alpha - \alpha^0 \|) .
\end{align}
The above results holds not only for $\widehat \alpha$, but uniformly
for all $\alpha$ in any $\sqrt{J}$ shrinking neighborhood of $\alpha^0$
(we made this explicit in the bound on $r^{\rm LS}(\alpha)$
above; one could define similar remainder terms with
corresponding bounds in all intermediate steps),
i.e. we have
\begin{align}
   \sqrt{JT} \, \tilde \gamma_{\alpha}
     &=\left( \frac 1 {JT} z^{\lambda f\prime} M_{x^{\lambda f}} z^{\lambda f} \right)^{-1}
         \Bigg[  \frac 1 {\sqrt{JT}} \,z^{\lambda f \prime} \, M_{x^{\lambda f}} \varepsilon
                 + c^{(2)}_z
  - \left(z^{\lambda f\prime} x^{\lambda f} \right) \left(x^{\lambda f\prime} x^{\lambda f} \right)^{-1} c^{(2)}_x
    \nonumber \\
    & \qquad \qquad \qquad\qquad\qquad\qquad
   - \left( \frac 1 {JT} \, z^{\lambda f \prime} \, M_{x^{\lambda f}}   g \right)
         \; \sqrt{JT} \, (  \alpha - \alpha^0 ) \Bigg]
    + r^\gamma (\alpha) ,
\end{align}
where for all $c>0$
\begin{align}
    \sup_{\{\alpha: \, \sqrt{J} \|\alpha-\alpha^0\| < c \}}
       \frac{ \| r^{\gamma}(\alpha) \| }
       { 1 + \sqrt{JT} \| \alpha - \alpha^0 \| }
         &= o_p(1) \; .
\end{align}
Therefore, the objective function for $\widehat \alpha$ reads
\begin{align}
   JT \, \tilde \gamma_{\alpha}' \, \smallW_{JT} \, \tilde \gamma_{\alpha}
     &= A_0 - 2 \, A'_1 \, \left[\sqrt{JT} \, \left( \alpha - \alpha^0 \right) \right] +
     \left[\sqrt{JT} \, \left( \alpha - \alpha^0 \right) \right]' \, A_2
      \, \left[\sqrt{JT} \, \left( \alpha - \alpha^0 \right) \right]
            + r^{\rm obj}(\alpha) \; ,
\end{align}
where $A_0$ is a scalar, $A_1$ is an $L\times 1$ vector, and $A_2$ is an $L\times L$ matrix
defined by
\begin{align}
   A_0 &=  \left[  \frac 1 {\sqrt{JT}} \,z^{\lambda f \prime} \, M_{x^{\lambda f}} \varepsilon
                 + c^{(2)}_z
  - \left(z^{\lambda f\prime} x^{\lambda f} \right) \left(x^{\lambda f\prime} x^{\lambda f} \right)^{-1} c^{(2)}_x \right]'
    \left( \frac 1 {JT} z^{\lambda f\prime} M_{x^{\lambda f}} z^{\lambda f} \right)^{-1}
       \smallW_{JT}
      \nonumber \\ & \qquad \qquad
          \left( \frac 1 {JT} z^{\lambda f\prime} M_{x^{\lambda f}} z^{\lambda f} \right)^{-1}
         \left[  \frac 1 {\sqrt{JT}} \,z^{\lambda f \prime} \, M_{x^{\lambda f}} \varepsilon
                 + c^{(2)}_z
  - \left(z^{\lambda f\prime} x^{\lambda f} \right) \left(x^{\lambda f\prime} x^{\lambda f} \right)^{-1} c^{(2)}_x \right] \; ,
 \nonumber \\
    A_1 &=  \left( \frac 1 {JT} \, g' \, M_{x^{\lambda f}}  z^{\lambda f}  \right)
      \left( \frac 1 {JT} z^{\lambda f\prime} M_{x^{\lambda f}} z^{\lambda f} \right)^{-1}
     \smallW_{JT}
           \left( \frac 1 {JT} z^{\lambda f\prime} M_{x^{\lambda f}} z^{\lambda f} \right)^{-1}
  \nonumber \\ & \qquad \qquad \qquad \qquad \qquad \qquad
       \left[  \frac 1 {\sqrt{JT}} \,z^{\lambda f \prime} \, M_{x^{\lambda f}} \varepsilon
                 + c^{(2)}_z
  - \left(z^{\lambda f\prime} x^{\lambda f} \right) \left(x^{\lambda f\prime} x^{\lambda f} \right)^{-1} c^{(2)}_x \right]
      \; ,
 \nonumber \\
    A_2 &=  \left( \frac 1 {JT} \, g' \, M_{x^{\lambda f}}  z^{\lambda f}  \right)
      \left( \frac 1 {JT} z^{\lambda f\prime} M_{x^{\lambda f}} z^{\lambda f} \right)^{-1}
     \smallW_{JT}
           \left( \frac 1 {JT} z^{\lambda f\prime} M_{x^{\lambda f}} z^{\lambda f} \right)^{-1}  \left( \frac 1 {JT} \, z^{\lambda f \prime} \, M_{x^{\lambda f}}   g \right) \; ,
\end{align}
and the remainder term in the objective function satisfies
\begin{align}
    \sup_{\{\alpha: \, \sqrt{J} \|\alpha-\alpha^0\| < c \}}
       \frac{ \| r^{\rm obj}(\alpha) \| }
       { \left( 1 + \sqrt{JT} \| \alpha - \alpha^0 \| \right)^2}
         &= o_p(1) \; .
\end{align}
Under our assumptions one can show that
$\|A_1\|={\cal O}_p(1)$ and $\plim_{J,T \rightarrow \infty }A_2>0$.
Combining the expansion of the objective function with the results of $\sqrt{J}$-consistency of $\widehat \alpha$ we can thus conclude that
\begin{align}
   \sqrt{JT} \, \left( \widehat \alpha - \alpha^0 \right)
   &=  A_2^{-1} \, A_1 + o_p(1) \; .
   \label{lim_alpha}
\end{align}
Analogous to equation \eqref{betagammaAsym} for the first step, we can apply
the results in Moon and Weidner \cite*{MoonWeidner2015,MoonWeidner2015b} to the third step of
the LS-MD estimator to obtain
\begin{align}
   \sqrt{JT} (\widehat \beta-\beta^0) &= \left( \frac 1 {JT}  x^{\lambda f \prime}
                            x^{\lambda f} \right)^{-1}
        \left[  C^{(1)}\left(X_k,\Psi(\widehat \alpha) \right)
         +C^{(2)}\left(X_k,\Psi(\widehat \alpha) \right) \right]_{k=1,\ldots,K}
+ o_p(1)
   \nonumber \\
       &=  \left( \frac 1 {JT}  x^{\lambda f \prime} x^{\lambda f} \right)^{-1}
           \left[  \frac 1 {\sqrt{JT}} x^{\lambda f \prime} \, \varepsilon
                  -\left( \frac 1 {JT} x^{\lambda f \prime} \, g \right)
                     \sqrt{JT} \, (\widehat \alpha - \alpha^0)
                  +c^{(2)}_x \right] + o_p(1) \; .
   \label{lim_beta}
\end{align}
Here, the remainder term
$o_p(\sqrt{JT}  \| \widehat \alpha - \alpha^0 \|)$ is already absorbed into the $o_p(1)$
term, since \eqref{lim_alpha} already shows $\sqrt{JT}$-consistency of
$\widehat \alpha$.
Let $G_{JT}$ and  $\bigW_{JT}$
be the expressions in equation \eqref{DefGOmega} and \eqref{DefBigW}
before taking the probability limits, i.e.
$G = \plim_{J,T \rightarrow \infty} G_{JT}$
and $\bigW = \plim_{J,T \rightarrow \infty} \bigW_{JT}$.
One can show that
\begin{align}
   G_{JT} \smallW_{JT} G_{JT}'
     &=
   \frac 1 {JT} \, \left(g, x\right)' \, P_{x^{\lambda f}} \,  \left(g, x\right)
           + \left( \begin{array}{cc} A_2 & 0_{L\times K} \\ 0_{K\times L}
             & 0_{K\times K} \end{array} \right) .
\end{align}
Using this, one can rewrite equation
\eqref{lim_alpha} and \eqref{lim_beta} as follows
\begin{align}
   G_{JT} & \bigW_{JT} G_{JT}'
   \, \sqrt{JT} \left( \begin{array}{c} \widehat \alpha - \alpha^0 \\[2mm]
                                        \widehat \beta  - \beta^0 \end{array} \right)
  \nonumber \\
      &= \frac 1 {\sqrt{JT}}  \left(g, x\right)'
                 P_{x^{\lambda f}}   \varepsilon
             + \left( \begin{array}{cc} A_1
                  +   \left( \frac 1 {JT}  g'  x^{\lambda f} \right)
                    \left( \frac 1 {JT}  x^{\lambda f \prime} x^{\lambda f} \right)^{-1}
                       c^{(2)}_x
                \\[2mm] c^{(2)}_x  \end{array} \right)
         + o_p(1) \; ,
\end{align}
and therefore
\begin{align}
   \sqrt{JT} & \left( \begin{array}{c} \widehat \alpha - \alpha^0 \\[2mm]
                                        \widehat \beta  - \beta^0 \end{array} \right)
  \nonumber \\ &=   \left( G_{JT} \bigW_{JT} G_{JT}' \right)^{-1}
     G_{JT} \bigW_{JT}
           \left[ \frac 1 {\sqrt{JT}}
          \left(x^{\lambda f}, z^{\lambda f}\right)' \varepsilon \right]
      \nonumber \\
        & \quad  +  \left( G_{JT} \bigW_{JT} G_{JT}' \right)^{-1}  \left( \begin{array}{cc}
           A_3  c^{(2)}_z
       + \left[  \left( g'  x^{\lambda f} \right)
         - A_3  \left(z^{\lambda f\prime} x^{\lambda f} \right)   \right]
   \left(x^{\lambda f\prime} x^{\lambda f} \right)^{-1} c^{(2)}_x
                \\[2mm] c^{(2)}_x  \end{array} \right)
           + o_p(1)
      \nonumber \\
        &= \left( G \bigW G' \right)^{-1}
     G \bigW
           \left[ \frac 1 {\sqrt{JT}}
          \left(x^{\lambda f}, z^{\lambda f}\right)' \varepsilon
          + { c^{(2)}_x \choose c^{(2)}_z} \right]
        + o_p(1),
   \label{LimitFormula}
\end{align}
where $A_3=\left( \frac 1 {JT}  g'  M_{x^{\lambda f}}  z^{\lambda f}  \right)
      \left( \frac 1 {JT} z^{\lambda f\prime} M_{x^{\lambda f}} z^{\lambda f} \right)^{-1}
     \smallW_{JT}
           \left( \frac 1 {JT} z^{\lambda f\prime} M_{x^{\lambda f}} z^{\lambda f} \right)^{-1}$.
Having equation \eqref{LimitFormula}, all that is left to do is to derive the
asymptotic distribution of
$c^{(2)}_x$, $c^{(2)}_z$ and
$\frac 1 {\sqrt{JT}}
          \left(x^{\lambda f}, z^{\lambda f}\right)' \varepsilon$. This was
done in Moon and Weidner \cite*{MoonWeidner2015,MoonWeidner2015b} under the same assumptions that
we impose here. They show that
\begin{align}
   c^{(2)}_x &= - \kappa^{-1} \, b^{(x,1)} - \kappa \, b^{(x,2)} + o_p(1) \; , &
   c^{(2)}_z &= - \kappa^{-1} \, b^{(z,1)} - \kappa \, b^{(z,2)} + o_p(1) \; ,
\end{align}
and
\begin{align}
   \frac 1 {\sqrt{JT}}
          \left(x^{\lambda f}, z^{\lambda f}\right)' \varepsilon
      \; \; & \operatorname*{\longrightarrow}_d  \; \;
         {\cal N}\left[ - \kappa
           { b^{(x,0)} \choose b^{(z,0)}} , \, \Omega \right] \; .
\end{align}
Plugging this into \eqref{LimitFormula} gives the result on the limiting
distribution of $\widehat \alpha$ and $\widehat \beta$ which is stated in the theorem.
\end{proof}

\subsection{Consistency of Bias and Variance Estimators}

\begin{proof}[\bf Proof of Theorem \ref{th:biascorrection}]
   From Moon and Weidner \cite*{MoonWeidner2015,MoonWeidner2015b} we already
   know that under our assumptions we have $\widehat \Omega=\Omega+o_p(1)$,
   $\widehat b^{(x,i)} = b^{(x,i)} + o_p(1)$
   and $\widehat b^{(z,i)} = b^{(z,i)} + o_p(1)$, for $i=0,1,2$. They also show that
   $\| M_{\widehat \lambda} - M_{\lambda^0} \| = {\cal O}_p(J^{-1/2})$
   and  $\| M_{\widehat f} - M_{f^0} \| = {\cal O}_p(J^{-1/2})$, from which
   we can conclude that $\bigWest = \bigW+o_p(1)$.
   These results on $M_{\widehat \lambda}$ and $M_{\widehat f}$ together
   with $\sqrt{JT}$-consistency of $\widehat \alpha$
   and Assumption~\ref{ass:A5} are also sufficient to conclude
   $\widehat G=G+o_p(1)$. It follows that $\widehat B_i=B_i + o_p(1)$, for $i=0,1,2$.
\end{proof}

\section{Additional details on numerical verification of instrument
  relevance condition}

Here we present some additional details related to the numerical
verification of the instrument relevance condition, which was discussed in
Section 6.1 of the main text.

For the numerator of $\rho_F(\alpha,\beta)$ one finds
\begin{align*}
    &    \max_{\lambda \in \mathbbm{R}^{J \times R}}   \left[
          \Delta \xi_{\alpha,\beta}'
     \left( \mathbbm{1}_T \otimes P_{(\lambda,\lambda^0)}\right)
       \Delta \xi_{\alpha,\beta} \right]
    \nonumber \\
     &=  \Delta \xi_{\alpha,\beta}'
     \left( \mathbbm{1}_T \otimes P_{ \lambda^0 }\right)
       \Delta \xi_{\alpha,\beta}
       +  \max_{\lambda \in \mathbbm{R}^{J \times R}}  \left[
          \Delta \xi_{\alpha,\beta}'
     \left( \mathbbm{1}_T \otimes M_{\lambda^0} P_{\lambda} M_{\lambda^0}\right)
       \Delta \xi_{\alpha,\beta} \right]
  \nonumber \\
     &=  \Delta \xi_{\alpha,\beta}'
     \left( \mathbbm{1}_T \otimes P_{ \lambda^0 }\right)
       \Delta \xi_{\alpha,\beta}
       + \max_{\lambda \in \mathbbm{R}^{J \times R}}
       {\rm Tr}          \left[
          (\delta(\alpha)-\delta(\alpha^0)-\beta \cdot X)'
          M_{\lambda^0} P_\lambda M_{\lambda^0}
          (\delta(\alpha)-\delta(\alpha^0)-\beta \cdot X)   \right]       \nonumber \\
     &=  \Delta \xi_{\alpha,\beta}'
     \left( \mathbbm{1}_T \otimes P_{ \lambda^0 }\right)
       \Delta \xi_{\alpha,\beta}
       +  \sum_{r=1}^{R} \mu_r
          \left[
          (\delta(\alpha)-\delta(\alpha^0)-\beta \cdot X)' M_{\lambda^0}
          (\delta(\alpha)-\delta(\alpha^0)-\beta \cdot X)   \right] .
\end{align*}
In the first step we used
$P_{(\lambda,\lambda^0)} = P_{ \lambda^0 }
 +  M_{\lambda^0} P_{M_{\lambda^0} \lambda} M_{\lambda^0}$.
The optimal value of $\lambda$ in the second line
always satisfies $\lambda = M_{\lambda^0} \lambda$,
so we could write $P_{\lambda}$
instead of $P_{M_{\lambda^0} \lambda}$. In the second
step we plugged in the definition of $\Delta \xi_{\alpha,\beta}$.
In the final step we used the characterization of
the eigenvalues in terms
of a maximization problem,
and the fact that the non-zero eigenvalues of
the matrices $(\delta(\alpha)-\delta(\alpha^0)-\beta \cdot X)' M_{\lambda^0}
          (\delta(\alpha)-\delta(\alpha^0)-\beta \cdot X)$
and
$ M_{\lambda^0} (\delta(\alpha)-\delta(\alpha^0)-\beta \cdot X)
          (\delta(\alpha)-\delta(\alpha^0)-\beta \cdot X)' M_{\lambda^0}$
are identical.

Because of this, $\rho_{F}(\alpha,\beta)$ is equal to
\begin{align*}
      \rho_{\rm F}(\alpha,\beta)
       &= \frac{
           \Delta \xi_{\alpha,\beta}'
     \left( \mathbbm{1}_T \otimes P_{\lambda^0}\right)
       \Delta \xi_{\alpha,\beta}  }
     {   \Delta \xi_{\alpha,\beta}'  \Delta \xi_{\alpha,\beta}  }
   \nonumber \\
    & \qquad
    +  \frac{ \sum_{r=1}^{R} \mu_r
          \left[
          (\delta(\alpha)-\delta(\alpha^0)-\beta \cdot X)' M_{\lambda^0}
          (\delta(\alpha)-\delta(\alpha^0)-\beta \cdot X)   \right]
       }     {   \Delta \xi_{\alpha,\beta}'  \Delta \xi_{\alpha,\beta}  } .
\end{align*}
Thus, computation of $\rho_F$ only involves the numerical
calculation of the first $R$ eigenvalues $\mu_r$
of a $T \times T$ matrix, which can be done very quickly
even for relatively large values of $T$.

\end{appendix}


\begin{thebibliography}{}

\bibitem[\protect\astroncite{Ackerberg et~al.}{2007}]{ack_pakes}
Ackerberg, D., Benkard, L., Berry, S., and Pakes, A. (2007).
\newblock Econometric tools for analyzing market outcomes.
\newblock In Heckman, J. and Leamer, E., editors, {\em Handbook of
  Econometrics, Vol. 6A}. North-Holland.

\bibitem[\protect\astroncite{Ahn and Horenstein}{2013}]{AhnHorenstein2013}
Ahn, S.~C. and Horenstein, A.~R. (2013).
\newblock Eigenvalue ratio test for the number of factors.
\newblock {\em Econometrica}, 81(3):1203--1227.

\bibitem[\protect\astroncite{Ahn et~al.}{2001}]{AhnLeeSchmidt2001}
Ahn, S.~C., Lee, Y.~H., and Schmidt, P. (2001).
\newblock {GMM} estimation of linear panel data models with time-varying
  individual effects.
\newblock {\em Journal of Econometrics}, 101(2):219--255.

\bibitem[\protect\astroncite{Ahn et~al.}{2013}]{AhnLeeSchmidt2013}
Ahn, S.~C., Lee, Y.~H., and Schmidt, P. (2013).
\newblock Panel data models with multiple time-varying individual effects.
\newblock {\em Journal of Econometrics}, 174(1):1--14.

\bibitem[\protect\astroncite{Bai}{2009}]{Bai2009}
Bai, J. (2009).
\newblock {Panel data models with interactive fixed effects}.
\newblock {\em Econometrica}, 77(4):1229--1279.

\bibitem[\protect\astroncite{Bai and Ng}{2002}]{BaiNg2002}
Bai, J. and Ng, S. (2002).
\newblock Determining the number of factors in approximate factor models.
\newblock {\em Econometrica}, 70(1):191--221.

\bibitem[\protect\astroncite{Bai and Ng}{2006}]{BaiNg2006}
Bai, J. and Ng, S. (2006).
\newblock Confidence intervals for diffusion index forecasts and inference for
  factor-augmented regressions.
\newblock {\em Econometrica}, 74(4):1133--1150.

\bibitem[\protect\astroncite{Bajari et~al.}{2011}]{bajari_fox_kim_ryan}
Bajari, P., Fox, J., Kim, K., and Ryan, S. (2011).
\newblock A simple nonparametric estimator for the distribution of random
  coefficients.
\newblock {\em Quantitative Economics}, 2:381--418.

\bibitem[\protect\astroncite{Bajari et~al.}{2012}]{bajari_fox_kim_ryan2}
Bajari, P., Fox, J., Kim, K., and Ryan, S. (2012).
\newblock The random coefficients logit model is identified.
\newblock {\em Journal of Econometrics}, 166:204--212.

\bibitem[\protect\astroncite{Berry et~al.}{2013}]{BerryGandhiHaile2013}
Berry, S., Gandhi, A., and Haile, P. (2013).
\newblock {Connected Substitutes and Invertibility of Demand}.
\newblock {\em Econometrica}, 81:2087--2111.

\bibitem[\protect\astroncite{Berry et~al.}{1995}]{BerryLevinsohnPakes1995}
Berry, S., Levinsohn, J., and Pakes, A. (1995).
\newblock Automobile prices in market equilibrium.
\newblock {\em Econometrica}, 63(4):841--890.

\bibitem[\protect\astroncite{Berry et~al.}{2004}]{BerryLintonPakes2004}
Berry, S., Linton, O.~B., and Pakes, A. (2004).
\newblock Limit theorems for estimating the parameters of differentiated
  product demand systems.
\newblock {\em The Review of Economic Studies}, 71(3):613--654.

\bibitem[\protect\astroncite{Berry}{1994}]{Berry1994}
Berry, S.~T. (1994).
\newblock Estimating discrete-choice models of product differentiation.
\newblock {\em The RAND Journal of Economics}, 25(2):242--262.

\bibitem[\protect\astroncite{Berry and Haile}{2014}]{berry_haile_market}
Berry, S.~T. and Haile, P.~A. (2014).
\newblock Identification in differentiated products markets using market level
  data.
\newblock {\em Econometrica}, 82(5):1749--1797.

\bibitem[\protect\astroncite{Besanko and
  Doraszelski}{2004}]{BesankoDoraszelski2004}
Besanko, D. and Doraszelski, U. (2004).
\newblock Capacity dynamics and endogenous asymmetries in firm size.
\newblock {\em RAND Journal of Economics}, 35:23--49.

\bibitem[\protect\astroncite{Chernozhukov and
  Hansen}{2006}]{ChernozhukovHansen2006}
Chernozhukov, V. and Hansen, C. (2006).
\newblock {Instrumental quantile regression inference for structural and
  treatment effect models}.
\newblock {\em Journal of Econometrics}, 132(2):491--525.

\bibitem[\protect\astroncite{Chiappori and Komunjer}{2009}]{chiappori_komunjer}
Chiappori, P. and Komunjer, I. (2009).
\newblock On the nonparametric identification of multiple choice models.
\newblock Manuscript, Columbia University.

\bibitem[\protect\astroncite{Dube et~al.}{2012}]{DubeFoxSu2012}
Dube, J.~P., Fox, J., and Su, C. (2012).
\newblock {Improving the Numerical Performance of BLP Static and Dynamic
  Discrete Choice Random Coefficients Demand Estimation}.
\newblock {\em Econometrica}, 80:2231--2267.

\bibitem[\protect\astroncite{Esteban and Shum}{2007}]{esteban_shum1}
Esteban, S. and Shum, M. (2007).
\newblock {Durable Goods Oligopoly with Secondary Markets: the Case of
  Automobiles}.
\newblock {\em RAND Journal of Economics}, 38:332--354.

\bibitem[\protect\astroncite{Gandhi et~al.}{2010}]{GandhiKimPetrin2010}
Gandhi, A., Kim, K., and Petrin, A. (2010).
\newblock {Identification and Estimation in Discrete Choice Demand Models when
  Endogenous Variables Interact with the Error}.
\newblock Manuscript.

\bibitem[\protect\astroncite{Gandhi et~al.}{2013}]{GandhiLuShi2013}
Gandhi, A., Lu, Z., and Shi, X. (2013).
\newblock {Estimating Demand for Differentiated Products with Error in Market
  Shares}.
\newblock Working paper, University of Wisconsin.

\bibitem[\protect\astroncite{Hahn and Kuersteiner}{2002}]{Hahn:2002p717}
Hahn, J. and Kuersteiner, G. (2002).
\newblock Asymptotically unbiased inference for a dynamic panel model with
  fixed effects when both "n" and "{T}" are large.
\newblock {\em Econometrica}, 70(4):1639--1657.

\bibitem[\protect\astroncite{Hahn and Kuersteiner}{2004}]{Hahn:2004p878}
Hahn, J. and Kuersteiner, G. (2004).
\newblock Bias reduction for dynamic nonlinear panel models with fixed effects.
\newblock Manuscript.

\bibitem[\protect\astroncite{Hahn and Newey}{2004}]{Hahn:2004p882}
Hahn, J. and Newey, W. (2004).
\newblock Jackknife and analytical bias reduction for nonlinear panel models.
\newblock {\em Econometrica}, 72(4):1295--1319.

\bibitem[\protect\astroncite{Harding}{2007}]{Harding2007}
Harding, M. (2007).
\newblock Structural estimation of high-dimensional factor models.
\newblock Manuscript.

\bibitem[\protect\astroncite{Harding and Hausman}{2007}]{HardingHausman2007}
Harding, M. and Hausman, J. (2007).
\newblock {Using a Laplace Approximation to Estimate the Random Coefficients
  Logit Model By Nonlinear Least Squares}.
\newblock {\em International Economic Review}, 48:1311--1328.

\bibitem[\protect\astroncite{Hausman}{1997}]{hausmancereal}
Hausman, J. (1997).
\newblock Valuation of new goods under perfect and imperfect competition.
\newblock In {\em The Economics of New Goods}. University of Chicago Press.

\bibitem[\protect\astroncite{Holtz-Eakin
  et~al.}{1988}]{HoltzEakin-Newey-Rosen1988}
Holtz-Eakin, D., Newey, W., and Rosen, H.~S. (1988).
\newblock Estimating vector autoregressions with panel data.
\newblock {\em Econometrica}, 56(6):1371--95.

\bibitem[\protect\astroncite{Knittel and
  Metaxoglou}{2014}]{KnittelMetaxoglou2014}
Knittel, C. and Metaxoglou, K. (2014).
\newblock Estimation of random coefficient demand models: Two empiricists'
  perspective.
\newblock {\em Review of Economics and Statistics}, 96:34--59.

\bibitem[\protect\astroncite{Matzkin}{2013}]{Matzkin2013}
Matzkin, R. (2013).
\newblock Nonparametric identification in structural economic models.
\newblock In {\em Annual Review of Economics}, volume~5.

\bibitem[\protect\astroncite{Media Dynamics, Inc.}{1997}]{tv_dimensions}
Media Dynamics, Inc. (1997).
\newblock {\em {TV Dimensions}}.
\newblock {Media Dynamics, Inc.}
\newblock annual publication.

\bibitem[\protect\astroncite{Moon and Weidner}{2015a}]{MoonWeidner2015}
Moon, H. and Weidner, M. (2015a).
\newblock {Dynamic Linear Panel Regression Models with Interactive Fixed
  Effects}.
\newblock {\em Forthcoming in Econometric Theory}.

\bibitem[\protect\astroncite{Moon and Weidner}{2015b}]{MoonWeidner2015b}
Moon, H.~R. and Weidner, M. (2015b).
\newblock Linear regression for panel with unknown number of factors as
  interactive fixed effects.
\newblock {\em Econometrica}, 83(4):1543--1579.

\bibitem[\protect\astroncite{Nevo}{2001}]{nevo1}
Nevo, A. (2001).
\newblock Measuring market power in the ready-to-eat cereals industry.
\newblock {\em Econometrica}, 69:307--342.

\bibitem[\protect\astroncite{Neyman and Scott}{1948}]{NeymanScott1948}
Neyman, J. and Scott, E.~L. (1948).
\newblock Consistent estimates based on partially consistent observations.
\newblock {\em Econometrica}, 16(1):1--32.

\bibitem[\protect\astroncite{Onatski}{2010}]{Onatski2010}
Onatski, A. (2010).
\newblock Determining the number of factors from empirical distribution of
  eigenvalues.
\newblock {\em The Review of Economics and Statistics}, 92(4):1004--1016.

\bibitem[\protect\astroncite{Petrin}{2002}]{petrin}
Petrin, A. (2002).
\newblock Quantifying the benefits of new products: the case of the minivan.
\newblock {\em Journal of Political Economy}, 110:705--729.

\bibitem[\protect\astroncite{Villas-Boas and Winer}{1999}]{Villasboas1999}
Villas-Boas, J. and Winer, R. (1999).
\newblock Endogeneity in brand choice models.
\newblock {\em Management Science}, 45:1324--1338.

\end{thebibliography}
\end{document}